\theoremstyle{plain}   
\newtheorem{theorem}{Theorem}[section]
\newtheorem{thm}{Theorem}[section]
\newtheorem{lem}{Lemma}[section]
\newtheorem{prop}{Proposition}[section]
\newtheorem{defi}{Definition}[section]
\newtheorem{exmp}{Example}[section]
\theoremstyle{remark}
\newtheorem{rem}{Remark}[section]
\newcommand*{\fancyrefthmlabelprefix}{thm}
\newcommand*{\fancyreflemlabelprefix}{lem}
\newcommand*{\fancyrefcorlabelprefix}{cor}
\newcommand*{\fancyrefdefilabelprefix}{defi}
\newcommand*{\fancyrefalglabelprefix}{alg}
\newcommand*{\frefalgname}{algorithm}
\newcommand*{\Frefalgname}{Algorithm}
\newcommand*{\fancyrefapplabelprefix}{app}
\newcommand*{\frefappname}{appendix}
\newcommand*{\Frefappname}{Appendix}
\definecolor{Green}{HTML}{00AD69}  %
\def\beq{\begin{equation}}
\def\eeq{\end{equation}}
\def\bq{\begin{quote}}
\def\eq{\end{quote}}
\def\ben{\begin{enumerate}}
\def\een{\end{enumerate}}
\def\bit{\begin{itemize}}
\def\eit{\end{itemize}}
\def\sa{\operatorname{sa}}
\def\lb{\left(}
\def\rb{\right)}
\def\l|{\left|}
\def\r|{\right|}
\newcommand\C{\mathbbm{C}}
\newcommand\R{\mathbbm{R}}
\newcommand\N{\mathbbm{N}}
\newcommand\M{\mathcal{M}}
\newcommand\D{\mathcal{D}}
\newcommand\cB{\mathcal{B}}
\newcommand{\cL}{\mathcal{L}}
\newcommand{\ketbra}[1]{|#1\rangle\langle#1|}
\newcommand{\tr}[1]{\operatorname{tr}\left[#1\right]}
\newcommand{\one}{I}
\newcommand{\id}{\text{id}}
\newcommand{\scalar}[2]{\langle#1|#2\rangle}
\newcommand{\daniel}[1]{\textcolor{red}{[daniel] #1}}
\newcommand{\cO}{\mathcal{O}}
\newcommand{\tcO}{\tilde{\mathcal{O}}}
\newcommand{\cH}{\mathcal{H}}
\newcommand{\RR}{\mathbb{R}}
\newcommand{\eps}{\epsilon}
\newcommand{\cN}{\mathcal{N}}
\newcommand{\Tr}{\operatorname{tr}}
\begin{document}

\title{Learning quantum many-body systems from a few copies}

\author{Cambyse Rouz\'{e}}
 \email{rouzecambyse@gmail.com}
\affiliation{Department of Mathematics, Technische Universit\"at M\"unchen, 85748 Garching, Germany}

\author{Daniel Stilck Fran\c{c}a}
\email{daniel.stilck\_franca@ens-lyon.fr}
\affiliation{QMATH, Department of Mathematical Sciences, University of Copenhagen, Denmark}
\affiliation{Univ Lyon, ENS Lyon, UCBL, CNRS, Inria, LIP, F-69342, Lyon Cedex 07, France}

\normalsize

\date{11.03.2024}

\begin{abstract}
Estimating physical properties of quantum states from measurements is one of the most fundamental tasks in quantum science. In this work, we identify conditions on states under which it is possible to infer the expectation values of all quasi-local observables of a state from a number of copies that scale polylogarithmically with the system's size and polynomially on the locality of the target observables. We show that this constitutes a provable exponential improvement in the number of copies over state-of-the-art tomography protocols. We achieve our results by combining the maximum entropy method with tools from the emerging fields of classical shadows and quantum optimal transport. The latter allows us to fine-tune the error made in estimating the expectation value of an observable in terms of how local it is and how well we approximate the expectation value of a fixed set of few-body observables. 
We conjecture that our condition holds for all states exhibiting some form of decay of correlations and establish it for several subsets thereof. These include widely studied classes of states such as one-dimensional thermal and high-temperature Gibbs states of local commuting Hamiltonians on arbitrary hypergraphs or outputs of shallow circuits.  Moreover, we show improvements of the maximum entropy method beyond the sample complexity that are of independent interest. These include identifying regimes in which it is possible to perform the postprocessing efficiently as well as novel bounds on the condition number of covariance matrices of many-body states.

\end{abstract}
\maketitle

\section{Introduction}

The subject of quantum tomography has as its goal devising methods for efficiently obtaining a classical description of a quantum system from access to experimental data.  However, all tomographic methods for general quantum states inevitably require resources that scale exponentially in the size of the system~\cite{Wright2016,Haah2017}, be it in terms of the number of samples required or the post-processing needed to perform the task. 

Fortunately, most of the physically relevant quantum systems can be described in terms of a (quasi)-local structure. These range from that of a local interaction Hamiltonian corresponding to a finite temperature Gibbs state to that of a shallow quantum circuit. Hence, locality is a physically motivated requirement that brings the number of parameters describing the system to a tractable number. Effective tomographic procedures should be able to incorporate this information. And, indeed, starting from physically motivated assumptions, many protocols in the literature achieve a good recovery guarantee in trace distance from a number of copies that scales \emph{polynomially} with system size~\cite{Cramer2010,Baumgratz_2013,Torlai2018,PhysRevA.101.032321,2004.07266,Eisert2020}.

Furthermore, in many cases, one is interested in learning only physical properties of the state on which tomography is being performed. These are mostly encoded into the expectation values of quasi-local observables that often only depend on reduced density matrices of subregions of the system. By Helstrom's theorem, obtaining a good recovery guarantee in trace distance is equivalent to demanding that the expectation value of \emph{all bounded observables} are close for the two states, a much larger class of observables than quasi-local ones.

It is, in turn, desirable to design tomographic procedures that can take advantage of the fact that we wish to only approximate quasi-local observables, instead of demanding a recovery in trace distance. And some methods in the literature take advantage of that. For instance, the overlapping tomography or classical shadows methods of~\cite{Huang2020,cotler2020quantum,jena_pauli_2019,crawford_efficient_2020} allow for approximately learning all $k$-local reduced density matrices of an $n$-qubit state with failure probability $\delta$ using $\cO(e^{ck}k\log(n\delta^{-1})\epsilon^{-2})$ copies without imposing any assumptions on the underlying state. This constitutes an exponential improvement in the system size compared to the previously mentioned many-body setting at the expense of an undesirable exponential dependency in the locality of the observables.

In light of the previous discussion, it is natural to ask the guiding question of our work: \emph{is it possible to devise a tomography protocol that has a sample complexity that is logarithmic in system size and polynomial in the locality of the observables we wish to estimate?} 

At first, this might sound like a tall order: as we show in Section~\ref{sec:lowerbound} by importing results of~\cite{Devroye2020}, even if we start from the assumption that the underlying state we wish to learn is a high-temperature product state with $n$ qubits, the number of samples required to obtain an estimate that is $\epsilon$ close in trace distance from the target state scales like $\Omega(n\epsilon^{-2})$. Thus, to obtain a sample complexity that is logarithmic in system size we cannot quantify closeness in trace distance and need to resort to more physically motivated distinguishability measures. Moreover, we show in Section~\ref{sec:lowerbound} that even for product states the classical shadows protocol will fail to produce a good estimate for $k$-local observables if the number of samples is not exponential in $k$. We conclude that protocols like shadow tomography on their own cannot achieve our goal of a sample complexity that is polynomial in the locality of the underlying observables and needs to be combined with other estimation methods in a nontrivial way.

Despite these challenges, we provide an affirmative answer for the guiding question above for a large class of physically motivated states. We achieve this by combining two insights. First, we observe that recently introduced \emph{Wasserstein distances}~\cite{Carlen_2019,Rouz2019,gao_fisher_2020,palma_optimal,kiani2021quantum,giacomo_cambyse} are better suited than the trace distance to estimate by how much the expectation values of physically motivated observables can differ on two states. We introduce these distances and motivate this claim below. But in summary, the Wasserstein distance quantifies how well we can distinguish states through observables whose expectation value does not change much when we apply a unitary acting only on a few qubits.
By focusing on the Wasserstein distance instead of the trace distance we can bypass the $\Omega(n\epsilon^{-2})$ lower bound we mentioned previously. Intuitively, this means that exponentially fewer samples are required to estimate all such local expectation values than arbitrary, global ones.

The second insight is to combine techniques from quantum optimal transport with the well-established maximum entropy method~\cite{jaynes_information_1957} and the classical shadows protocols in a novel way. In particular, we will demonstrate that so-called transportation cost inequalities~\cite{Carlen_2019,Rouz2019,gao_fisher_2020,palma_optimal,kiani2021quantum,giacomo_cambyse} allow us to control how well we approximate the expectation value of $k$-local observables by how well we approximate certain observables that only act on a constant number of qubits. Thus, we only use the shadows protocol to estimate the expectation of many observables that are highly local, the regime in which classical shadows excel, and bypass the exponential scaling of only using shadows for such an estimation task.
This way we obtain a provable exponential improvement over known methods of many-body tomography~\cite{Cramer2010,Baumgratz_2013,2004.07266,Torlai2018,PhysRevA.101.032321,Eisert2020} that focus on the trace distance and recent shadow tomography or overlapping tomography techniques~\cite{Huang2020,cotler2020quantum,jena_pauli_2019,crawford_efficient_2020}, as summarized in Table~\ref{tab:summary}.

\begin{table*}[t]
\centering
\begin{tabular}{|c|c|c|c|}
\hline
Structure                                                                     & Assumptions on State               & Assumptions on Observable          & Samples  \\ \hline
Many-body tomography                                                       & Many-body & none & $\textrm{poly}(n,\epsilon^{-1})$~\cite{2004.07266}   \\ \hline
Classical Shadows & none      & $k$-local                      & $\textrm{poly}(e^{ck},\log(n),\epsilon^{-1})$ \cite{Huang2020}    \\ \hline
This work                                                              & Many-body+transportation     & $k$-local       & $\textrm{poly}(k,\log(n),\epsilon^{-1})$ \\ \hline
\end{tabular}

\caption{Summary of underlying assumptions and sample complexity of other approaches to perform tomography on quantum many-body states.}\label{tab:summary}
\end{table*}

Examples for which we obtain exponential improvements include thermal states of $1$D systems and high-temperature thermal states of commuting Hamiltonians on arbitrary hypergraphs and outputs of shallow circuits. Furthermore, based on results by~\cite{kuwahara2020gaussian,tang_hamiltonian}, we conjecture that our results should hold for any high-temperature Gibbs state, even. More ambitiously, we conjecture that our results can be extended to states exhibiting exponential decay of correlations. This would allow us to extend our findings to classes of states that are not known to be tractable classically, such as ground states of gapped Hamiltonians in higher dimensional lattices~\cite{hastings_spectral_2006}.

The main ingredient to obtain our improvements are so-called transportation cost (TC) inequalities~\cite{Talagrand_1996}. They allow us to bound the difference of expectation values of Lipschitz observables, a concept we will review shortly, on two states by their relative entropy. Such inequalities constitute a powerful tool from the theory of optimal transport~\cite{villani_optimal_2009} and are traditionally used to prove sharp concentration inequalities~\cite[Chapter 3]{raginsky_concentration_2014}. Moreover, they have been recently extended to quantum states~\cite{palma_optimal,Rouz2019,giacomo_cambyse}. By combining such inequalities with the maximum entropy principle, we are able to easily control the relative entropy between the states and, thus, the difference of expectation values of Lipschitz observables.

Our revisit of the maximum entropy principle is further motivated by recent breakthroughs in Hamiltonian learning~\cite{2004.07266,tang_hamiltonian}, shadow tomography~\cite{Huang2020}, the understanding of correlations and computational complexity of quantum Gibbs states~\cite{kliesch_locality_2014,PhysRevLett.124.220601,harrow2020classical,kuwahara2020gaussian,kuwahara_improved_2021} and quantum functional inequalities~\cite{capel2020modified,giacomo_cambyse} that shed new light on this seasoned technique. 

Before we summarize our contributions in more detail, we first define and revise the main concepts required for our results, namely Lipschitz observables, transportation cost inequalities and the maximum entropy principle.
\subsection{Lipschitz observables}

In the classical setting, given a metric $d$ on a sample space $\mathcal{S}$, the regularity of a function $f:\mathcal{S}\to\mathbb{R}$ can be quantified by its \emph{Lipschitz constant}~\cite[Chapter 3]{raginsky_concentration_2014}  
\begin{align}\label{Lipsclass}
\|f\|_{\operatorname{Lip}}=\sup_{x,y\in\mathcal{S}}|f(x)-f(y)|/d(x,y)\,.
\end{align}
For instance, if we consider functions on the $n$-dimensional hypercube $\{-1,1\}^n$ endowed with the Hamming distance, the Lipschitz constant quantifies by how much a function can change per flipped spin. It  should then be clear that physical quantities like average magnetization have a small Lipschitz constant. 
Some recent works~\cite{Rouz2019,palma_optimal} extended this notion to the noncommutative setting and we will focus on the approach of~\cite{palma_optimal} in the main text. This is justified by the fact that it is more intuitive and technically simpler. For the approach followed in~\cite{palma_optimal}, the Lipschitz constant of an observable on $n$ qudits is defined as \footnote{the definition of~\cite{palma_optimal} has a different normalization and does not have the $\sqrt{n}$ term. This normalization will be convenient to treat the constants of~\cite{Rouz2019} and~\cite{palma_optimal} on an equal footing.} 
\begin{align}\label{equ:definitiongiacomo}
    \frac{\|O\|_{\operatorname{Lip},\square}}{\sqrt{n}}:=\max\limits_{1\leq i\leq n}\underset{\substack{\rho,\sigma\in\D_{d^n}\\\operatorname{tr}_i[\rho]=\operatorname{tr}_i[\sigma]}}{\max}\,\tr{O(\rho-\sigma)}\,,
\end{align}
where $\D_{d^n}$ denotes the set of $n$-qudit states. That is, $\|O\|_{\operatorname{Lip},\square}$ quantifies the amount by which the expectation value of $O$ changes for states that are equal when tracing out one site. It is clear that $\|O\|_{\operatorname{Lip}}\leq2\sqrt{n}\|O\|_\infty$ always holds by H\"older's inequality, but it can be the case that $\|O\|_{\operatorname{Lip},\square}\ll \sqrt{n}\|O\|_{\infty}$. For instance, consider for some $k>0$ the $n$-qubit observable 
\begin{align}\label{equ:average_local}
O=n^{-1}\sum_{i=1}^n\otimes_{j=i}^{i+k}Z_j,
\end{align}
where for each site $j$, $Z_j$ denotes the Pauli observable $Z$ acting on site $j$ and we take addition modulo $n$. It is not difficult to see that $\|O\|_{\operatorname{Lip},\square}=2kn^{-\frac{1}{2}}$, while $\|O\|_\infty=1$. We refer to the discussion in Fig.~\ref{fig:observable} for another example.

Moreover, one can show that shallow local circuits or short-time local dynamics satisfying a Lieb-Robinson bound cannot substantially increase the Lipschitz constant of an observable when evolved in the Heisenberg picture. That is, if we have that $\Phi_t^*$ is the quantum channel that describes some local dynamics at time/depth $t$ in the Heisenberg picture and it satisfies a Lieb-Robinson bound, then we have:
\begin{align*}
    \|\Phi_t^*(O)\|_{\operatorname{Lip},\square}=\cO(e^{v t} \|O\|_{\operatorname{Lip},\square})\,,
\end{align*}
where $v$ denotes the Lieb-Robinson velocity. This result is discussed in more detail in Section~\ref{sec:lipsch} of the supplemental material. Thus, averages over local observables and short-time evolutions thereof all belong to the class of observables that have a small Lipschitz constant when compared to generic observables. These facts justify our claim that quasi-local observables are Lipschitz.

Once we are given a Lipschitz constant on observables or a set of quasi-local observables, we can define a Wasserstein-1 distance on states by duality~\cite{Rouz2019,palma_optimal}. The latter quantifies how well we can distinguish two states by their action on regular or local observables and is given by
\begin{align}\label{equ:wasserstein_def}
{W}_{1}(\rho,\sigma):=\sup_{O:\|O\|_{\operatorname{Lip},\square}\leq 1}\tr{O(\rho-\sigma)}.
\end{align}

The definition~\eqref{equ:wasserstein_def} is in direct analogy with the variational definition of the trace distance, given by:
\begin{align}
\|\rho-\sigma\|_{\operatorname{tr}}:=\sup_{O:\|O\|_{\infty}\leq 1}\tr{O(\rho-\sigma)}.
\end{align}
Note, however, that the two quantities have different scalings. To illustrate this point, let us consider the observable in Eq.~\eqref{equ:average_local}. If we measure the distance in trace distance, then we need that $\|\rho-\sigma\|_{\operatorname{tr}}\leq\epsilon$ to ensure that $\sigma$ approximates the expectation of value of $\rho$ up to $\epsilon$ on $O$. On the other hand, as $\|O\|_{\operatorname{Lip},\square}\leq 2kn^{-\frac{1}{2}}$, the bound $W_1(\rho,\sigma)\leq \tfrac{\epsilon\sqrt{n}}{2k}$ is sufficient to guarantee the same approximation. This difference in scaling is at the heart of our results, as we will see now.

\subsection{Transportation cost inequalities}
The paragraphs before motivated the idea that observables with a small Lipschitz constant capture quasilocal observables, and thus, that controlling the Wasserstein distance between two states gives rise to a more physically motivated distance measure than the trace distance. However, it is a priori not clear how to effectively control the Wasserstein distance between states, as it does not admit a closed formula in terms of eigenvalues like the trace distance. 

In this work, we will achieve this by relating Wasserstein distances to the relative entropy between two states, $D(\rho\|\sigma):=\tr{\rho\,(\log(\rho)-\log(\sigma))}$, for $\sigma$ of full-rank. This can be achieved through the notion of a \textit{transportation cost inequality}: an $n$-qudit state $\sigma$ is said to satisfy a transportation cost inequality with parameter $\alpha>0$ if the Wasserstein distance of $\sigma$ to any other state $\rho$ can be controlled by their relative entropy, i.e.
\begin{align}\label{equ:transportation_entropy}
     {W}_{1}(\rho,\sigma)\leq\sqrt{\frac{D(\rho\|\sigma)}{2\alpha}}
\end{align}
holds for all states $\rho\in\D_{d^n}$. 

Such inequalities are particularly powerful whenever the constant $\alpha$ does not depend on the system size $n$ or does so at most inverse polylogarithmically, and can be thought of as a strengthening of Pinsker's inequality.

Transportation cost inequalities are closely related to the notion of Gaussian concentration~\cite{bobkov_exponential_1999,Rouz2019,palma_optimal}, i.e. that Lipschitz functions strongly concentrate around their mean. Establishing analogs of such concentration inequalities for quantum many-body systems has been a fruitful line of research in the last years and they are related to fundamental questions in statistical physics, see e.g.~\cite{brandao_equivalence_2015,Anshu_2016,kuwahara2020gaussian,tasaki_local_2018,kuwahara_eigenstate_2020}. Although we are certain that inequalities like Eq.~\eqref{equ:transportation_entropy} also shed new light on this matter, here we will focus on their application to learning a classical description of a state through maximum entropy methods. We refer to Table~\ref{tab:performance} for a summary of classes of states known to satisfy it, as discussed in more detail below. Unfortunately, for some important classes, the inequalities are only known for the more technically involved variations of the Wasserstein distance, and we refer the reader to the supplemental material, Section~\ref{sec:examplesLipsch} for precise definitions.

Recent works have established transportation cost inequalities with $\alpha$ either constant or logarithmic in system size for several classes of Gibbs states of commuting Hamiltonians~\cite{capel2020modified,palma_optimal,giacomo_cambyse}. In summary, they are known to hold for local Hamiltonians on arbitrary hypergraphs at high enough temperatures or in $1$D. In this work we enlarge the class of examples by showing them for outputs of short-depth circuits in Sec.~\ref{sec:shallowciurcuits}. Note that Eq.~\eqref{equ:transportation_entropy} is trivial for pure states, as then the relative entropy between that state and any other is always $+\infty$. Thus, we first find an appropriate full-rank approximation of the pure state for which the inequality holds, as we will discuss below.

\subsection{Maximum-entropy methods}

Let us now show how transportation cost inequalities can be combined with maximum entropy methods. Such methods start from the assumption that we are given a set of self-adjoint, linearly independent Hermitian observables over an $n$-qudit system,  $E_1,\ldots,E_m\in\M_{d^n}^{\operatorname{sa}}$, with $\|E_i\|_\infty\leq1$, a maximal inverse temperature $\beta>0$ and the promise that the state we wish to learn can be expressed as:
\begin{align}\label{equ:def_state}
\sigma\equiv   \sigma(\lambda)=\frac{\operatorname{exp}\left(-\beta \sum\limits_{i=1}^m\lambda_i E_i\right)}{\mathcal{Z}(\lambda)}\quad ,
\end{align}
where $\lambda\in\R^m$ with sup norm $\|\lambda\|_{\ell_\infty}\leq1$ and 
\begin{align*}
\mathcal{Z}(\lambda)=\tr{\operatorname{exp}\left(-\beta \sum\limits_{i=1}^m\lambda_i E_i\right)}\,
\end{align*}
the partition function.
Denoting by $e(\lambda)$ the vector with components $e_i(\lambda)=\tr{E_i\,\sigma(\lambda)}$, the crux of the maximum entropy method is that $\lambda$ is the unique optimizer of
\begin{align}\label{equ:max-entropy-restricted_main}
        \min_{\substack{\mu\in\R^m\\ \,\|\mu\|_{\ell_\infty}\leq 1}} \log(\mathcal{Z}(\mu))+\beta\sum\limits_{i=1}^m \mu_ie_i(\lambda)\,,
\end{align}
which gives us a convex variational principle to learn the state given $e(\lambda)$.  We refer to Sec.~\ref{sec:max-entropyprin} for a discussion of the maximum entropy principle and its properties.

Typical examples of observables $E_i$ are e.g. all $2-$local Pauli observables corresponding to edges of a given graph. This models the situation in which we are guaranteed that the state is a thermal state of a Hamiltonian with a known locality structure. More generally, for most of the examples discussed here the $E_i$ are given as follows: we start from a hypergraph $G=(V,E)$ and assume that there is a maximum radius $r_0\in\mathbb{N}$ such that, for any hyperedge $A\in E$, there exists a vertex $v\in V$ such that the ball $B(v,r_0) $ centered at $v$ and of radius $r_0$ includes $A$. The $E_i$ are then given by a basis of the traceless matrices on each hyperedge $A$. This definition captures the notion of a local  Hamiltonian w.r.t. the hypergraph.

Our framework also encompasses pure states after making an appropriate approximation. For instance, in this article we will also consider the outputs of shallow quantum circuits and believe our framework extends to unique ground states of gapped Hamiltonians, which are pure. Indeed, although it might not be a-priori clear, we will show below how the outputs of constant depth constant circuits are contained in the class of ground states of gapped, commuting Hamiltonians. And these are well-approximated by Gibbs states at finite temperature. Although this connection between Gibbs states of commuting Hamiltonians and outputs of shallow circuits might not be obvious at first, it makes some intuitive sense that the outputs of a shallow circuit are characterized by the reduced density matrix of the lightcone of each qubit. And the same holds for Gibbs states: one of their defining properties is that they are uniquely determined by the marginals. And this is the property that links these two classes of states.

Let us specify further what it means to learn the output of a shallow circuit. Suppose that $\ket{\psi}=\mathcal{U}\ket{0}^{\otimes n}$ is the output of an unknown shallow circuit $\mathcal{U}$ of depth $L$ with respect to a known interaction graph $(V,E)$ on $n$ vertices. That is,
\begin{align*}
    \mathcal{U}=\prod_{\ell\in [L]}\,\bigotimes_{e\in \mathcal{E}_{\ell}}\,\mathcal{U}_{\ell,e}\,,
\end{align*}
where each $\mathcal{E}_{\ell}\subset E$ is a subset of non-intersecting edges. Thus, in this setting the locality of the circuit is known, but the underlying unitary is not. As we then show in Theorem~\ref{shallowcircuits}, the state $\ket{\psi}$ is $\epsilon^2$ close in Wasserstein distance to the Gibbs state $\sigma$ corresponding to the Hamiltonian with local terms $ \mathcal{U}Z_i\mathcal{U}^\dagger$ at the inverse temperature $\beta=\log(\epsilon^{-1})$. By a simple light-cone argument we can bound the support of each $\mathcal{U}Z_i\mathcal{U}^\dagger$, since we know the underlying structure of the circuit. We then show in Thm.~\ref{shallowcircuits} that it is indeed possible to efficiently learn the outputs of such circuits as long as the support of each time-evolved $Z_i$ is at most logarithmic in system size.

We see from Eq.~\eqref{equ:max-entropy-restricted_main} that the expectation values of the $E_i$ completely characterize the state $\sigma(\lambda)$. But it is possible to obtain a more quantitative version of this statement through the following identity, also observed in~\cite{2004.07266}:
\begin{align}
      &D(\sigma(\mu)\|\sigma(\lambda))+D(\sigma(\lambda)\|\sigma(\mu))\nonumber\\
      &\qquad\qquad  =-\beta\,\scalar{\lambda-\mu}{e(\lambda)-e(\mu)}.\label{equ:rel_entrtopy_Gibbs}
\end{align}
In addition to showing that if $e(\lambda)=e(\mu)$ then $\sigma(\mu)=\sigma(\lambda)$, Eq.~\eqref{equ:rel_entrtopy_Gibbs} implies that by controlling how well the local expectations values of one state approximate another, we can also easily control their relative entropies. In particular, if $m=\cO(n)$ and $\|e(\mu)-e(\lambda)\|_{\ell_1}=\cO(\epsilon n)$ for some $\epsilon>0$, we obtain from an application of H\"older's inequality that:
\begin{align}\label{equ:relativeentropydensity}    D(\sigma(\mu)\|\sigma(\lambda))&\leq\nonumber D(\sigma(\mu)\|\sigma(\lambda))+D(\sigma(\lambda)\|\sigma(\mu))\\
&=\cO(\beta\epsilon n )\,.
\end{align}
We refer to Section \ref{sec:max-entropyprin} for more details. Thus, if we can find a state that approximates the expectation value of each $E_i$ up to $\epsilon$, we are guaranteed to have a $\cO(\beta\epsilon)$ relative entropy density. This observation is vital to ensure that the maximum entropy principle still yields a good estimate of the state even under some statistical noise in the vector of expectation values $e(\lambda)$. Indeed, the variational principle of Eq.~\eqref{equ:max-entropy-restricted_main} would allow us to recover the state exactly if we had access to the exact values of $e(\lambda)$. However, it turns out that solving Eq.~\eqref{equ:max-entropy-restricted_main} with some estimate $\hat{e}(\lambda)$ such that $\|\hat{e}(\lambda)-e(\lambda)\|_{\ell_\infty}\leq \epsilon$ still yields a Gibbs state $\sigma(\mu)$ satisfying Eq.~\eqref{equ:relativeentropydensity}. The maximum entropy problem is a strictly convex optimization problem. Thus, it can be solved efficiently with access to the gradient of the target function. The gradient turns out to be proportional to $e(\lambda)-e(\mu)$, where $\mu$ is the current guess for the optimum. Although we will discuss the details of solving the problem later in Sec.~\ref{sec:max-entropyprin}, in a nutshell, the maximum entropy problem can be solved efficiently if it is possible to efficiently compute expectation values of the observables $E_i$ on the family of Gibbs states under consideration.

\subsection{Combining TC with the maximum entropy principle}
Suppose now that we have that each of the $E_i$ acts on at most $l_0$ qudits. Then, by using e.g. the method of classical shadows, we can estimate the expectation values of all $E_i$ up to $\epsilon$ with failure probability at most $\delta>0$ with $\cO(4^{l_0}\epsilon^{-2}\log(m\delta^{-1}))$ samples. From our discussion above, we see that this is enough to obtain a state $\sigma(\mu)$ satisfying Eq.~\eqref{equ:relativeentropydensity}. Further assuming that we have a TC with some constant $\alpha>0$ for $\sigma(\lambda)$ we conclude that:
\begin{align*}
    \left|\tr{O(\sigma(\lambda)-\sigma(\mu))}\right|&\leq \|O\|_{\operatorname{Lip}}W_1(\sigma(\lambda),\sigma(\mu))\\
    &\leq
    \|O\|_{\operatorname{Lip}}\sqrt{\frac{D(\sigma(\mu)\|\sigma(\lambda))}{2\alpha}}\\
    &=\cO(\sqrt{\beta\epsilon n}\,\|O\|_{\operatorname{Lip}}).
\end{align*}

Finally, recall that for sums of $k$ local operators on a $2$D lattice like in Fig.~\ref{fig:observable}, where we have $k=L^2$, the Lipschitz constant satisfies $\|O\|_{\operatorname{Lip}}=\cO(\sqrt{n})$ and we require a precision of $\cO(\epsilon n/k)$ to obtain a relative error of $\epsilon$. Putting all these elements together, we conclude that by setting $\epsilon=\beta \,\tilde{\epsilon}^2/(k^2)$ for some $\tilde{\epsilon}>0$ we arrive at 
\begin{align}
    \left|\tr{O(\sigma(\lambda)-\sigma(\mu))}\right|=\cO(\tilde{\epsilon}\,k^{-1}n),
\end{align}
which constitutes a relative error for the expectation value. In particular, we see that the sample complexity required to obtain this error was 
\begin{align}\label{equ:numbersamples}
   \cO(4^{l_0}k^4\beta^2\,\tilde{\epsilon}^{-4}\log(m)).
\end{align}
We then obtain:
\begin{theorem}[Learning Gibbs states]\label{thn:learning}
 Let  $\sigma(\lambda)$ be a Gibbs state as defined in Eq.~\eqref{equ:def_state} and such that each $E_i$ acts on at most $l_0$ qudits. Moreover, suppose that $\sigma(\lambda)$ satisfies a $\operatorname{TC}$ inequality with $\alpha$ depending at most inverse logarithmically with system size. Then with probability of success $1-\delta$ we can obtain a state $\sigma(\mu)$ such that for all observables $O\in\M_{d^n}$ 
\begin{align}\label{equ:lipschitzrecovery2}
\left|\tr{O(\sigma(\lambda)-\sigma(\mu))}\right|\leq \cO(\epsilon\sqrt{n}\,\|O\|_{\operatorname{Lip}})\,
\end{align}
from $\cO(4^{l_0}\beta^2\,\operatorname{poly}(\epsilon^{-1},\log(m\delta^{-1})))$ samples of $\sigma(\lambda)$. Moreover, if it is possible to compute the expectation values of the $E_i$ on $\sigma(\tau)$ for $\|\tau\|_{\ell_\infty}\leq 1$, then the postprocessing can also be in polynomial time.
\end{theorem}
We once again stress that the recovery guarantee in Eq.~\eqref{equ:lipschitzrecovery2} suffices to give good relative approximations for the expectation value of quasilocal observables. Furthermore, if we did not resort to the Wasserstein distance but rather the trace distance, as in known results for the tomography of many-body states~\cite{Cramer2010,Baumgratz_2013,Lanyon2017,PhysRevA.101.032321,2004.07266}, the sample complexity would be exponentially worse, as we prove in Sec.~\ref{sec:lowerbound}. More precisely, any algorithm that estimates Gibbs states on a lattice at inverse temperatures $\beta=\Omega(n^{-\frac{1}{2}})$ up to trace distance $\epsilon$ requires $\Omega(n\epsilon^{-2})$ samples. Thus, even for states whose inverse temperature goes to $0$ as the system size increases, a focus on the trace distance instead of the Wasserstein distance implies an exponentially worse sample complexity. 

\begin{table*}[ht!]

    \centering
    
    \begin{tabular}{|l|l|l|l|}
    \hline
    Structure                                                                     & Samples               & Postprocessing             \\ \hline
    Lightcone $l_0$ circuit                                                       & $\epsilon^{-4}2^{12l_0}$ & $\operatorname{exp}(3n)$    \\ \hline
    \begin{tabular}[c]{@{}l@{}}Commuting Gibbs\\ for $\beta<\beta_c$\end{tabular} & $\epsilon^{-2}$       & $\textrm{poly}(n)$                       \\ \hline
    $1$D commuting                                                                & $\min\{\epsilon^{-2}e^{\mathcal{O}(\beta)},\epsilon^{-4}\}$       & $\textrm{poly}(n)$            \\ \hline
    \end{tabular}
    \caption{Performance of the algorithm under various assumptions of the underlying state to obtain a state that is $\epsilon\sqrt{n}$ close in Wasserstein distance. All the estimates are up to $\textrm{polylog}(n)$ factors. We refer to Sec.~\ref{sec:TCIs} for proofs of the TC used and~\ref{sec:Wasslearning} for how to combine them with maximum entropy methods. In Section~\ref{sec:strong_convexity} we explain how to obtain the sample complexity by combining Thm.~\ref{thm:comcomplexitygeneral} with strong convexity bounds. For the postprocessing we refer to section~\ref{sec:regimesefficient}. The case of shallow circuits is discussed in more detail in Sec.~\ref{sec:shallowciurcuits}. By lightcone $l_0$ we mean the size of the largest lightcone of each qubit in the circuit.}\label{tab:performance}
    \end{table*}
    
Theorem~\ref{thn:learning} also provides an exponential improvement over shadow techniques in the locality of the observables, as we argue in Sec.~\ref{sec:lowerbound}. However, unlike our methods, shadow techniques do not need to make any assumptions on the underlying states. Thus, we see that Theorem~\ref{thn:learning} opens up the possibility of highly efficient characterization of quantum states and provably exponentially better sample complexities when compared with recovery in trace distance.

We also remark that it is possible to improve the scaling in accuracy in Eq.~\eqref{equ:numbersamples} from $\epsilon^{-4}$ to the expected $\epsilon^{-2}$. To do that, it is important to bound the condition number of the Hessian of the log-partition function, as we explain in the methods.

\begin{figure}[ht!]
\centering
\includegraphics[width=1\columnwidth]{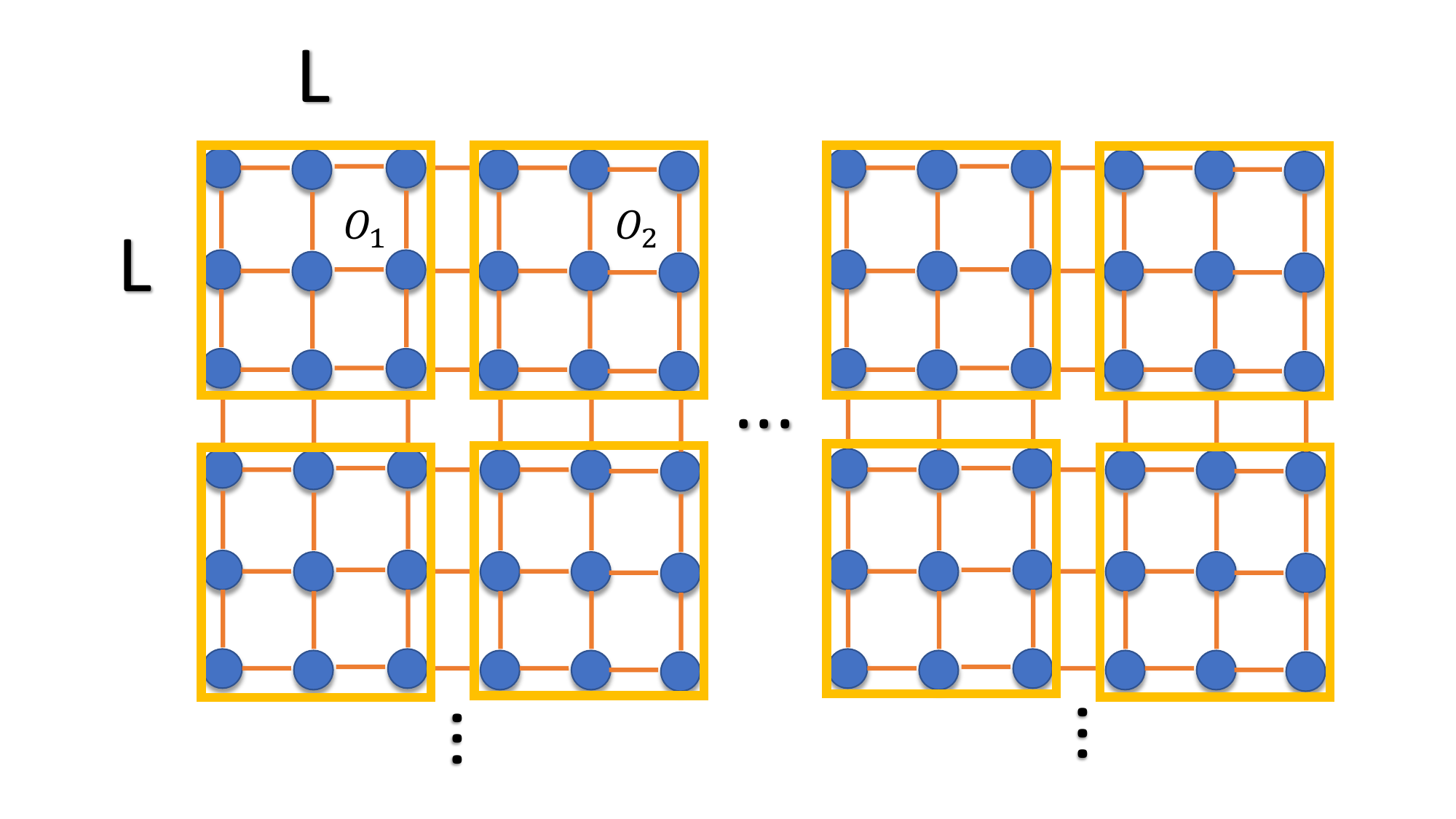}
\caption{example of observable $O=\sum_iO_i$ for $2D$ lattice system of size $n$. Each $O_i$ is supported on a $L\times L$ square ($L=3$ in the figure). We have $\|O\|_{\operatorname{Lip}}=\cO(\sqrt{n})$ and $\|O\|=n/L^2$. Thus our methods require $\textrm{poly}(L,\log(n),\epsilon^{-1})$ samples to estimate the expectation value of all such observables. Shadow-like methods require $\textrm{poly}(e^{cL^2},\log(n),\epsilon^{-1})$ samples, an exponentially worse dependency in $L$. Even for moderate values of $L$, say $L=5$, this can lead to $10^7$ factor savings sample complexity and gives an exponential speedup for $L=\textrm{poly}(\log(n))$. Other many-body methods have a $\textrm{poly}(L,n,\epsilon^{-1})$~\cite{Cramer2010,Baumgratz_2013,Lanyon2017,PhysRevA.101.032321,2004.07266} scaling, which in turn is exponentially worse in the system size.}
\label{fig:observable}
\end{figure}

\subsection{Numerical results}
We will now compare the performance of our method to the classical shadow protocol~\cite{Huang2020} to estimate the average of a local observable on a Gibbs state. To ensure that we can still generate samples for a high number of qubits, we will consider the following family of commuting Gibbs states in $1$-D:
\begin{align}\label{equ:form_Hamiltonian}
&H(\lambda)=\\
&~~-\sum\limits_{k=0}^{n/2-1}S^{2k}(\lambda_kX_0X_1+\lambda_{n+k}X_0X_1Y_2Y_3)S^{2k},\nonumber
\end{align}
where $S$ is the shift operator, $\lambda\in B_{\ell_\infty}(0,1)$ and we assumed $n$ is even. We will then estimate the expectation value of the observable:
\begin{align}\label{equ:observable_numerics}
O=\sum\limits_{k=0}^{n/2-1}S^{2k}X_0X_1Y_2Y_3X_4X_5Y_6Y_7S^{2k}.
\end{align}
The results for one particular choice of Gibbs state in this class are shown in Fig.~\ref{fig:shadows_vs_gibbs}. It shows that even for observables of moderate locality like the one in Eq.~\eqref{equ:observable_numerics}, shadows are outperformed by maximum entropy methods by orders of magnitude. Also note that the quality of our estimates decays like $\sim1/\sqrt{s}$, where $s$ is the number of samples, showing how the quality of the recovery is essentially independent of the system's size.

We also remark that for obtaining these results, we obtained the expectation values of the $X_iX_{i+1}$ terms by measuring in the $X$ basis on each qubit and of the $XXYY$ terms by measuring in a sequence of $XXYY$ bases followed by the same basis shifted by $2$.

\begin{figure*}[!ht]
\centering
\includegraphics[width=0.6\textwidth]{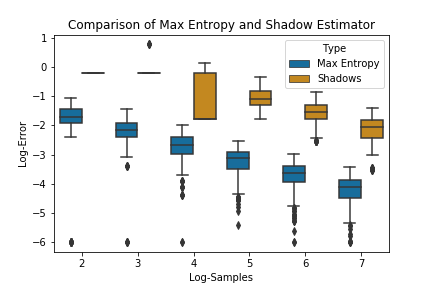}
\caption{Performance on a Gibbs state from the family of Eq.~\eqref{equ:form_Hamiltonian} and the 8-local observable in Eq.~\eqref{equ:observable_numerics}. 
We have set the number of qubits to $100$, $\beta=1.1$ and the $\lambda_i$ uniformly at random between $0.5$ and $0.9$.
The $x$-axis denotes the logarithm of the number of samples in base $10$ and the $y$ the error in absolute value to the true value. We ran each protocol $300$ times on the same Gibbs state to see how the estimate varied. We see that even with $10^{4}$ samples the shadows method still has errors of the order $10^{0}$ in the $75$ percentile, whereas the maximum entropy already yields good estimates when the number of samples is of order $10^2$, showcasing that maximum entropy methods outperform classical shadows by orders of magnitudes for observables of moderate locality like those in Eq.~\eqref{equ:observable_numerics}.}
\label{fig:shadows_vs_gibbs}
\end{figure*}

\section{Conclusion} In this article we have demonstrated that ideas from quantum optimal transport yield provable exponential improvements in the sample complexity required to learn a quantum state when compared to state-of-the-art methods. More precisely, we showcased how the interplay between maximum entropy methods and the Wasserstein distance, which is mediated by a transportation cost inequality, allows for fine-tuning the complexity of observables whose expectation value we wish to estimate and the number of samples required for that. Through our techniques, we essentially settled most questions related to how efficiently it is possible to learn a commuting Gibbs state and significantly advanced our understanding of general Gibbs states. With the impressive growth in the size of quantum devices available in the lab over the last years, we believe that the polylogarithmic in system size sample complexities obtained here will come in handy to calibrate and characterize systems containing thousands or millions of qubits.

We believe that the framework and philosophy we began to develop here will find applications in other areas of quantum information theory. Indeed, although a bound in trace distance is the golden standard and one of the most widely accepted and used measures of distance between quantum states in quantum information and computation, we argued that in many physically relevant settings, demanding a trace distance bound might be too strong. More importantly, replacing the trace distance by a Wasserstein distance bound can lead to exponential complexity gains, as in this article. Thus, we believe that this approach is likely to lead to substantial gains and improvements also in other areas like quantum machine learning, process tomography or in quantum many-body problems.

Some of the outstanding open questions raised by this article are establishing that a suitable notion of exponential decay of correlations in general implies a transportation cost inequality and showing that TC holds for a larger class of systems. We believe that our framework should also extend to ground states of gapped Hamiltonians in 1D, however, such a statement would still require us to refine our bounds. The results presented here also make us conjecture that any high-temperature Gibbs state satisfies a TC inequality, even for long-range interactions, which would make our techniques applicable to essentially all physically relevant models at high temperatures.

That being said, to the best of our knowledge, all states that are known to satisfy a TCI can also be simulated efficiently classically. However, there is a-priori no reason to believe that a TCI necessarily implies that the underlying states can be simulated classically and we believe that this is more of a reflection of the fact that the study of TCI is still incipient. Furthermore, although we explained how to do the tomography of the Gbibs state with the maximum entropy principle, other Hamiltonian learning procedures also apply, as all we need is to upper-bound the relative entropy through Eq.~\eqref{equ:rel_entrtopy_Gibbs}. Thus, breakthroughs in Hamiltonian learning would then also apply to the setting in this paper. In particular, through efficient Hamiltonian learning methods, it might be possible to use the results derived here to verify if a quantum device correctly implemented a shallow circuit.

Moreover, it would also be interesting to investigate other applications of Gaussian concentration in many-body physics~\cite{brandao_equivalence_2015,Anshu_2016,kuwahara2020gaussian,tasaki_local_2018,kuwahara_eigenstate_2020} from the angle of transportation cost inequalities.

\section{Methods}
\subsection{Summary of the maximum entropy procedure and contributions}
Now that we have discussed how our results yield better sample complexities for some classes of states, we discuss the maximum entropy algorithm in more detail and comment on how our results equip it with better performance guarantees.

As the maximum entropy principle in Eq.~\eqref{equ:max-entropy} corresponds to solving a convex optimization problem, it should come as no surprise that promises on the strong convexity of the underlying functions being optimized can be leveraged to give improved performance guarantees~\cite{boyd2004convex}.
For the specific case of the maximum entropy principle, strong convexity guarantees translate to bounds of the form
\begin{align}\label{equ:conditionnummbermain}
  L I\leq \nabla^2 \log(\mathcal{Z}(\mu))\leq U I  
\end{align}
for constants $L,U>0$ and all $\mu\in B_{\ell_\infty}(0,1)$.
We refer to Sec.~\ref{sec:max-entropyprin} of the supplemental material for a thorough discussion. We note that in~\cite{2004.07266} the authors show such results in a more general setting, although with $U,L$ polynomial in $n$, which is not sufficient for our purposes.  For us it will be important to ensure that the condition number of the log-partition function is at most polylogarithmic in system size (i.e. $L^{-1}U=\tilde{\cO}(1)$). 

If we define the function $f:\mu\mapsto\log(\mathcal{Z}(\mu))+\beta\sum\limits_{i=1}^m \mu_ie_i(\lambda)$, then $\nabla f=\beta(e(\lambda)-e(\mu))$. It then follows from standard properties of strongly convex functions that:
\begin{align*}
    \|\lambda-\mu\|_{\ell_2}=L^{-1}\beta\|e(\lambda)-e(\mu)\|_{\ell_2}
\end{align*}
That is, whenever the expectation values are close, the underlying parameters must be close as well. In this case, we have from $\|e(\lambda)-e(\mu)\|_{\ell_2}=\cO(\epsilon\sqrt{m})$ that:
\begin{align}\label{equ:error_estimate_strong_convexity}
    &-\beta\,\scalar{\lambda-\mu}{e(\lambda)-e(\mu)}\\&\leq\beta\|\lambda-\mu\|_{\ell_2}\|e(\lambda)-e(\mu)\|_{\ell_2}=\cO(L^{-1}\beta^2\epsilon^2m).\nonumber
\end{align}
As we will see in the proposition below, $L^{-1}=\cO(e^{\beta}\beta^{-2})$ for commuting Hamiltonians, which gives a quadratic improvement in $\epsilon$ in Eq.~\eqref{equ:error_estimate_strong_convexity} and yields the expected $\epsilon^{-2}$ scaling for the sample complexity in terms of the accuracy. 
\begin{prop}[Strengthened strong convexity constant for commuting Hamiltonians]\label{prop:lowerbound_main}
For each $\mu\in B_{\ell_\infty}(0,1)$, let $\sigma(\mu)$ be a Gibbs state at inverse temperature $\beta$ corresponding to the commuting Hamiltonian $H(\mu)=\sum_j\,\mu_j\,E_j$ on the hypergraph $G=(V,E)$, where $\tr{E_iE_j}=0$ for all $i,j$ and each local operator $E_j$ is traceless on its support. Then for $\beta$ such that the states $\sigma(\mu)$ satisfy exponential decay of correlations, the Hessian of the log-partition function is bounded by
\begin{align}\label{equ:strong_convexity_commuting}
    \Omega(
        \beta^2\,e^{-c\beta })I\leq \nabla^2\,\log\,\mathcal{Z}(\mu)\leq  \cO(\beta^2)I.
\end{align}
for some constant $c>0$.
\end{prop}
After the completion of the first version of this work, in~\cite[Corollary 4.4]{tang_hamiltonian} Tang et al proved strong convexity bounds for high-temperature, (not necessarily geometrically) local Hamiltonians. More precisely, for $\beta=\cO(k^{-8})$, where $k$ is the maximal number of qudits each term acts on, they show that $L^{-1}\leq 2\beta^{-2}$. Although the result in Eq.~\eqref{equ:strong_convexity_commuting} has the advantage of giving an estimate at any temperature, we see that also for noncommuting Gibbs states strong convexity holds at high enough temperatures.

The flowchart in Figure~\ref{fig:flowchart} gives the general scheme behind the maximum entropy method. Besides the exponential improvements in sample complexity laid out in Table~\ref{tab:performance}, we also provide structural improvements which we elaborate on while also explaining the general scheme:

\paragraph{\textbf{Input:}} The input consists of $m$ linearly independent operators $E_i$ of operator norm $1$, some upper bound $\beta>0$, precision parameter $\epsilon>0$ and step size $\eta$. Moreover, we are given the promise that the state of interest satisfies~\eqref{equ:def_state}. Although we will be mostly concerned with the case in which the observables are local, we show the convergence of the algorithm in general in Sec.~\ref{sec:max-entropyprin}. The step size should be picked as $\eta=\cO(U^{-1})$ with $U$ satisfying \eqref{equ:conditionnummbermain}, as explained in Sec.~\ref{sec:convergence_guarantees}.

\paragraph{\textbf{Require}:} We assume that we have access to copies of $\sigma(\lambda)$ and that we can perform measurements to estimate the expectation values of the observables $E_i$ up to precision $\epsilon>0$. For most of the examples considered here, this will mostly require implementing simple, few-qudit measurements.

\paragraph{\textbf{Output}:} The output is in the form of a vector of parameters $\mu$ of a Gibbs state $\sigma(\mu)$ as in Eq.~\eqref{equ:def_state}. Note that unlike~\cite{2004.07266}, our goal is not to estimate the vector of parameters $\lambda$, but rather obtain an approximation of the state satisfying $\sigma(\lambda)\simeq \sigma(\mu)$. Here we will focus on quantifying the output's quality in relative entropy. More precisely, the output of the algorithm is guaranteed to satisfy $D(\sigma(\mu)\|\sigma(\lambda))=\cO(\epsilon n)$.

\paragraph{\textbf{Step 1}:} In this step, we estimate the expectation values of each observable $E_i$ on the state $\sigma(\lambda)$ up to an error $\epsilon$. The resources to be optimized here are the number of samples of $\sigma(\lambda)$ we require and the complexity to implement the measurements. Using shadow tomography or Pauli grouping methods~\cite{Huang2020,cotler2020quantum,bonet2020nearly} we can do so requiring $\cO(4^{r_0}\epsilon^{-2}\operatorname{polylog}(m))$ samples and Pauli or $1-$qubit measurements, where $r_0$ is maximum number of qubits the $E_i$ act on. This is discussed in more detail in Sec.~\ref{sec:Wasslearning}.

\paragraph{\textbf{Step 2}:} The maximum entropy problem in Eq.~\eqref{equ:max-entropy-restricted_main} can be solved with gradient descent, as it corresponds to a strictly convex optimization problem~\cite{boyd2004convex}. At this step we simply initialize the algorithm to start at the maximally mixed state.

\paragraph{\textbf{Step 3}:} It turns out that the gradient of the maximal entropy problem target function at $\mu_t$ is proportional to $e(\mu_t)-e(\lambda)$. Thus, to implement an iteration of gradient descent, it is necessary to compute $e(\mu_t)$, as we assumed we obtained an approximation of $e(\lambda)$ in Step 1. Moreover, it is imperative to ensure that the algorithm also converges with access to approximations to $e(\mu)$ and $e(\lambda)$. This is because most algorithms to compute $e(\mu_t)$ only provide approximate values~\cite{hastings_quantum_2007,kliesch_locality_2014,PhysRevLett.124.220601,harrow2020classical} . In addition, they usually have a poor scaling in the accuracy~\cite{PhysRevLett.124.220601}, making it necessary to show that the process converges with rough approximations to $e(\mu_t)$ and $e(\lambda)$. Here we show that it is indeed possible to perform this step with only approximate computations of expectation values. This allows us to identify classes of states for which the postprocessing can be done efficiently. These results are discussed in more detail in Sec.~\ref{sec:convergenceapprox}.

\paragraph{\textbf{Convergence loop}:} Now that we have seen how to compute one iteration of gradient descent, the next natural question is how many iterations are required to reach the stopping criterium. As this is a strongly convex problem, the convergence speed depends on the eigenvalues of the Hessian of the function being optimized~\cite[Section 9.1.2]{boyd2004convex}. For max-entropy, this corresponds to bounding the eigenvalues of a generalized covariance matrix. In~\cite{2004.07266} the authors already showed such bounds for local Hamiltonians implying the convergence of the algorithm in a number of steps depending polynomially in $m$ and logarithmically on the tolerance $\epsilon$ for a fixed $\beta$.
Here we improve their bound in several directions. First, we show that the algorithm converges after a polynomial in $m$ number of iterations for arbitrary $E_i$, albeit with a polynomial dependence on the error, as discussed in Sec.~\ref{sec:convergenceapprox}. 
We then specialize to certain classes of states to obtain various improvements. 
For high-temperature, commuting Hamiltonians we provide a complete picture and show that the condition number of the Hessian is constant in Prop.~\ref{prop:lowerbound_main}. This implies that gradient descent converges in a number of iterations that scales logarithmically in system size and error.

\begin{figure*}[!ht]
\centering
\includegraphics[width=0.8\textwidth]{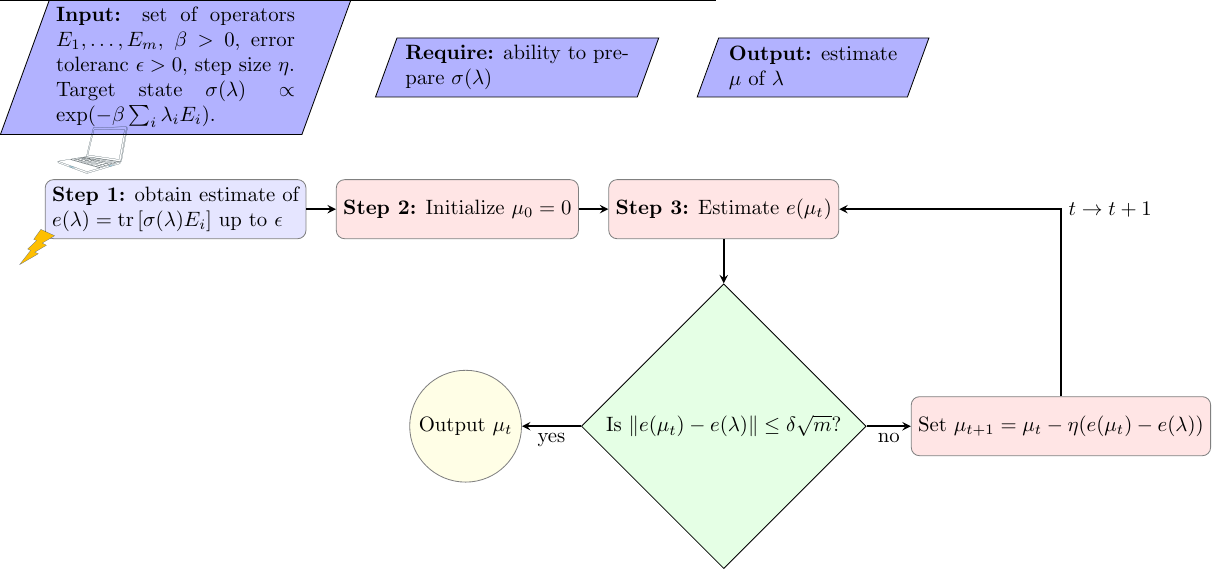}
\caption{Flowchart for general maximum entropy algorithms. }
\label{fig:flowchart}
\end{figure*}
\paragraph{\textbf{Stopping condition and recovery guarantees}:} the stopping condition, 
\begin{align*}
    \|e(\lambda)-e(\mu_t)\|_{\ell_2}\leq\epsilon \sqrt{m}\,,
\end{align*}
can be immediately converted to a relative entropy bound between the target state and the current iterate by the identity~\eqref{equ:rel_entrtopy_Gibbs}. This justifies its choice as a stopping criterion. 

Since we already discussed the sample complexity of the maximum entropy method, let us now discuss some of its computational aspects. There are two quantities that govern the complexity of the algorithm: how many iterations we need to perform until we converge and how expensive each iteration is.

As the maximum entropy problem is strongly convex, one can show that $\cO(UL^{-1}\log(n\epsilon^{-1}))$ iterations suffice to converge. Here again $U,L$ are bounds on the Hessian as in Eq.~\eqref{equ:conditionnummbermain}. 
Nevertheless, we also show how to bypass requiring such bounds in Sec.~\ref{sec:convergence_guarantees} and obtain that the maximum entropy algorithm converges after $\cO(m\epsilon^{-2})$ iterations without any locality assumptions on the $E_i$  or strong convexity guarantees. That is, the number of iterations is at most polynomial in $m$.

Let us now discuss the cost of implementing each iteration of the algorithm on a classical computer. This boils down to estimating $e(\mu_t)$ for the current iterate, which can be achieved in various ways. In the worst case, it is possible to just compute the matrix exponential and the expectation values directly, which yields a complexity of $\cO(d^{3n}m)$. However, for many of the classes considered here it is possible to do this computation in polynomial time. For instance, in~\cite{PhysRevLett.124.220601} the authors show that for high-temperature Gibbs states it is possible to approximate the partition function efficiently. Thus, for the case of high-temperature Gibbs states, not necessarily commuting ones, we can do the postprocessing efficiently. It is also worth mentioning tensor network techniques to estimate $e(\mu_t)$. As we only require computing the expectation value of local observables, recent  works show that it is possible to find tensor network states of constant bond dimension that approximate all expectation values of a given locality well~\cite{dalzell_locally_2019,alhambra_locally_2021,huang_locally_2021}. From such a representation it is then possible to compute $e(\mu_t)$ efficiently in the $1$D case by contracting the underlying tensor network. Unfortunately, however, in higher dimensions the contraction still takes exponential time. Table~\ref{tab:performance} provides a summary of the complexity of the postprocessing for various classes.

It is also worth considering the complexity of the postprocessing with access to a quantum computer, especially for commuting Hamiltonians. As all high-temperature Gibbs states satisfy exponential decay of correlations, the results of~\cite{brandao_finite_2019} imply that high-temperature Gibbs states can be prepared with a circuit of depth logarithmic in system size. Thus, by using the same method we used to estimate $e(\lambda)$ we can also estimate $e(\mu_t)$ by using the copies provided by the quantum computer. The complexity of the postprocessing for shadows is linear in system size. Thus, with access to a quantum computer we can perform the  post-processing for each iteration in  time $\tilde{O}(m\epsilon^{-2})$. As in this case we showed that the number of iterations is $\tcO(1)$, we conclude that we can perform the postprocessing in time $\tilde{\cO}(m\epsilon^{-2})$. That is, for this class of systems our results give an arguably complete picture regarding the postprocessing, as it can be performed in a time comparable to writing down the vector of parameters, up to polylogarithmic factors. Furthermore, given that the underlying Gibbs states are known to satisfy TC and Prop.~\ref{prop:lowerbound_main} gives essentially optimal bounds on the covariance matrices, we believe that the present work essentially settles the question of how efficiently we can learn such Hamiltonians and corresponding Gibbs states.  We discuss this in more detail in Sec.~\ref{sec:completepicturecommuting}.

Finally, an example of our bounds is illustrated in Fig.~\ref{fig:pinsker}, where we show that the number of samples required to estimate a local observable to relative precision is essentially system-size independent.
\begin{figure}[ht!]
\includegraphics[width=1\columnwidth]{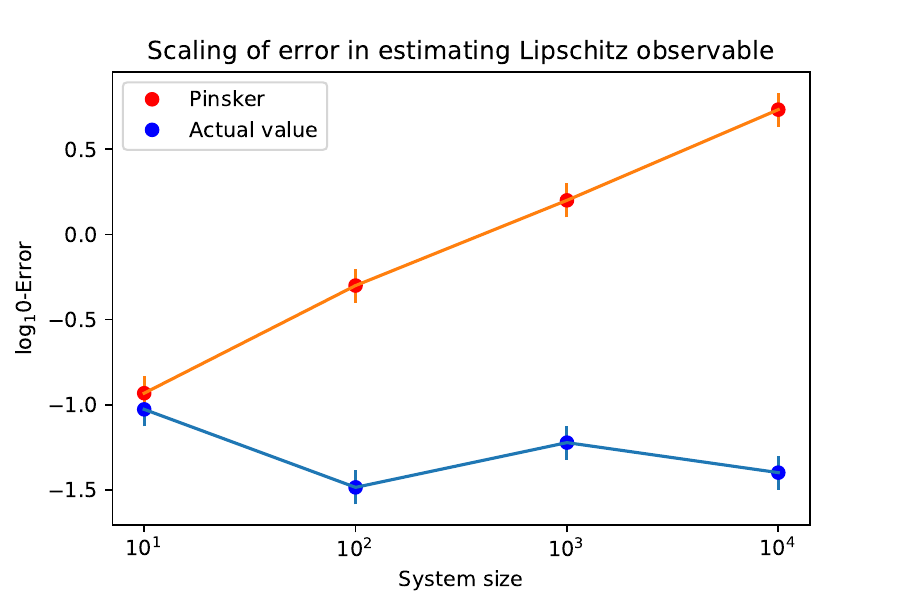}
\caption{Error in estimating a Lipschitz observable after performing the maximum entropy reconstruction method. The underlying state is a classical $1$D-Gibbs state with randomly chosen nearest-neighbor interactions and inverse temperature $\beta=1$. We estimated all the $Z_iZ_{i+1}$ expectation values from the original state based from $10^{3}$ samples of the original state. We then computed the upper bound on the trace distance predicted by Eq.~\eqref{equ:rel_entrtopy_Gibbs} and Pinsker's inequality and compared it to the  actual of discrepancy for a Lipschitz observable on the reconstructed and actual state. The Lipschitz observable was chosen as $\sum_in^{-1}UZ_iZ_{i+2}U^\dagger$, where we picked $U$ as a depth $3$ quantum circuit. We observe that the error incurred is essentially independent of system size, and we get good predictions even when the number of samples is smaller than it.}
\label{fig:pinsker}
\end{figure}

\section{Acknowledgements}

DSF was supported by the VILLUM FONDEN via the QMATH Centre of Excellence under Grant No. 10059 and from the European Research Council (grant agreement
no. 81876).
The research of CR has been supported by project QTraj (ANR-20-CE40-0024-01) of the French National Research Agency (ANR) and by  a  Junior  Researcher  START  Fellowship  from  the MCQST. 
DSF and CR are grateful to Richard Kueng, Fernando Brand\~ao and Giacomo De Palma for interesting discussions.

\bibliographystyle{plainnat}
\bibliography{tomography_doi}

\begin{thebibliography}{68}
\providecommand{\natexlab}[1]{#1}
\providecommand{\url}[1]{\texttt{#1}}
\expandafter\ifx\csname urlstyle\endcsname\relax
  \providecommand{\doi}[1]{doi: #1}\else
  \providecommand{\doi}{doi: \begingroup \urlstyle{rm}\Url}\fi

\bibitem[Aaronson(2018)]{Aaronson_2018}
Scott Aaronson.
\newblock Shadow tomography of quantum states.
\newblock In \emph{S{TOC}'18---{P}roceedings of the 50th {A}nnual {ACM}
  {SIGACT} {S}ymposium on {T}heory of {C}omputing}, pages 325--338. ACM, New
  York, 2018.
\newblock \doi{10.1145/3188745.3188802}.

\bibitem[Alhambra and Cirac(2021)]{alhambra_locally_2021}
Álvaro~M. Alhambra and J.~Ignacio Cirac.
\newblock Locally accurate tensor networks for thermal states and time
  evolution.
\newblock \emph{PRX Quantum}, 2\penalty0 (4), November 2021.
\newblock ISSN 2691-3399.
\newblock \doi{10.1103/prxquantum.2.040331}.

\bibitem[Anshu(2016)]{Anshu_2016}
Anurag Anshu.
\newblock Concentration bounds for quantum states with finite correlation
  length on quantum spin lattice systems.
\newblock \emph{New Journal of Physics}, 18\penalty0 (8):\penalty0 083011, aug
  2016.
\newblock \doi{10.1088/1367-2630/18/8/083011}.

\bibitem[Anshu et~al.()Anshu, Arunachalam, Kuwahara, and
  Soleimanifar]{note_anurag}
Anurag Anshu, Srinivasan Arunachalam, Tomotaka Kuwahara, and Mehdi
  Soleimanifar.
\newblock Efficient learning of commuting hamiltonians on lattices.
\newblock Unpublished notes avaible at Anurag Anshu's website,
  \href{https://people.eecs.berkeley.edu/~anuraganshu/Learning_commuting_hamiltonian.pdf}{(link
  to note)}.

\bibitem[Anshu et~al.(2021)Anshu, Arunachalam, Kuwahara, and
  Soleimanifar]{2004.07266}
Anurag Anshu, Srinivasan Arunachalam, Tomotaka Kuwahara, and Mehdi
  Soleimanifar.
\newblock {Sample-efficient learning of interacting quantum systems}.
\newblock \emph{Nature Physics}, may 2021.
\newblock ISSN 1745-2473.
\newblock \doi{10.1038/s41567-021-01232-0}.

\bibitem[Araki(1969)]{araki1969gibbs}
Huzihiro Araki.
\newblock Gibbs states of a one dimensional quantum lattice.
\newblock \emph{Communications in Mathematical Physics}, 14\penalty0
  (2):\penalty0 120--157, 1969.
\newblock \doi{10.1007/bf01645134}.

\bibitem[Bardet et~al.(2020)Bardet, Capel, and
  Rouz{\'e}]{bardet2020approximate}
Ivan Bardet, Angela Capel, and Cambyse Rouz{\'e}.
\newblock Approximate tensorization of the relative entropy for noncommuting
  conditional expectations.
\newblock \emph{arXiv preprint arXiv:2001.07981}, 2020.
\newblock \doi{10.1007/s00023-021-01088-3}.

\bibitem[Bardet et~al.(2021)Bardet, Capel, and Rouzé]{MLSI1D}
Ivan Bardet, Ángela Capel, and Cambyse Rouzé.
\newblock Approximate tensorization of the relative entropy for noncommuting
  conditional expectations.
\newblock \emph{Annales Henri Poincaré}, 23\penalty0 (1):\penalty0 101–140,
  July 2021.
\newblock ISSN 1424-0661.
\newblock \doi{10.1007/s00023-021-01088-3}.

\bibitem[Barthel and Kliesch(2012)]{barthel2012quasilocality}
Thomas Barthel and Martin Kliesch.
\newblock Quasilocality and efficient simulation of markovian quantum dynamics.
\newblock \emph{Physical review letters}, 108\penalty0 (23):\penalty0 230504,
  2012.
\newblock \doi{10.1103/physrevlett.108.230504}.

\bibitem[Baumgratz et~al.(2013)Baumgratz, Nü{\ss}eler, Cramer, and
  Plenio]{Baumgratz_2013}
T~Baumgratz, A~Nü{\ss}eler, M~Cramer, and M~B Plenio.
\newblock A scalable maximum likelihood method for quantum state tomography.
\newblock \emph{New Journal of Physics}, 15\penalty0 (12):\penalty0 125004, dec
  2013.
\newblock \doi{10.1088/1367-2630/15/12/125004}.

\bibitem[Beigi et~al.(2020)Beigi, Datta, and Rouz{\'e}]{beigi2020quantum}
Salman Beigi, Nilanjana Datta, and Cambyse Rouz{\'e}.
\newblock Quantum reverse hypercontractivity: its tensorization and application
  to strong converses.
\newblock \emph{Communications in Mathematical Physics}, 376\penalty0
  (2):\penalty0 753--794, 2020.
\newblock \doi{10.1007/s00220-020-03750-z}.

\bibitem[Bobkov and Götze(1999)]{bobkov_exponential_1999}
S.G Bobkov and F~Götze.
\newblock Exponential {Integrability} and {Transportation} {Cost} {Related} to
  {Logarithmic} {Sobolev} {Inequalities}.
\newblock \emph{Journal of Functional Analysis}, 163\penalty0 (1):\penalty0
  1--28, April 1999.
\newblock ISSN 00221236.
\newblock \doi{10.1006/jfan.1998.3326}.

\bibitem[Bonet-Monroig et~al.(2020)Bonet-Monroig, Babbush, and
  O’Brien]{bonet2020nearly}
Xavier Bonet-Monroig, Ryan Babbush, and Thomas~E O’Brien.
\newblock Nearly optimal measurement scheduling for partial tomography of
  quantum states.
\newblock \emph{Physical Review X}, 10\penalty0 (3):\penalty0 031064, 2020.
\newblock \doi{10.1103/physrevx.10.031064}.

\bibitem[Boyd et~al.(2004)Boyd, Boyd, and Vandenberghe]{boyd2004convex}
Stephen Boyd, Stephen~P Boyd, and Lieven Vandenberghe.
\newblock \emph{Convex optimization}.
\newblock Cambridge university press, 2004.

\bibitem[Brandao and Cramer(2015)]{brandao_equivalence_2015}
Fernando G. S.~L. Brandao and Marcus Cramer.
\newblock Equivalence of {Statistical} {Mechanical} {Ensembles} for
  {Non}-{Critical} {Quantum} {Systems}.
\newblock \emph{arXiv:1502.03263 [cond-mat, physics:quant-ph]}, February 2015.
\newblock arXiv: 1502.03263.

\bibitem[Brand{\~a}o et~al.(2019)Brand{\~a}o, Kalev, Li, Lin, Svore, and
  Wu]{Brandao2017b}
Fernando G. S.~L. Brand{\~a}o, Amir Kalev, Tongyang Li, Cedric Yen-Yu Lin,
  Krysta~M. Svore, and Xiaodi Wu.
\newblock Quantum {SDP} solvers: large speed-ups, optimality, and applications
  to quantum learning.
\newblock In \emph{46th {I}nternational {C}olloquium on {A}utomata,
  {L}anguages, and {P}rogramming}, volume 132 of \emph{LIPIcs. Leibniz Int.
  Proc. Inform.}, pages Art. No. 27, 14. Schloss Dagstuhl. Leibniz-Zent.
  Inform., Wadern, 2019.

\bibitem[Brandão and Kastoryano(2019)]{brandao_finite_2019}
Fernando G. S.~L. Brandão and Michael~J. Kastoryano.
\newblock Finite {Correlation} {Length} {Implies} {Efficient} {Preparation} of
  {Quantum} {Thermal} {States}.
\newblock \emph{Communications in Mathematical Physics}, 365\penalty0
  (1):\penalty0 1--16, January 2019.
\newblock ISSN 0010-3616, 1432-0916.
\newblock \doi{10.1007/s00220-018-3150-8}.

\bibitem[Capel et~al.(2020)Capel, Rouz{\'e}, and
  Fran{\c{c}}a]{capel2020modified}
{\'A}ngela Capel, Cambyse Rouz{\'e}, and Daniel~Stilck Fran{\c{c}}a.
\newblock The modified logarithmic sobolev inequality for quantum spin systems:
  classical and commuting nearest neighbour interactions.
\newblock \emph{arXiv preprint arXiv:2009.11817}, 2020.

\bibitem[Carlen and Maas(2014)]{carlen2014analog}
Eric~A Carlen and Jan Maas.
\newblock An analog of the 2-wasserstein metric in non-commutative probability
  under which the fermionic fokker--planck equation is gradient flow for the
  entropy.
\newblock \emph{Communications in mathematical physics}, 331\penalty0
  (3):\penalty0 887--926, 2014.
\newblock \doi{10.1007/s00220-014-2124-8}.

\bibitem[Carlen and Maas(2017)]{carlen2017gradient}
Eric~A Carlen and Jan Maas.
\newblock Gradient flow and entropy inequalities for quantum {M}arkov
  semigroups with detailed balance.
\newblock \emph{Journal of Functional Analysis}, 273\penalty0 (5):\penalty0
  1810--1869, 2017.
\newblock \doi{10.1016/j.jfa.2017.05.003}.

\bibitem[Carlen and Maas(2019)]{Carlen_2019}
Eric~A. Carlen and Jan Maas.
\newblock Non-commutative calculus, optimal transport and functional
  inequalities in dissipative quantum systems.
\newblock \emph{Journal of Statistical Physics}, 178\penalty0 (2):\penalty0
  319--378, nov 2019.
\newblock \doi{10.1007/s10955-019-02434-w}.

\bibitem[Cotler and Wilczek(2020)]{cotler2020quantum}
Jordan Cotler and Frank Wilczek.
\newblock Quantum overlapping tomography.
\newblock \emph{Physical Review Letters}, 124\penalty0 (10):\penalty0 100401,
  2020.
\newblock \doi{10.32657/10356/170876}.

\bibitem[Cramer et~al.(2010)Cramer, Plenio, Flammia, Somma, Gross, Bartlett,
  Landon-Cardinal, Poulin, and Liu]{Cramer2010}
Marcus Cramer, Martin~B. Plenio, Steven~T. Flammia, Rolando Somma, David Gross,
  Stephen~D. Bartlett, Olivier Landon-Cardinal, David Poulin, and Yi-Kai Liu.
\newblock {Efficient quantum state tomography}.
\newblock \emph{Nature Communications}, 1\penalty0 (1):\penalty0 149, dec 2010.
\newblock ISSN 2041-1723.
\newblock \doi{10.1038/ncomms1147}.

\bibitem[Crawford et~al.(2021)Crawford, Straaten, Wang, Parks, Campbell, and
  Brierley]{crawford_efficient_2020}
Ophelia Crawford, Barnaby~van Straaten, Daochen Wang, Thomas Parks, Earl
  Campbell, and Stephen Brierley.
\newblock Efficient quantum measurement of pauli operators in the presence of
  finite sampling error.
\newblock \emph{Quantum}, 5:\penalty0 385, January 2021.
\newblock ISSN 2521-327X.
\newblock \doi{10.22331/q-2021-01-20-385}.
\newblock URL \url{http://dx.doi.org/10.22331/q-2021-01-20-385}.

\bibitem[Dalzell and Brandão(2019)]{dalzell_locally_2019}
Alexander~M. Dalzell and Fernando G. S.~L. Brandão.
\newblock Locally accurate {MPS} approximations for ground states of
  one-dimensional gapped local {Hamiltonians}.
\newblock \emph{Quantum}, 3:\penalty0 187, September 2019.
\newblock ISSN 2521-327X.
\newblock \doi{10.22331/q-2019-09-23-187}.

\bibitem[De~Palma and Rouzé(2022)]{giacomo_cambyse}
Giacomo De~Palma and Cambyse Rouzé.
\newblock Quantum concentration inequalities.
\newblock \emph{Annales Henri Poincaré}, 23\penalty0 (9):\penalty0
  3391–3429, April 2022.
\newblock ISSN 1424-0661.
\newblock \doi{10.1007/s00023-022-01181-1}.
\newblock URL \url{http://dx.doi.org/10.1007/s00023-022-01181-1}.

\bibitem[Devroye et~al.(2020)Devroye, Mehrabian, and Reddad]{Devroye2020}
Luc Devroye, Abbas Mehrabian, and Tommy Reddad.
\newblock The minimax learning rates of normal and {I}sing undirected graphical
  models.
\newblock \emph{Electronic Journal of Statistics}, 14\penalty0 (1), January
  2020.
\newblock \doi{10.1214/20-ejs1721}.

\bibitem[Dobrushin and Shlosman(1987)]{dobrushin1987completely}
Roland~L Dobrushin and Senya~B Shlosman.
\newblock Completely analytical interactions: constructive description.
\newblock \emph{Journal of Statistical Physics}, 46\penalty0 (5):\penalty0
  983--1014, 1987.
\newblock \doi{10.1007/bf01011153}.

\bibitem[Eisert et~al.(2020)Eisert, Hangleiter, Walk, Roth, Markham, Parekh,
  Chabaud, and Kashefi]{Eisert2020}
Jens Eisert, Dominik Hangleiter, Nathan Walk, Ingo Roth, Damian Markham, Rhea
  Parekh, Ulysse Chabaud, and Elham Kashefi.
\newblock {Quantum certification and benchmarking}.
\newblock \emph{Nature Reviews Physics}, 2\penalty0 (7):\penalty0 382--390, jul
  2020.
\newblock ISSN 2522-5820.
\newblock \doi{10.1038/s42254-020-0186-4}.

\bibitem[Fran\c{c}a et~al.(2021)Fran\c{c}a, Brandão, and
  Kueng]{brandao_fast_2020}
Daniel~Stilck Fran\c{c}a, Fernando G.S~L. Brandão, and Richard Kueng.
\newblock Fast and robust quantum state tomography from few basis measurements.
\newblock Schloss Dagstuhl – Leibniz-Zentrum f\"{u}r Informatik, 2021.
\newblock \doi{10.4230/LIPICS.TQC.2021.7}.

\bibitem[Gao et~al.(2020)Gao, Junge, and LaRacuente]{gao_fisher_2020}
Li~Gao, Marius Junge, and Nicholas LaRacuente.
\newblock Fisher {Information} and {Logarithmic} {Sobolev} {Inequality} for
  {Matrix}-{Valued} {Functions}.
\newblock \emph{Annales Henri Poincaré}, 21\penalty0 (11):\penalty0
  3409--3478, November 2020.
\newblock ISSN 1424-0637, 1424-0661.
\newblock \doi{10.1007/s00023-020-00947-9}.

\bibitem[Gozlan and L{\'e}onard(2010)]{gozlan2010transport}
Nathael Gozlan and Christian L{\'e}onard.
\newblock Transport inequalities. a survey.
\newblock \emph{arXiv preprint arXiv:1003.3852}, 2010.

\bibitem[Haah et~al.(2017)Haah, Harrow, Ji, Wu, and Yu]{Haah2017}
Jeongwan Haah, Aram~Wettroth Harrow, Zhengfeng Ji, Xiaodi Wu, and Nengkun Yu.
\newblock Sample-optimal tomography of quantum states.
\newblock \emph{{IEEE} Trans. Inf. Theory}, 63\penalty0 (9):\penalty0
  5628--5641, 2017.
\newblock \doi{10.1109/TIT.2017.2719044}.

\bibitem[Haah et~al.(2021)Haah, Kothari, and Tang]{tang_hamiltonian}
Jeongwan Haah, Robin Kothari, and Ewin Tang.
\newblock {O}ptimal learning of quantum {H}amiltonians from high-temperature
  {G}ibbs states, 2021.
\newblock arXiv:2108.04842v1.

\bibitem[Harrow et~al.(2020)Harrow, Mehraban, and
  Soleimanifar]{harrow2020classical}
Aram~W Harrow, Saeed Mehraban, and Mehdi Soleimanifar.
\newblock Classical algorithms, correlation decay, and complex zeros of
  partition functions of quantum many-body systems.
\newblock In \emph{Proceedings of the 52nd Annual ACM SIGACT Symposium on
  Theory of Computing}, pages 378--386, 2020.
\newblock \doi{10.1145/3357713.3384322}.

\bibitem[Hastings(2007{\natexlab{a}})]{Hastings_2007}
M.~B. Hastings.
\newblock Quantum belief propagation: An algorithm for thermal quantum systems.
\newblock \emph{Physical Review B}, 76\penalty0 (20), nov 2007{\natexlab{a}}.
\newblock \doi{10.1103/physrevb.76.201102}.

\bibitem[Hastings(2007{\natexlab{b}})]{hastings_quantum_2007}
M.~B. Hastings.
\newblock Quantum belief propagation: {An} algorithm for thermal quantum
  systems.
\newblock \emph{Physical Review B}, 76\penalty0 (20):\penalty0 201102, November
  2007{\natexlab{b}}.
\newblock ISSN 1098-0121, 1550-235X.
\newblock \doi{10.1103/PhysRevB.76.201102}.

\bibitem[Hastings(2010)]{hastings_locality_2010}
M.~B. Hastings.
\newblock Locality in {Quantum} {Systems}.
\newblock \emph{arXiv:1008.5137 [math-ph, physics:quant-ph]}, August 2010.
\newblock arXiv: 1008.5137.

\bibitem[Hastings and Koma(2006)]{hastings_spectral_2006}
Matthew~B. Hastings and Tohru Koma.
\newblock Spectral {Gap} and {Exponential} {Decay} of {Correlations}.
\newblock \emph{Communications in Mathematical Physics}, 265\penalty0
  (3):\penalty0 781--804, August 2006.
\newblock ISSN 0010-3616, 1432-0916.
\newblock \doi{10.1007/s00220-006-0030-4}.

\bibitem[Huang et~al.(2020)Huang, Kueng, and Preskill]{Huang2020}
Hsin-Yuan Huang, Richard Kueng, and John Preskill.
\newblock Predicting many properties of a quantum system from very few
  measurements.
\newblock \emph{Nature Physics}, June 2020.
\newblock \doi{10.1038/s41567-020-0932-7}.

\bibitem[Huang(2021)]{huang_locally_2021}
Yichen Huang.
\newblock Locally accurate matrix product approximation to thermal states.
\newblock \emph{Science Bulletin}, 66\penalty0 (24):\penalty0 2456–2457,
  December 2021.
\newblock ISSN 2095-9273.
\newblock \doi{10.1016/j.scib.2021.08.011}.
\newblock URL \url{http://dx.doi.org/10.1016/j.scib.2021.08.011}.

\bibitem[Jaynes(1957)]{jaynes_information_1957}
E.~T. Jaynes.
\newblock Information {Theory} and {Statistical} {Mechanics}.
\newblock \emph{Physical Review}, 106\penalty0 (4):\penalty0 620--630, May
  1957.
\newblock ISSN 0031-899X.
\newblock \doi{10.1103/PhysRev.106.620}.

\bibitem[Jena et~al.(2019)Jena, Genin, and Mosca]{jena_pauli_2019}
Andrew Jena, Scott Genin, and Michele Mosca.
\newblock Pauli {Partitioning} with {Respect} to {Gate} {Sets}.
\newblock \emph{arXiv:1907.07859 [quant-ph]}, July 2019.
\newblock arXiv: 1907.07859.

\bibitem[Kastoryano and Brandão(2016)]{kastoryano2016quantum}
Michael~J. Kastoryano and Fernando G. S.~L. Brandão.
\newblock Quantum gibbs samplers: The commuting case.
\newblock \emph{Communications in Mathematical Physics}, 344\penalty0
  (3):\penalty0 915–957, May 2016.
\newblock ISSN 1432-0916.
\newblock \doi{10.1007/s00220-016-2641-8}.

\bibitem[Kiani et~al.(2021)Kiani, De~Palma, Marvian, Liu, and
  Lloyd]{kiani2021quantum}
Bobak~Toussi Kiani, Giacomo De~Palma, Milad Marvian, Zi-Wen Liu, and Seth
  Lloyd.
\newblock Quantum earth mover's distance: A new approach to learning quantum
  data.
\newblock \emph{arXiv preprint arXiv:2101.03037}, 2021.

\bibitem[Kliesch et~al.(2014{\natexlab{a}})Kliesch, Gogolin, Kastoryano, Riera,
  and Eisert]{kliesch_locality_2014}
M.~Kliesch, C.~Gogolin, M.J. Kastoryano, A.~Riera, and J.~Eisert.
\newblock Locality of {Temperature}.
\newblock \emph{Physical Review X}, 4\penalty0 (3):\penalty0 031019, July
  2014{\natexlab{a}}.
\newblock ISSN 2160-3308.
\newblock \doi{10.1103/PhysRevX.4.031019}.

\bibitem[Kliesch et~al.(2014{\natexlab{b}})Kliesch, Gogolin, and
  Eisert]{bach_lieb-robinson_2014}
Martin Kliesch, Christian Gogolin, and Jens Eisert.
\newblock Lieb-{Robinson} {Bounds} and the {Simulation} of {Time}-{Evolution}
  of {Local} {Observables} in {Lattice} {Systems}.
\newblock In Volker Bach and Luigi Delle~Site, editors, \emph{Many-{Electron}
  {Approaches} in {Physics}, {Chemistry} and {Mathematics}}, pages 301--318.
  Springer International Publishing, Cham, 2014{\natexlab{b}}.
\newblock ISBN 978-3-319-06378-2 978-3-319-06379-9.
\newblock \doi{10.1007/978-3-319-06379-9_17}.
\newblock Series Title: Mathematical Physics Studies.

\bibitem[Kokail et~al.(2021)Kokail, van Bijnen, Elben, Vermersch, and
  Zoller]{2009.09000}
Christian Kokail, Rick van Bijnen, Andreas Elben, Benoît Vermersch, and Peter
  Zoller.
\newblock Entanglement hamiltonian tomography in quantum simulation.
\newblock \emph{Nature Physics}, 17\penalty0 (8):\penalty0 936–942, June
  2021.
\newblock ISSN 1745-2481.
\newblock \doi{10.1038/s41567-021-01260-w}.

\bibitem[Kuwahara and Saito(2020{\natexlab{a}})]{kuwahara2020gaussian}
Tomotaka Kuwahara and Keiji Saito.
\newblock Gaussian concentration bound and ensemble equivalence in generic
  quantum many-body systems including long-range interactions.
\newblock \emph{Annals of Physics}, 421:\penalty0 168278, 2020{\natexlab{a}}.
\newblock \doi{10.1016/j.aop.2020.168278}.

\bibitem[Kuwahara and Saito(2020{\natexlab{b}})]{kuwahara_eigenstate_2020}
Tomotaka Kuwahara and Keiji Saito.
\newblock Eigenstate {Thermalization} from the {Clustering} {Property} of
  {Correlation}.
\newblock \emph{Physical Review Letters}, 124\penalty0 (20):\penalty0 200604,
  May 2020{\natexlab{b}}.
\newblock ISSN 0031-9007, 1079-7114.
\newblock \doi{10.1103/PhysRevLett.124.200604}.

\bibitem[Kuwahara et~al.(2020)Kuwahara, Kato, and
  Brand\~ao]{PhysRevLett.124.220601}
Tomotaka Kuwahara, Kohtaro Kato, and Fernando G. S.~L. Brand\~ao.
\newblock Clustering of conditional mutual information for quantum {G}ibbs
  states above a threshold temperature.
\newblock \emph{Phys. Rev. Lett.}, 124:\penalty0 220601, Jun 2020.
\newblock \doi{10.1103/PhysRevLett.124.220601}.

\bibitem[Kuwahara et~al.(2021)Kuwahara, Alhambra, and
  Anshu]{kuwahara_improved_2021}
Tomotaka Kuwahara, Álvaro~M. Alhambra, and Anurag Anshu.
\newblock Improved {Thermal} {Area} {Law} and {Quasilinear} {Time} {Algorithm}
  for {Quantum} {Gibbs} {States}.
\newblock \emph{Physical Review X}, 11\penalty0 (1):\penalty0 011047, March
  2021.
\newblock ISSN 2160-3308.
\newblock \doi{10.1103/PhysRevX.11.011047}.

\bibitem[Lanyon et~al.(2017)Lanyon, Maier, Holz\"{a}pfel, Baumgratz, Hempel,
  Jurcevic, Dhand, Buyskikh, Daley, Cramer, Plenio, Blatt, and
  Roos]{Lanyon2017}
B.~P. Lanyon, C.~Maier, M.~Holz\"{a}pfel, T.~Baumgratz, C.~Hempel, P.~Jurcevic,
  I.~Dhand, A.~S. Buyskikh, A.~J. Daley, M.~Cramer, M.~B. Plenio, R.~Blatt, and
  C.~F. Roos.
\newblock Efficient tomography of a quantum many-body~system.
\newblock \emph{Nature Physics}, 13\penalty0 (12):\penalty0 1158--1162,
  September 2017.
\newblock \doi{10.1038/nphys4244}.

\bibitem[Montanari et~al.(2015)]{montanari2015computational}
Andrea Montanari et~al.
\newblock Computational implications of reducing data to sufficient statistics.
\newblock \emph{Electronic Journal of Statistics}, 9\penalty0 (2):\penalty0
  2370--2390, 2015.
\newblock \doi{10.1214/15-ejs1059}.

\bibitem[O'Donnell and Wright(2016)]{Wright2016}
Ryan O'Donnell and John Wright.
\newblock Efficient quantum tomography.
\newblock In Daniel Wichs and Yishay Mansour, editors, \emph{Proceedings of the
  48th Annual {ACM} {SIGACT} Symposium on Theory of Computing, {STOC} 2016,
  Cambridge, MA, USA, June 18-21, 2016}, pages 899--912. {ACM}, 2016.
\newblock \doi{10.1145/2897518.2897544}.

\bibitem[Palma et~al.(2021)Palma, Marvian, Trevisan, and Lloyd]{palma_optimal}
Giacomo~De Palma, Milad Marvian, Dario Trevisan, and Seth Lloyd.
\newblock The quantum wasserstein distance of order 1.
\newblock \emph{{IEEE} Transactions on Information Theory}, pages 1--1, 2021.
\newblock \doi{10.1109/tit.2021.3076442}.

\bibitem[Poulin(2010)]{poulin_lieb-robinson_2010}
David Poulin.
\newblock Lieb-{Robinson} {Bound} and {Locality} for {General} {Markovian}
  {Quantum} {Dynamics}.
\newblock \emph{Physical Review Letters}, 104\penalty0 (19):\penalty0 190401,
  May 2010.
\newblock ISSN 0031-9007, 1079-7114.
\newblock \doi{10.1103/PhysRevLett.104.190401}.

\bibitem[Raginsky and Sason(2014)]{raginsky_concentration_2014}
Maxim Raginsky and Igal Sason.
\newblock \emph{Concentration of {Measure} {Inequalities} in {Information}
  {Theory}, {Communications}, and {Coding}}.
\newblock Now Publishers, Norwell, MA, 2014.
\newblock ISBN 978-1-60198-907-9.
\newblock OCLC: 1193114287.

\bibitem[Rouz{\'{e}} and Datta(2019)]{Rouz2019}
Cambyse Rouz{\'{e}} and Nilanjana Datta.
\newblock Concentration of quantum states from quantum functional and
  transportation cost inequalities.
\newblock \emph{Journal of Mathematical Physics}, 60\penalty0 (1):\penalty0
  012202, January 2019.
\newblock \doi{10.1063/1.5023210}.

\bibitem[Swingle and Kim(2014)]{swingle_reconstructing_2014}
Brian Swingle and Isaac~H. Kim.
\newblock Reconstructing quantum states from local data.
\newblock \emph{Physical Review Letters}, 113\penalty0 (26):\penalty0 260501,
  December 2014.
\newblock ISSN 0031-9007, 1079-7114.
\newblock \doi{10.1103/PhysRevLett.113.260501}.
\newblock arXiv: 1407.2658.

\bibitem[Talagrand(1996)]{Talagrand_1996}
M.~Talagrand.
\newblock Transportation cost for {G}aussian and other product measures.
\newblock \emph{Geometric and Functional Analysis}, 6\penalty0 (3):\penalty0
  587--600, may 1996.
\newblock \doi{10.1007/bf02249265}.

\bibitem[Tasaki(2018)]{tasaki_local_2018}
Hal Tasaki.
\newblock On the {Local} {Equivalence} {Between} the {Canonical} and the
  {Microcanonical} {Ensembles} for {Quantum} {Spin} {Systems}.
\newblock \emph{Journal of Statistical Physics}, 172\penalty0 (4):\penalty0
  905--926, August 2018.
\newblock ISSN 0022-4715, 1572-9613.
\newblock \doi{10.1007/s10955-018-2077-y}.

\bibitem[Temme et~al.(2014)Temme, Pastawski, and
  Kastoryano]{temme_hypercontractivity_2014}
Kristan Temme, Fernando Pastawski, and Michael~J Kastoryano.
\newblock Hypercontractivity of quasi-free quantum semigroups.
\newblock \emph{Journal of Physics A: Mathematical and Theoretical},
  47\penalty0 (40):\penalty0 405303, October 2014.
\newblock ISSN 1751-8113, 1751-8121.
\newblock \doi{10.1088/1751-8113/47/40/405303}.

\bibitem[Torlai et~al.(2018)Torlai, Mazzola, Carrasquilla, Troyer, Melko, and
  Carleo]{Torlai2018}
Giacomo Torlai, Guglielmo Mazzola, Juan Carrasquilla, Matthias Troyer, Roger
  Melko, and Giuseppe Carleo.
\newblock {Neural-network quantum state tomography}.
\newblock \emph{Nature Physics}, 14\penalty0 (5):\penalty0 447--450, may 2018.
\newblock ISSN 1745-2473.
\newblock \doi{10.1038/s41567-018-0048-5}.

\bibitem[Villani(2009)]{villani_optimal_2009}
Cédric Villani.
\newblock \emph{Optimal {Transport}}, volume 338 of \emph{Grundlehren der
  mathematischen {Wissenschaften}}.
\newblock Springer Berlin Heidelberg, Berlin, Heidelberg, 2009.
\newblock ISBN 978-3-540-71049-3 978-3-540-71050-9.
\newblock \doi{10.1007/978-3-540-71050-9}.

\bibitem[Vuffray et~al.(2016)Vuffray, Misra, Lokhov, and
  Chertkov]{vuffray2016interaction}
Marc Vuffray, Sidhant Misra, Andrey Lokhov, and Michael Chertkov.
\newblock Interaction screening: Efficient and sample-optimal learning of
  {I}sing models.
\newblock In \emph{Advances in Neural Information Processing Systems}, pages
  2595--2603, 2016.

\bibitem[Wang et~al.(2020)Wang, Han, Wang, Li, Mu, Fan, and
  Wang]{PhysRevA.101.032321}
Jun Wang, Zhao-Yu Han, Song-Bo Wang, Zeyang Li, Liang-Zhu Mu, Heng Fan, and Lei
  Wang.
\newblock Scalable quantum tomography with fidelity estimation.
\newblock \emph{Phys. Rev. A}, 101:\penalty0 032321, Mar 2020.
\newblock \doi{10.1103/PhysRevA.101.032321}.

\bibitem[Youssry et~al.(2019)Youssry, Ferrie, and Tomamichel]{Youssry_2019}
Akram Youssry, Christopher Ferrie, and Marco Tomamichel.
\newblock Efficient online quantum state estimation using a
  matrix-exponentiated gradient method.
\newblock \emph{New J. Phys.}, 21\penalty0 (3):\penalty0 033006, 2019.
\newblock \doi{10.1088/1367-2630/ab0438}.

\end{thebibliography}

\onecolumngrid
\clearpage
\appendix

\section*{Supplemental Material}
This is the supplemental material to "Learning many-body states from very few copies". We will start in Sec.~\ref{sec:max-entropyprin} with a review of the basic properties of the maximum entropy principle to learn quantum states. This is followed by a discussion of Lipschitz constants, Wasserstein distances and transportation cost inequalities in Sec.~\ref{Transportcost}. After that, in Sec.~\ref{sec:Wasslearning} we more explicitly discuss the interplay between the maximum entropy method and transportation cost inequalities. We then briefly discuss scenarios in which the postprocessing required for the maximum entropy method can be performed efficiently in Sec.~\ref{sec:regimesefficient}. In Sec.~\ref{sec:completepicturecommuting} we discuss a class of examples where we show that all technical results required to obtain the strongest guarantees of our work hold, that is, Gibbs states of commuting Hamiltonians at high enough temperature and 1D commuting Hamiltonians. Finally, in Sec.~\ref{sec:lowerbound} we discuss lower bounds on the sample complexity of both shadow protocols and many-body algorithms that focus on a recovery in trace distance.

We start by setting some basic notations. Throughout this article, we denote by $\mathcal{M}_k$ the algebra of $k\times k$ matrices on $\mathbb{C}^k$, whereas $\mathcal{M}_k^{\sa}$ denotes the subspace of self-adjoint matrices. The set of quantum states over $\mathbb{C}^k$ is denoted by $\D_{k}$. Typically, $k$ will be taken as $d^n$ for $n$ qudit systems. The trace on $\mathcal{M}_k$ is denoted by $\operatorname{tr}$. Given two quantum states $\rho,\sigma$, we denote by $S(\rho)=-\tr{\rho\log(\rho)}$ the von Neumann entropy of $\rho$, and by $D\lb\rho\|\sigma\rb$ the relative entropy between $\rho$ and $\sigma$, i.e. $D(\rho\|\sigma)=\tr{\,\rho\,(\log(\rho)-\log(\sigma))}$ whenever the support of $\rho$ is contained in that of $\sigma$ and $+\infty$ otherwise. The trace distance is denoted by $\|\rho-\sigma\|_{tr}:=\tr{|\rho-\sigma|}$ and the operator norm of an observable by $\|O\|_{\infty}$. Scalar products are denoted by $\langle\cdot|\cdot\rangle$. Moreover, we denote the $\ell_p$ norm of vectors by $\|\cdot\|_{\ell_p}$, and for $x\in\R^m$ and $r\in\R$, $B_{\ell_p}(x,r)$ denotes the ball of radius $r$ in $\ell_p$ norm around $x$. The identity matrix is denoted by $I$. The adjoint of an operator $A$ is denoted by $A^\dagger$ and that of a channel $\Phi$ with respect to the trace inner product by $\Phi^*$. For a hypergraph $G=(V,E)$  we will denote the distance between subsets of vertices induced by the hypergraph by $\operatorname{dist}$.

\section{Maximum entropy principle for quantum Gibbs states}\label{sec:max-entropyprin}
One of the main aspects of this work concerns the effectiveness of the maximum entropy method for the tomography of quantum Gibbs states in various settings and regimes. Thus, we start by recalling some basic properties of the maximum entropy method. Our starting assumption is that the target state is well-described by a quantum Gibbs state with respect to a known set of operators and that we are given an upper bound on the inverse temperature:
\begin{defi}[Gibbs state with respect to observables]
Given a set of observables $\mathcal{E}=\{E_i\}_{i=1}^m$, $E_1,\ldots,E_m\in\M^{\sa}_{d^n}$ being linearly independent with $\|E_i\|_\infty\le 1$, we call a state $\sigma\in\D_{d^n}$ a Gibbs state at inverse temperature $\beta>0$ if there exists a vector $\lambda\in\R^m$ with $\|\lambda\|_{\ell_\infty}\leq1$ such that:
\begin{align}\label{equ:gibbsstate}
\sigma=\operatorname{exp}\left(-\beta \sum\limits_{i=1}^m\lambda_i E_i\right)/\mathcal{Z}(\lambda),\qquad \text{where}\qquad \mathcal{Z}(\lambda)=\tr{ \operatorname{exp}\left(-\beta\sum\limits_{i=1}^m\lambda_i E_i\right)}
\end{align}
 denotes the partition function. In what follows, we will denote $\sigma$ by $\sigma(\lambda)$ and $\sum_i\lambda_iE_i=H(\lambda)$, where the dependence of $\sigma(\lambda)$ on $\beta$ is implicitly assumed.
\end{defi}
We are mostly interested in the regime where $m\ll d^{n}$. Then the above condition can be interpreted as imposing that the matrix $\log(\sigma)$ is sparse with respect to a known basis $\mathcal{E}$. A canonical example for such states are Gibbs states of local Hamiltonians on a lattice, for which $m=\mathcal{O}(n)$ and the observables $E_i$ are taken as tensor products of Pauli matrices acting on neighboring sites. But we could also consider a basis consisting of quasi-local operators or some subspace of Pauli strings.

Next, we review some basic facts about quantum Gibbs states. One of their main properties is that they satisfy a well-known maximum entropy principle~\cite{jaynes_information_1957}. 
This allows us to simultaneously show that the expectation values of the observables $\mathcal{E}$ completely characterize the state $\sigma(\lambda)$ and further provides us with a variational principle to learn a description from which we can infer an approximation of other expectation values. Let us start with the standard formulation of the maximum entropy principle:
\begin{prop}[Maximum entropy principle]\label{MaxEnt}
Let $\sigma(\lambda)\in\D_{d^n}$ be a quantum Gibbs state~\eqref{equ:gibbsstate} with respect to the basis $\mathcal{E}$ at inverse temperature $\beta$ and introduce $e_i(\lambda):=\tr{\sigma(\lambda)E_i}$ for $i=1,\ldots,m$. Then $\sigma(\lambda)$ is the unique optimizer of the maximum entropy problem:
\begin{align}\label{equ:max-entropy}
    \underset{\rho \in \D_{d^n}}{\operatorname{maximize}}& \quad S(\rho)\\
    \mathrm{subject~ to} & \quad \tr{E_i \rho}=e_i(\lambda) \quad \text{for all $i=1,\ldots,m$.}\quad\nonumber
\end{align}
Moreover,
$\sigma(\lambda)$ optimizes:
\begin{align}\label{equ:max-entropy-restricted}
        \min_{\substack{\mu\in\R^m\\ \,\|\mu\|_{\ell_\infty}\leq 1}} \log(\mathcal{Z}(\mu))+\beta\sum\limits_{i=1}^m\mu_ie_i(\lambda)\,.
        \end{align}
\end{prop}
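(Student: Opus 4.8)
The plan is to prove the two claims separately, using the non-negativity of the relative entropy for the entropy-maximization statement and a direct convexity argument for the variational dual.

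For the first part, I would begin from the observation that, since $\sigma(\lambda)$ is full rank, $\log\sigma(\lambda) = -\beta H(\lambda) - \log(\mathcal{Z}(\lambda))\,I$. For any feasible $\rho$ — that is, any state with $\tr{E_i\rho} = e_i(\lambda)$ for all $i$ — I would expand the relative entropy
\begin{align*}
 D(\rho\|\sigma(\lambda)) &= -S(\rho) - \tr{\rho\log\sigma(\lambda)}\\
 &= -S(\rho) + \beta\sum_i \lambda_i e_i(\lambda) + \log(\mathcal{Z}(\lambda)),
\end{align*}
where the constraint was used to replace $\tr{E_i\rho}$ by $e_i(\lambda)$. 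The last two terms are independent of $\rho$, so $S(\rho) = -D(\rho\|\sigma(\lambda)) + \mathrm{const}$. Klein's inequality $D(\rho\|\sigma(\lambda))\ge 0$, with equality if and only if $\rho=\sigma(\lambda)$, then immediately yields $S(\rho)\le S(\sigma(\lambda))$ with equality exactly at $\rho=\sigma(\lambda)$, establishing both optimality and uniqueness.

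For the second part, I would show that the parameter vector $\lambda$ is the minimizer of $F(\mu):=\log(\mathcal{Z}(\mu)) + \beta\sum_i\mu_i e_i(\lambda)$. The first step is to compute $\nabla_\mu \log(\mathcal{Z}(\mu))$. Using Duhamel's formula together with cyclicity of the trace to handle the non-commutativity of the $E_i$, one obtains $\partial_{\mu_j}\log(\mathcal{Z}(\mu)) = -\beta\,\tr{E_j\sigma(\mu)} = -\beta e_j(\mu)$, and hence $\partial_{\mu_j}F(\mu) = \beta\,(e_j(\lambda)-e_j(\mu))$. This vanishes at $\mu=\lambda$, so $\lambda$ is a stationary point. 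The second step is convexity: the Hessian of $\log(\mathcal{Z})$ is the Kubo--Mori generalized covariance matrix of the $E_i$, which is positive semidefinite and in fact positive definite because the $E_i$ are linearly independent. Since $F$ differs from $\log(\mathcal{Z})$ only by an affine term, it is strictly convex, so its unique stationary point $\lambda$ is the unique global minimizer over $\R^m$; as $\lambda$ already lies in the feasible ball $\{\|\mu\|_{\ell_\infty}\le 1\}$, it is also the constrained optimizer and the constraint is inactive.

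The main obstacle is the differentiation of the log-partition function when the $E_i$ do not commute: both the gradient identity and the Hessian computation require expanding $e^{-\beta H(\mu)}$ via the Duhamel/Dyson series and invoking cyclicity of the trace, and the strict convexity that secures uniqueness rests on verifying that this Hessian — a generalized covariance matrix — has trivial kernel precisely when the $E_i$ are linearly independent. The remaining manipulations are routine.
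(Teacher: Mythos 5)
Your argument for the entropy-maximization part is essentially the paper's own: both rest on the identity $S(\sigma(\lambda))-S(\rho)=D(\rho\|\sigma(\lambda))+\tr{(\rho-\sigma(\lambda))\log\sigma(\lambda)}$, the observation that the second term vanishes on the feasible set because $\log\sigma(\lambda)$ is an affine combination of the $E_i$ and the identity, and strict positivity of the relative entropy for distinct states. Where you genuinely diverge is the second claim. The paper disposes of it in one sentence, asserting that \eqref{equ:max-entropy-restricted} ``is nothing but the dual program'' of \eqref{equ:max-entropy}, i.e.\ it appeals to Lagrangian duality without computing anything. You instead give a direct first-order argument: $\partial_{\mu_j}\log\mathcal{Z}(\mu)=-\beta e_j(\mu)$ via Duhamel plus cyclicity, so the gradient of $F$ is $\beta(e(\lambda)-e(\mu))$, which vanishes at $\mu=\lambda$; convexity of $\log\mathcal{Z}$ (positive semidefiniteness of the Kubo--Mori covariance) then promotes the stationary point to a global minimizer, and $\|\lambda\|_{\ell_\infty}\le 1$ places it inside the constraint set. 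This is more self-contained than the paper's appeal to duality and dovetails with the gradient/Hessian formulas the paper only derives later (Lemma~\ref{lem:derivatives}); what the paper's route buys is brevity and the conceptual link between the primal and dual programs.

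One caveat on your uniqueness claim for the dual: linear independence of the $E_i$ alone does not force the Kubo--Mori covariance matrix to be positive \emph{definite} — the quadratic form in direction $v$ is the Kubo--Mori variance of $\sum_i v_iE_i$, which vanishes whenever that combination is proportional to the identity, so you additionally need $I\notin\operatorname{span}\{E_i\}$ (automatic if the $E_i$ are traceless, as in the paper's examples, but not assumed in the proposition). Since the statement only asserts that $\sigma(\lambda)$ \emph{optimizes} the dual, not that the minimizer is unique, plain convexity plus stationarity already suffices and your proof of the stated result stands; the strict-convexity remark is just an over-claim in the general setting.
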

\begin{proof}
The proof is quite standard, but  we include it for completeness. Note that for any state $\rho\not=\sigma$ that is a feasible point of Eq.~\eqref{equ:max-entropy} we have that:
\begin{align*}
    S(\sigma(\lambda))-S(\rho)&=D(\rho\|\sigma(\lambda))+\tr{(\rho-\sigma(\lambda))\log(\sigma(\lambda))}\\
    &=D(\rho\|\sigma(\lambda))-\beta\sum\limits_{i=1}^m\lambda_i\tr{E_i\lb \rho-\sigma(\lambda)\rb}\\
    &=D(\rho\|\sigma(\lambda))>0\,,
\end{align*}
where we have used the fact that $\tr{E_i\lb \rho-\sigma(\lambda)\rb}=0$ for all feasible points and that the relative entropy between two different states is strictly positive. This shows that $\sigma(\lambda)$ is the unique solution of~\eqref{equ:max-entropy}. Eq.~\eqref{equ:max-entropy-restricted} is nothing but the dual program of Eq.~\eqref{equ:max-entropy}.
\end{proof}

 Eq.~\eqref{equ:max-entropy} above gives a variational principle to find a quantum Gibbs state corresponding to certain expectation values. As it is well-known, one can use gradient descent to solve the problem in Eq.~\eqref{equ:max-entropy-restricted}, as it is a strongly convex problem. Various recent works have discussed learning of Gibbs states~\cite{swingle_reconstructing_2014,2009.09000,2004.07266} and it is certainly not a new idea to do so through maximum entropy methods. Nevertheless, we will discuss how to perform the postprocessing in more detail, as some recent results allow us to give this algorithm stronger performance guarantees. Finally, it should be said that although we draw inspiration from~\cite{2004.07266}, our main goal will be to learn a set of parameters $\mu\in\R^m$ such that the Gibbs states $\sigma(\mu)$ and  $\sigma(\lambda)$ are approximately the same on sufficiently regular observables while optimizing the sample complexity. This is in contrast to the goal of~\cite{2004.07266}, which was to learn the vector of parameters $\lambda$. Learning $\lambda$ corresponds to a stronger requirement, in the sense that if the vectors of parameters are  close, then the underlying states are also close, as made precise in the following Prop.~\ref{prop:recoveryguarantee}.

One of the facts that we are going to often exploit is that it is possible to efficiently estimate the relative entropy between two Gibbs states $\sigma(\lambda)$ and $\sigma(\mu)$ given the parameters $\lambda,\mu$ and the expectation values of observables in $\mathcal{E}$. This also yields an efficiently computable bound on the trace distance.
Indeed, as observed in~\cite{2004.07266}, we have that:
\begin{prop}\label{prop:recoveryguarantee}
Let $\sigma(\mu),\sigma(\lambda)\in\D_{d^n}$ be Gibbs states with respect to a set of observables $\mathcal{E}$ at inverse temperature $\beta$. Denote $e(\lambda)=(\tr{\sigma(\lambda)E_i})_i\in\R^m$.
Then
\begin{align}\label{equ:rel_ent_recovery}
\|\sigma(\mu)-\sigma(\lambda)\|_{tr}^2\leq D(\sigma(\mu)\|\sigma(\lambda))+D(\sigma(\lambda)\|\sigma(\mu))=-\beta\,\scalar{\lambda-\mu}{e(\lambda)-e(\mu)}.
\end{align}
\end{prop}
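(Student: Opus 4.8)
The statement splits into two parts: the Pinsker-type inequality on the left and the exact identity on the right. The plan is to establish the identity first by a direct computation, and then read off the inequality as an immediate consequence of the quantum Pinsker inequality applied termwise to the symmetrized relative entropy.

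For the identity I would start from the explicit logarithms of the Gibbs states, namely $\log\sigma(\lambda)=-\beta H(\lambda)-\log\mathcal{Z}(\lambda)$ and the analogous expression with $\mu$. Substituting these into $D(\sigma(\mu)\|\sigma(\lambda))=\tr{\sigma(\mu)(\log\sigma(\mu)-\log\sigma(\lambda))}$, using $\tr{\sigma(\mu)}=1$ to handle the scalar $\log\mathcal{Z}$ terms and $e_i(\mu)=\tr{\sigma(\mu)E_i}$ together with $H(\mu)-H(\lambda)=\sum_i(\mu_i-\lambda_i)E_i$, yields
\begin{align*}
D(\sigma(\mu)\|\sigma(\lambda))=-\beta\sum_i(\mu_i-\lambda_i)\,e_i(\mu)+\log\mathcal{Z}(\lambda)-\log\mathcal{Z}(\mu).
\end{align*}
The key structural observation is that when I add this to the expression obtained by swapping $\lambda\leftrightarrow\mu$, which gives $D(\sigma(\lambda)\|\sigma(\mu))$, the log-partition-function terms cancel exactly. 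What remains is the bilinear form $\beta\sum_i(\mu_i-\lambda_i)\big(e_i(\lambda)-e_i(\mu)\big)=-\beta\,\scalar{\lambda-\mu}{e(\lambda)-e(\mu)}$, which is precisely the claimed right-hand side. This cancellation is the one point worth emphasizing: it is what makes the symmetrized relative entropy computable purely from the parameters $\lambda,\mu$ and the measured expectation values $e(\lambda),e(\mu)$, never touching the partition function itself.

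For the inequality I would invoke the quantum Pinsker inequality $D(\rho\|\omega)\geq\tfrac12\|\rho-\omega\|_{tr}^2$, which is applicable here because the Gibbs states are full rank (the matrix exponential $\exp(-\beta H)$ is strictly positive). Applying it to both $D(\sigma(\mu)\|\sigma(\lambda))$ and $D(\sigma(\lambda)\|\sigma(\mu))$ and summing produces $\tfrac12\|\sigma(\mu)-\sigma(\lambda)\|_{tr}^2+\tfrac12\|\sigma(\lambda)-\sigma(\mu)\|_{tr}^2=\|\sigma(\mu)-\sigma(\lambda)\|_{tr}^2$, exactly the stated bound.

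I do not anticipate a genuine obstacle, as both steps are elementary. The only points requiring mild care are ensuring the Gibbs states are full rank so that Pinsker's inequality applies, and tracking the signs in the bilinear form so that the final expression matches $-\beta\,\scalar{\lambda-\mu}{e(\lambda)-e(\mu)}$ rather than its negative.
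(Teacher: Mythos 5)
Your proposal is correct and follows essentially the same route as the paper: the identity comes from writing out both relative entropies via $\log\sigma(\cdot)=-\beta H(\cdot)-\log\mathcal{Z}(\cdot)$ and observing that the partition-function terms cancel in the symmetrized sum, and the inequality follows by applying Pinsker's inequality to each term and adding. The sign bookkeeping and the full-rank observation are both fine.
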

\begin{proof}
The equality in Eq.~\eqref{equ:rel_ent_recovery} follows from a simple manipulation. Indeed:
\begin{align*}
   D(\sigma(\mu)\|\sigma(\lambda))+D(\sigma(\lambda)\|\sigma(\mu))=\beta\sum\limits_{i=1}^m({\lambda_i-\mu_i})\tr{[\sigma(\mu)-\sigma(\lambda)]E_i}.
\end{align*}
The bound on the trace distance then follows by applying Pinkser's inequality.
\end{proof}
The statement of Proposition~\ref{prop:recoveryguarantee} allows us to obtain quantitative estimates on how well a given Gibbs state approximates another one in terms of the expectation values of known observables. In particular, a simple application of H\"older's inequality shows that if two Gibbs states are such that $|\tr{E_i[\sigma(\mu)-\sigma(\lambda)]}|\leq \epsilon$, then the sum of their relative entropies is at most 
\begin{align}\label{thebound}
  \beta\, |\langle\lambda-\mu|e(\lambda)-e(\mu)\rangle|\le \beta\,\|\lambda-\mu\|_{\ell_\infty}\,m\eps\le 2 m \eps\,\beta\,,
\end{align}
where the outer bound arises from our assumption that $\|\lambda\|_{\ell_\infty},\,\|\mu\|_{\ell_\infty}\le 1$. Moreover, it is straightforward to relate the difference of the target function in Eq.~\eqref{equ:max-entropy-restricted} evaluated at two vectors to the difference of relative entropies between the target state and their corresponding Gibbs states:
\begin{lem}\label{propdiffrelent}
Let $\sigma(\lambda)\in\D_{d^n}$ be a Gibbs state with respect to a set of observables $\mathcal{E}$ at inverse temperature $\beta>0$ and define, for any $\mu\in\mathbb{R}^m$,
\begin{align}\label{ffunction}
         f(\mu):=\log(\mathcal{Z}(\mu))+\beta\sum\limits_{i=1}^me_i(\lambda)\,\mu_i\,.
\end{align}
Then for any other two vectors $\mu,\xi\in \mathbb{R}^m$ with $\|\mu\|_{\ell_\infty},\|\xi\|_{\ell_\infty}\le 1$:
\begin{align*}
    f(\mu)-f(\xi)=D(\sigma(\lambda)\|\sigma(\mu))-D(\sigma(\lambda)\|\sigma(\xi)).
\end{align*}
\end{lem}
\begin{proof}
The proof follows from straightforward manipulations. 
\end{proof}
Thus, we see that a decrease of the target function $f$ when solving the max entropy problem is directly connected to the decrease of the relative entropy between the target state and the current iterate. We will later use this to show the convergence of gradient descent for solving the max entropy problem with arbitrary  $\mathcal{E}$.
However, before that we discuss how the convergence of the state $\sigma(\mu)$ to $\sigma(\lambda)$ is related to the convergence of the parameters $\mu$ to $\lambda$.
\subsection{Strong convexity and convergence guarantees}\label{sec:convergence_guarantees}
The maximum entropy problem~\eqref{equ:max-entropy} being a convex problem, it should come as no surprise that properties of the Hessian of the function being optimized are vital to understanding its complexity and stability. For the maximum entropy problem, the Hessian at a point is given by a generalized covariance matrix corresponding to the underlying Gibbs state. As the results of~\cite{2004.07266} showcase, the eigenvalues of such covariance matrices govern both the stability of Eq.~\eqref{equ:max-entropy-restricted} with respect to $\mu$ and the convergence of gradient descent to solve it. To see why, we recall some basic notions of optimization of convex functions and refer to~\cite{boyd2004convex} for an overview. 
\begin{defi}[Strong convexity]
Let $C\subset\R^m$ be a convex set. A twice differentiable function $f:C\to\R$ is called strongly convex with parameters $U,L>0$ if we have for all $x\in C$ that:
\begin{align*}
UI\geq\nabla^2 f(x)\geq LI.
\end{align*}
\end{defi}
The optimization of strongly convex functions is well-understood. Indeed, we have:
\begin{prop}\label{theogradient}
Let $C\subset\R^m$ be a convex set and $f:C\to \mathbb{R}$ be strongly convex with parameters $L,U$ as in the definition above. Then, for all $\epsilon>0$, the optimal value $\alpha:=\min_{x\in C}f(x)$ is achieved up to error $\epsilon$ by the gradient descent algorithm initiated at $x^0\in C$ with step size $U^{-1}$ after at most $S$ steps for 
\begin{align}\label{equ:gradientdescent_steps}
 S\leq\frac{U}{L} \log\left(\frac{f(x^0)-\alpha}{\epsilon}\right).
\end{align}
Moreover, the gradient norm satisfies $\|\nabla f(x^k)\|_{\ell_2}^2\le \delta$ after at most $S_\nabla$ steps with
 \begin{align*}
     S_\nabla \le \frac{U}{L}\log\left(\frac{2L(f(x^0)-\alpha)}{\delta}\right)\,.
 \end{align*}
 Finally, we have for all $\mu,\lambda\in C$ that:
\begin{align}\label{equ:gradient-parameters}
\|\mu-\lambda\|_{\ell_2}\leq L^{-1}\|\nabla f(\mu)-\nabla f(\lambda)\|_{\ell_2}\,.
\end{align}
\end{prop}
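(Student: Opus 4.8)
The plan is to derive all three claims from the two-sided Hessian bound $LI\le\nabla^2 f\le UI$ using the standard machinery for smooth strongly convex optimization. I would establish the parameter-stability estimate \eqref{equ:gradient-parameters} first, since it needs only the lower bound and makes no reference to the iterates. Writing the gradient difference through the fundamental theorem of calculus as $\nabla f(\mu)-\nabla f(\lambda)=M(\mu-\lambda)$ with $M=\int_0^1\nabla^2 f(\lambda+t(\mu-\lambda))\,dt\succeq LI$, I would pair this with $\mu-\lambda$ to obtain $\langle\mu-\lambda,\nabla f(\mu)-\nabla f(\lambda)\rangle\ge L\|\mu-\lambda\|_{\ell_2}^2$, and then apply Cauchy--Schwarz to cancel one factor of $\|\mu-\lambda\|_{\ell_2}$, which gives exactly \eqref{equ:gradient-parameters}.

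For the iteration counts I would set up two elementary ingredients. From the upper bound $\nabla^2 f\le UI$ I get the descent lemma, so that for the step of size $U^{-1}$ one has a guaranteed per-step decrease $f(x^{k+1})\le f(x^k)-\tfrac{1}{2U}\|\nabla f(x^k)\|_{\ell_2}^2$. From the lower bound $\nabla^2 f\ge LI$ I get the Polyak--{\L}ojasiewicz inequality $\|\nabla f(x)\|_{\ell_2}^2\ge 2L\,(f(x)-\alpha)$, which follows by minimizing the quadratic lower bound $f(y)\ge f(x)+\langle\nabla f(x),y-x\rangle+\tfrac{L}{2}\|y-x\|_{\ell_2}^2$ over $y$. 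Combining the two yields the contraction $f(x^{k+1})-\alpha\le(1-L/U)(f(x^k)-\alpha)$, hence $f(x^k)-\alpha\le e^{-kL/U}(f(x^0)-\alpha)$ after iterating and using $1-x\le e^{-x}$. Solving $e^{-kL/U}(f(x^0)-\alpha)\le\epsilon$ for $k$ produces the stated bound \eqref{equ:gradientdescent_steps}.

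For the gradient-norm criterion I would convert the function-value guarantee back into a gradient bound via smoothness: the same descent computation together with $f(x^{k+1})\ge\alpha$ gives $\|\nabla f(x^k)\|_{\ell_2}^2\le 2U\,(f(x^k)-\alpha)$. Inserting the geometric decay of $f(x^k)-\alpha$ and requiring the right-hand side to be at most $\delta$ gives $S_\nabla$ of the stated form after taking logarithms.

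The computations are routine; the one point deserving care is that the domain is the constrained box $C=\{\|\mu\|_{\ell_\infty}\le1\}$ rather than all of $\mathbb{R}^m$, so the update is really a projected gradient step $x^{k+1}=\Pi_C(x^k-U^{-1}\nabla f(x^k))$. I would therefore check that the descent lemma and the resulting contraction survive the projection, which they do because $\Pi_C$ is firmly non-expansive and $C$ is convex. This is the only place where the argument departs from the textbook unconstrained statement, and it is where I expect the bookkeeping to be most delicate.
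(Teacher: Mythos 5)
Your proposal is correct and follows essentially the same route as the paper, which gives no derivation at all and simply cites these as standard results from Boyd--Vandenberghe, Section 9.1.2; you have supplied the textbook ingredients (descent lemma from $\nabla^2 f\le UI$, the Polyak--{\L}ojasiewicz inequality from $\nabla^2 f\ge LI$, and co-coercivity for \eqref{equ:gradient-parameters}) that the citation points to, and your remark about needing the projected update on the box $C$ is a point the paper glosses over. One minor observation: your (correct) chain $\|\nabla f(x^k)\|_{\ell_2}^2\le 2U\,(f(x^k)-\alpha)$ yields $S_\nabla\le \frac{U}{L}\log\left(2U(f(x^0)-\alpha)/\delta\right)$ with a $2U$ rather than the $2L$ appearing in the stated bound, so the constant inside that logarithm in the proposition looks like a typo (since $U\ge L$, the $2L$ version would be slightly too optimistic) rather than a gap in your argument.
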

\begin{proof}
These are all standard results that can be found e.g. in \cite[Section 9]{boyd2004convex}.
\end{proof}

To see the relevance of these results for the maximum entropy problem, we recall the following Lemma:
\begin{lem}\label{lem:derivatives}
Let $C=B_{\ell_{\infty}}(0,1)\subset\R^m$, let $\sigma(\lambda)\in\D_{d^n}$ be a Gibbs state with respect to a set of operators $\mathcal{E}$ at inverse temperature $\beta$ and define $f:C\to \RR$ as in Eq.~\eqref{ffunction}. Then:
\begin{align}\label{firstderive}
\lb\nabla f(\mu)\rb_i=\beta\tr{(\sigma({\lambda})-\sigma({\mu}))E_i}
\end{align}
and 
\begin{align}\label{Hessian}
&\lb\nabla^2 f(\mu)\rb_{ij}=\frac{\beta^2}{2}  \tr{\{E_j,\Phi_{H(\mu)}(E_i)\}\sigma(\mu)}  -\beta^2\,e_i(\mu)e_j(\mu)\,,
\end{align}
with 
\begin{align*}
\Phi_{H(\mu)}(E_i)=\int\limits_{-\infty}^{+\infty}\nu_\beta(t)e^{-iH(\mu)t}E_ie^{iH(\mu)t}dt
\end{align*}
where $\nu_\beta(t)$ is a probability density function whose Fourier transform is given by:
\begin{align*}
\hat{\nu}_\beta(\omega)=\frac{2\tanh\left(\frac{\beta\omega}{2}\right)}{\beta\omega}\, .
\end{align*}
\end{lem}
\begin{proof}
The quantum belief propagation theorem~\cite{Hastings_2007} states that:
\begin{align*}
\frac{\partial}{\partial \lambda_i} e^{-\beta H(\lambda)}=-\frac{\beta}{2} \{e^{-\beta H(\lambda)},\Phi_{H(\lambda)}(E_i)\}\,.
\end{align*}
The  claim then follows from a simple computation.
\end{proof}
Thus, we see that in order to compute the gradient of the target function $f$ for the maximum entropy problem, we simply need to compute the expectation values of observables $\mathcal{E}$ on the current state and on the target state. Moreover, the Hessian is given by a generalized covariance matrix of the quantum Gibbs state.
That this should indeed be interpreted as a covariance matrix is most easily seen by considering commuting Hamiltonians. Then indeed we have:
\begin{align*}
\lb\nabla^2 f(\mu)\rb_{ij}=\beta^2 \left[\tr{\sigma(\mu)E_iE_j}-e_i(\mu)e_j(\mu)\right]\,.
\end{align*}
For any Gibbs state it holds that:
\begin{prop}\label{upperconstant}
For all  $\mu\in B_{\ell_{\infty}}(0,1)\subset\R^m$, inverse temperature $\beta>0$ and set of operators $\mathcal{E}$ of cardinality $m$, we have:
\begin{align*}
    \nabla^2 f(\mu)\leq 2\beta^2 m\,I\,.
\end{align*}
\end{prop}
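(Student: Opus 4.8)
The plan is to bound the largest eigenvalue of the Hessian directly by evaluating its quadratic form on an arbitrary vector $v\in\R^m$ and exploiting the explicit expression from Lemma~\ref{lem:derivatives}. Fix $v$ with $\|v\|_{\ell_2}=1$ and set $A:=\sum_{i=1}^m v_iE_i$, which is self-adjoint. Since $\Phi_{H(\mu)}$ is linear, contracting the entries in Eq.~\eqref{Hessian} against $v_iv_j$ collapses to
\begin{align*}
v^{T}\nabla^2 f(\mu)\,v=\frac{\beta^2}{2}\,\tr{\{A,\Phi_{H(\mu)}(A)\}\,\sigma(\mu)}-\beta^2\,\tr{A\,\sigma(\mu)}^2 .
\end{align*}
The first term can be rewritten as $\beta^2\,\Re\tr{A\,\Phi_{H(\mu)}(A)\,\sigma(\mu)}$ using that $A$ and $\Phi_{H(\mu)}(A)$ are self-adjoint and $\sigma(\mu)$ is a state, while the second term is manifestly nonnegative.

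The key structural fact I would use is that $\Phi_{H(\mu)}$ is unital and contracts the operator norm. Indeed, $\Phi_{H(\mu)}(A)=\int \nu_\beta(t)\,e^{-iH(\mu)t}A\,e^{iH(\mu)t}\,dt$ is a convex combination (weighted by the probability density $\nu_\beta$) of unitary conjugates of $A$, so $\Phi_{H(\mu)}(I)=I$ and, by the triangle inequality, $\|\Phi_{H(\mu)}(A)\|_\infty\le\int\nu_\beta(t)\,\|e^{-iH(\mu)t}A\,e^{iH(\mu)t}\|_\infty\,dt=\|A\|_\infty$. Combining this with H\"older's inequality gives $|\tr{A\,\Phi_{H(\mu)}(A)\,\sigma(\mu)}|\le\|A\,\Phi_{H(\mu)}(A)\|_\infty\le\|A\|_\infty^2$, and the operator norm of $A$ is controlled by $\|A\|_\infty\le\sum_i|v_i|\,\|E_i\|_\infty\le\|v\|_{\ell_1}\le\sqrt{m}\,\|v\|_{\ell_2}=\sqrt{m}$.

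Putting the pieces together, bounding the first term by $\beta^2\|A\|_\infty^2\le\beta^2 m$ and the (nonnegative) second term by $\beta^2\|A\|_\infty^2\le\beta^2 m$ in absolute value yields $v^{T}\nabla^2 f(\mu)\,v\le 2\beta^2 m$, and since $v$ was an arbitrary unit vector this is exactly $\nabla^2 f(\mu)\le 2\beta^2 m\,I$. I expect no serious obstacle here: the only delicate point is the operator-norm contractivity of $\Phi_{H(\mu)}$, which rests on $\nu_\beta$ being a genuine probability density (nonnegative and normalized), as recorded in Lemma~\ref{lem:derivatives}; everything else is H\"older and a norm comparison. I would also remark that since the subtracted term $\beta^2\tr{A\,\sigma(\mu)}^2$ is nonnegative, simply discarding it rather than bounding its absolute value already gives the sharper estimate $\nabla^2 f(\mu)\le\beta^2 m\,I$; equivalently, one can note that $\nabla^2 f=\nabla^2\log\mathcal{Z}$ is positive semidefinite by convexity of the log-partition function, so $\lambda_{\max}(\nabla^2 f(\mu))\le\tr{\nabla^2 f(\mu)}\le m\beta^2$ after bounding each diagonal entry as above.
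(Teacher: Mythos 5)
Your argument is correct, and it reaches the stated bound (in fact a slightly sharper one) by a different linear-algebra route than the paper. The paper works entrywise: it uses the same Hessian formula from Lemma~\ref{lem:derivatives}, bounds each matrix element by $|(\nabla^2 f(\mu))_{ij}|\le 2\beta^2$ via H\"older and the fact that $\nu_\beta$ is a probability density, and then invokes Gershgorin's circle theorem to conclude $\nabla^2 f(\mu)\le 2\beta^2 m\,I$. You instead bound the quadratic form directly, contracting the Hessian against a unit vector $v$, forming $A=\sum_i v_iE_i$, and controlling $\|A\|_\infty\le\|v\|_{\ell_1}\le\sqrt{m}$ together with the operator-norm contractivity of $\Phi_{H(\mu)}$; discarding the subtracted variance term even yields the improved constant $\beta^2 m$. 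Both proofs rest on the same two analytic facts (H\"older's inequality and $\Phi_{H(\mu)}$ being an average of unitary conjugations), so the difference is purely in how the $m\times m$ matrix is handled. What the paper's entrywise/Gershgorin formulation buys is that it generalizes cleanly to the refined setting of Lemma~\ref{lem:covariancematrix_decay}: under exponential decay of correlations the off-diagonal entries decay with distance, and the same Gershgorin argument then gives a system-size-independent upper bound, which your global $\|A\|_\infty\le\sqrt{m}$ estimate would not capture. Conversely, your quadratic-form argument is marginally tighter here and avoids any appeal to Gershgorin. One small caveat: the contractivity of $\Phi_{H(\mu)}$ does hinge on $\nu_\beta\ge 0$, which the paper asserts in Lemma~\ref{lem:derivatives} (it is a property of the belief-propagation kernel), so your reliance on it is legitimate.
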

\begin{proof}
Note that 
\begin{align*}
    \left|\tr{E_i\sigma(\mu)e^{iH(\mu)t}E_je^{-iH(\mu)t}}\right|\leq 1\,,
\end{align*}
by H\"older's inequality, the submultiplicativity and unitary invariance of the operator norm and the fact that $\|E_i\|_\infty\leq1$.
Similarly, we have that $|e_i(\mu)|,|e_j(\mu)|\leq1$. Thus, by Lemma~\ref{lem:derivatives}, $|\lb\nabla^2 f(\mu)\rb_{ij}|\leq 2\beta^2$. As $\nabla^2 f(\mu)$ is an $m\times m$ matrix, it follows from Gershgorin's circle theorem that $\nabla^2 f(\mu)\leq2 \beta^2 m I$.
\end{proof}
The proof above also showcases how exponential decay of correlations can be used to sharpen estimates on the maximal eigenvalue of $\nabla^2 f$, since in that case $\lb\nabla^2 f(\mu)\rb_{ij}$ will have exponentially decaying entries. We discuss this in more detail when we focus on many-body states, for which we also consider the more challenging question of lower bounds.

\subsection{Convergence with approximate gradient computation and expectation values} \label{sec:convergenceapprox}
Proposition~\ref{theogradient} already establishes the convergence of gradient descent whenever we can compute the gradient exactly and have a bound on $L$. Moreover, we see from Lemma~\ref{lem:derivatives} that, in order to compute the gradient of the function $f$ above, it suffices to estimate local expectation values. Moreover, it is a standard result that gradient descent is strictly decreasing for strictly convex problems~\cite{boyd2004convex}. 

However, in many settings it is only possible or desirable to compute the expectation values of quantum Gibbs states approximately. Moreover, the expectation values of the target state are only known up to statistical fluctuations.
It is then not too difficult to see that gradient descent still offers the convergence guarantees if we only approximately compute the gradient. We state the exact convergence guarantees and precision requirement for completeness.

\begin{thm}[Computational complexity and convergence guarantees]\label{thm:comcomplexitygeneral}
Let $\sigma(\lambda)\in\D_{d^n}$ be a quantum Gibbs state at inverse temperature $\beta$ with respect to a set of operators $\mathcal{E}$ and $C_{\mathcal{E}}$ be the computational cost of computing $e'(\mu)$ satisfying
\begin{align*}
\|e'(\mu)-e(\mu)\|_{\ell_2}\leq \delta_\mu
\end{align*}
 for $\mu\in B_{\ell_\infty}(0,1)$ and $\delta_\mu>0$. Moreover, assume that we are given an estimate $e'(\lambda)$ of $e(\lambda)$ satisfying 
\begin{align}\label{equ:good_estimate}
\|e(\lambda)-e'(\lambda)\|_\infty\leq\epsilon
\end{align} 
and that the partition function is strongly convex with parameters $U,L$.
Then gradient descent starting at $\mu=0$ with step size $\frac{1}{cU}$ and input data $e'(\lambda)$ converges to a state $\sigma(\mu_*)$ satisfying: 
\begin{align*}
  \|\sigma(\lambda)-\sigma(\mu_*) \|_{tr}^2 & \leq D(\sigma(\lambda)\|\sigma(\mu_*))+D(\sigma(\mu_*)\|\sigma(\lambda))\\
  &= \cO(\beta\delta_\mu\min\{\sqrt{m},\beta L^{-1}\delta_\mu\}+\beta\epsilon\min\{1,L^{-1}\beta\epsilon\}m). 
\end{align*}
in
\begin{align*}
\cO\left(\min\left\{UC_{\mathcal{E}}\beta^{-2} n\log(d)\delta_\mu^{-2},\frac{UC_{\mathcal{E}}}{L}\log(n\epsilon^{-1})\right\}\right)
\end{align*}
time.
\end{thm}
We will prove this theorem at the end of this section, as before we will need some auxiliary statements. But the reader familiar with basic concepts from convex optimization should feel comfortable to skip them.

\begin{prop}[Convergence of gradient descent with constant relative precision]\label{prop:approx_grad}
Let $\sigma(\lambda)\in\D_{d^n}$ be a quantum Gibbs state at inverse temperature $\beta$ with respect to a set of operators $\mathcal{E}$, and for a Gibbs state $\sigma(\mu)$ let $z(\mu)\in\R^m$ be a vector such that 
\begin{align}\label{equ:small_relative_error}
    \|z(\mu)-\beta(e(\lambda)-e(\mu))\|_{\ell_2}\leq \,\frac{\beta}{4c}\|e(\mu)-e(\lambda)\|_{\ell_{\ell_2}}\,
\end{align}
for some $c>10$.
Then we have that:
\begin{align}\label{equ:convergenceentropy}
    D(\sigma(\lambda)\|\sigma(\mu-\tfrac{z(\mu)}{c U}))-D(\sigma(\lambda)\|\sigma(\mu))\leq -\frac{9\beta^2\|e(\mu)-e(\lambda)\|_{\ell_2}^2}{10\,c\,U}\,,
\end{align}
where $U$ is a uniform bound on the operator norm of the Hessian of the function $f$ defined in Eq.~(\ref{ffunction}).
\end{prop}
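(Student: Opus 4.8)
The plan is to recognize this as a one-step \emph{descent lemma} for gradient descent run with an inexact gradient, and then to convert the resulting decrease of the dual objective $f$ into a decrease of relative entropy via the already-established identities. First I would record the two ingredients I can reuse. By Lemma~\ref{lem:derivatives} the gradient of $f$ is exactly $\nabla f(\mu)=\beta\,(e(\lambda)-e(\mu))$, so that $\|\nabla f(\mu)\|_{\ell_2}=\beta\,\|e(\lambda)-e(\mu)\|_{\ell_2}$ and the hypothesis~\eqref{equ:small_relative_error} reads precisely as a relative-error bound $\|z(\mu)-\nabla f(\mu)\|_{\ell_2}\le \tfrac{1}{4c}\,\|\nabla f(\mu)\|_{\ell_2}$. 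By Proposition~\ref{upperconstant}, or equivalently the assumed uniform bound $U$, the Hessian satisfies $\nabla^2 f\le U\,I$, so $f$ is $U$-smooth and the standard quadratic upper bound
\begin{align*}
 f(y)\le f(\mu)+\langle\nabla f(\mu),y-\mu\rangle+\tfrac{U}{2}\|y-\mu\|_{\ell_2}^2
\end{align*}
holds along the update segment.

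Next I would substitute the inexact step $y=\mu-\tfrac{z(\mu)}{cU}$, so that $y-\mu=-\tfrac{1}{cU}z(\mu)$ and
\begin{align*}
 f(y)-f(\mu)\le -\frac{1}{cU}\,\langle\nabla f(\mu),z(\mu)\rangle+\frac{1}{2c^2U}\,\|z(\mu)\|_{\ell_2}^2 .
\end{align*}
Writing $g:=\nabla f(\mu)$ and $z=g+e$ with $\|e\|_{\ell_2}\le\tfrac{1}{4c}\|g\|_{\ell_2}$, Cauchy--Schwarz gives $\langle g,z\rangle\ge(1-\tfrac{1}{4c})\|g\|_{\ell_2}^2$ and $\|z\|_{\ell_2}^2\le(1+\tfrac{1}{2c}+\tfrac{1}{16c^2})\|g\|_{\ell_2}^2$. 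Collecting terms yields
\begin{align*}
 f(y)-f(\mu)\le -\frac{\|g\|_{\ell_2}^2}{cU}\Big[1-\tfrac{3}{4c}-\tfrac{1}{4c^2}-\tfrac{1}{32c^3}\Big].
\end{align*}
A short numerical check shows the bracket increases to $1$ in $c$ and exceeds $\tfrac{9}{10}$ for every $c>10$ (it is about $0.922$ already at $c=10$), which is exactly where the constant $\tfrac{9}{10}$ in the statement originates.

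Finally I would translate back to entropies: since $g=\beta(e(\lambda)-e(\mu))$ we have $\|g\|_{\ell_2}^2=\beta^2\|e(\mu)-e(\lambda)\|_{\ell_2}^2$, and Lemma~\ref{propdiffrelent} identifies $f(y)-f(\mu)=D(\sigma(\lambda)\|\sigma(y))-D(\sigma(\lambda)\|\sigma(\mu))$ with $y=\mu-\tfrac{z(\mu)}{cU}$, which is precisely the claimed inequality~\eqref{equ:convergenceentropy}.

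The argument is essentially bookkeeping, so I do not anticipate a genuinely hard step; the one point requiring care is domain feasibility. The smoothness bound of Proposition~\ref{upperconstant} and the entropy identity of Lemma~\ref{propdiffrelent} are stated on $B_{\ell_\infty}(0,1)$, whereas the raw update $y=\mu-\tfrac{z(\mu)}{cU}$ need not remain in that ball. I would handle this either by noting that the step size $1/(cU)$ is small enough that $y$ stays in the domain in the regime of interest, or by replacing the plain step with its Euclidean projection onto $B_{\ell_\infty}(0,1)$ and invoking the non-expansiveness of the projection, which can only improve the decrease. This feasibility check is the only place where I would slow down and argue explicitly.
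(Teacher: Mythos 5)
Your proposal is correct and follows essentially the same route as the paper: the one-step descent lemma from the $U$-smoothness bound, Cauchy--Schwarz together with the relative-error hypothesis to control the cross and quadratic terms (your bracket $1-\tfrac{3}{4c}-\tfrac{1}{4c^2}-\tfrac{1}{32c^3}$ is algebraically identical to the paper's $1-\tfrac{1}{4c}-\tfrac{(1+(4c)^{-1})^2}{2c}$), and Lemma~\ref{propdiffrelent} to convert the decrease of $f$ into a decrease of relative entropy. Your closing remark on domain feasibility of the update $\mu-\tfrac{z(\mu)}{cU}$ is a point the paper's proof passes over silently, and your proposed fixes are reasonable.
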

\begin{proof}
From a Taylor expansion and strong convexity we have for any two points $\mu,\xi$ that:
\begin{align*}
    f(\xi)\leq f(\mu)+\scalar{\nabla f(\mu)}{(\xi-\mu)}+\frac{U}{2}\|\xi-\mu\|_{\ell_2}^2.
\end{align*}
Note that $\nabla f(\mu)=\beta\lb e(\lambda)-e(\mu)\rb$ by Eq.~\eqref{firstderive}.
Setting $\xi=\mu-\frac{z}{cU}=\mu+\frac{1}{cU}\lb-\nabla f(\mu)+\nabla f(\mu)-z\rb$ we obtain:
\begin{align*}
   &f\lb\mu-\frac{z}{cU}\rb\\
   &~~~\leq f(\mu)-\frac{1}{cU}\|\nabla f\|_{\ell_2}^2+\frac{1}{cU}\scalar{\nabla f(\mu)}{\nabla f(\mu)-z}+\frac{1}{2c^2U}\|-\nabla f(\mu)+\nabla f(\mu)-z\|_{\ell_2}^2\\
   &~~~\leq f(\mu)-\frac{1}{cU}\|\nabla f\|_{\ell_2}^2+\frac{1}{cU}\|\nabla f(\mu)\|_{\ell_2}\,\|\nabla f(\mu)-z\|_{\ell_2}+\frac{1}{2c^2U}\lb\|\nabla f(\mu)\|_{\ell_2}+\|\nabla f(\mu)-z\|_{\ell_2}\rb^2,
\end{align*}
where in the last step we used the Cauchy-Schwarz inequality. By our assumption in Eq.~\eqref{equ:small_relative_error} for $z\equiv z(\mu)$ we have:
\begin{align*}
    f(\mu)&-\frac{1}{cU}\|\nabla f\|_{\ell_2}^2+\frac{1}{cU}\|\nabla f(\mu)\|_{\ell_2}\|\nabla f(\mu)-z\|_{\ell_2}+\frac{1}{2c^2U}\lb\|\nabla f(\mu)\|_{\ell_2}+\|\nabla f(\mu)-z\|_{\ell_2}\rb^2\\
    &\leq f(\mu)-\frac{1}{cU}\|\nabla f\|_{\ell_2}^2+\frac{1}{4c^2U}\|\nabla f(\mu)\|_{\ell_2}^2+\frac{(1+(4c)^{-1})^2}{2c^2U}\|\nabla f(\mu)\|_{\ell_2}^2\\
    &=f(\mu)-\frac{1}{cU}\lb1-\frac{1}{4c}-\frac{(1+(4c)^{-1})^2}{2c}\rb\|\nabla f(\mu)\|_{\ell_2}^2
\end{align*}
and it can be readily checked that $\frac{1}{4c}+\frac{(1+(4c)^{-1})^2}{2c}\leq\frac{1}{10}$ for $c\geq 10$. To conclude the proof, note that by Lemma \ref{propdiffrelent}:
\begin{align*}
    D(\sigma(\lambda)\|\sigma(\mu-\tfrac{z}{c U}))-D(\sigma(\lambda)\|\sigma(\mu))=f(\mu-\tfrac{z}{c U}))-f(\mu)\,,
\end{align*}
and insert $\|\nabla f(\mu)\|_{\ell_2}^2=\beta^2\|e(\mu)-e(\lambda)\|_{\ell_2}^2$.
\end{proof}

Thus, we see that we make constant progress for the gradient descent algorithm if we only compute the derivative up to constant relative precision. We show now how to pick our stopping criterium based on approximate computations of the gradient which ensure convergence in polynomial time.
\begin{prop}\label{prop:convergence_grad}
Let $\sigma(\lambda)\in\D_{d^n}$ be a quantum Gibbs state at inverse temperature $\beta$ with respect to a set of operators $\mathcal{E}$. Suppose that at each time step $t$ of gradient descent we compute an estimate $e'(\mu_t)$ of  $e(\mu_t)$ that satisfies 
\begin{align*}
    \|e'(\mu_t)-e(\mu_t)\|_{\ell_{\ell_2}}\leq \delta_\mu\,,
\end{align*}
 and set the stopping criterion to be:
\begin{align*}
    \|e(\lambda)-e'(\mu_*)\|_{\ell_2}< (4c+1)\delta_\mu\,.
\end{align*}
for some constant $c>10$. Then gradient descent starting at $\mu=0$ with update rule $\mu_{t+1}:=\mu_t-\frac{\beta (e(\lambda)-e'(\mu_t))}{cU}$
will converge to a state $\sigma(\mu_*)$ satisfying:
\begin{align*}
  \|\sigma(\lambda)-\sigma(\mu_*) \|_{tr}^2\leq D(\sigma(\lambda)\|\sigma(\mu_*))+D(\sigma(\mu_*)\|\sigma(\lambda))\leq 2(4c+1)\beta\delta_\mu \sqrt{m}
\end{align*}
after at most $\cO(U\beta^{-2} n\log(d)\delta_\mu^{-2})$ iterations. 
\end{prop}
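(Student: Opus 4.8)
The plan is to run the approximate-gradient analysis of Proposition~\ref{prop:approx_grad} along the entire trajectory and then cash in the stopping criterion through Proposition~\ref{prop:recoveryguarantee}. First I would identify the gradient surrogate: since $\nabla f(\mu_t)=\beta(e(\lambda)-e(\mu_t))$ by Lemma~\ref{lem:derivatives}, the natural estimate is $z(\mu_t)=\beta(e(\lambda)-e'(\mu_t))$, whose error obeys $\|z(\mu_t)-\nabla f(\mu_t)\|_{\ell_2}=\beta\|e'(\mu_t)-e(\mu_t)\|_{\ell_2}\le\beta\delta_\mu$. The crucial observation is that the stopping criterion, phrased in terms of the \emph{estimated} vector $e'(\mu_t)$, has been calibrated so that the \emph{relative} precision hypothesis~\eqref{equ:small_relative_error} holds at every iterate that is not a stopping iterate. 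Indeed, as long as $\|e(\lambda)-e'(\mu_t)\|_{\ell_2}\ge(4c+1)\delta_\mu$, the triangle inequality gives $\|e(\lambda)-e(\mu_t)\|_{\ell_2}\ge(4c+1)\delta_\mu-\delta_\mu=4c\delta_\mu$, which rearranges to $\beta\delta_\mu\le\tfrac{\beta}{4c}\|e(\mu_t)-e(\lambda)\|_{\ell_2}$, exactly~\eqref{equ:small_relative_error}. Hence~\eqref{equ:convergenceentropy} applies at each non-terminal step and, using the same lower bound $\|e(\mu_t)-e(\lambda)\|_{\ell_2}\ge4c\delta_\mu$, furnishes a decrease of the relative entropy that is uniform in $t$, of order $\beta^2c\delta_\mu^2/U$.

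Next I would convert this per-step decrease into an iteration count by a potential/telescoping argument. Since $f(\mu)-\min f$ differs from $D(\sigma(\lambda)\|\sigma(\mu))$ only by an additive constant (Lemma~\ref{propdiffrelent}), the quantity $D(\sigma(\lambda)\|\sigma(\mu_t))$ is nonnegative, drops by at least $\Omega(\beta^2c\delta_\mu^2/U)$ per step, and starts from $D(\sigma(\lambda)\|\sigma(0))=n\log(d)-S(\sigma(\lambda))\le n\log(d)$ because $\sigma(0)=I/d^n$ is maximally mixed. Dividing the total admissible decrease $n\log(d)$ by the per-step decrease yields the claimed bound $\cO(U\beta^{-2}n\log(d)\delta_\mu^{-2})$ on the number of iterations before the stopping criterion triggers.

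Finally, for the recovery guarantee I would feed the terminal iterate $\mu_*$ into Proposition~\ref{prop:recoveryguarantee}. At termination $\|e(\lambda)-e'(\mu_*)\|_{\ell_2}<(4c+1)\delta_\mu$, so another application of the triangle inequality bounds the true gap by $\|e(\lambda)-e(\mu_*)\|_{\ell_2}<(4c+1)\delta_\mu+\delta_\mu$. Combining the identity~\eqref{equ:rel_ent_recovery} with Cauchy--Schwarz and the a priori box bound $\|\lambda-\mu_*\|_{\ell_2}\le 2\sqrt{m}$ (both parameters lie in $B_{\ell_\infty}(0,1)$) then gives $D(\sigma(\lambda)\|\sigma(\mu_*))+D(\sigma(\mu_*)\|\sigma(\lambda))\le 2\beta\sqrt{m}\,\|e(\lambda)-e(\mu_*)\|_{\ell_2}$, which matches the stated bound up to the precise value of the numerical constant; the trace-distance inequality is just the Pinsker step already incorporated into~\eqref{equ:rel_ent_recovery}.

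The main obstacle I anticipate is precisely the bookkeeping of the first paragraph: one must guarantee that the relative-precision regime~\eqref{equ:small_relative_error} is never violated while the algorithm is still running, which is exactly why the stopping threshold is placed one $\delta_\mu$ above the $4c\delta_\mu$ scale that the hypothesis of Proposition~\ref{prop:approx_grad} demands. A secondary technical point is verifying that the iterates remain in the feasible box $B_{\ell_\infty}(0,1)$ (or otherwise invoking a projection onto it), so that the uniform Hessian bound $U$ and Lemma~\ref{propdiffrelent} remain valid throughout the descent.
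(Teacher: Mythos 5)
Your proposal is correct and follows essentially the same route as the paper's proof: verifying that the stopping threshold guarantees the relative-precision hypothesis of Proposition~\ref{prop:approx_grad} at every non-terminal iterate via the (reverse) triangle inequality, telescoping the per-step relative-entropy decrease against the initial bound $D(\sigma(\lambda)\|\sigma(0))\le n\log(d)$, and invoking Proposition~\ref{prop:recoveryguarantee} with Cauchy--Schwarz at termination. The minor constant slack you flag in the final bound (the extra $\delta_\mu$ from converting $e'(\mu_*)$ back to $e(\mu_*)$) is present in the paper's own argument as well and does not affect the stated asymptotics.
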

\begin{proof}
First, we show that the relative precision bound required for Proposition~\ref{prop:approx_grad} holds under these assumptions. By our choice of the stopping criterium, at each time step we have the property that, while we did not stop,
\begin{align*}
    \|e(\lambda)-e(\mu_t)\|_{\ell_2}&=\|e(\mu)-e'(\mu_t)+e'(\mu_t)-e(\lambda)\|_{\ell_2}\\
   & \geq \|e'(\mu_t)-e(\lambda)\|_{\ell_2}-\|e(\mu_t)-e'(\mu_t)\|_{\ell_2}\\
    &\geq (4c+1)\delta_\mu-\delta_\mu\\
    &=4c\delta_\mu
\end{align*}
by the reverse triangle inequality. As we assumed we have that $\|e'(\mu_t)-e(\mu_t)\|_{\ell_2}\leq \delta_\mu$, it follows that $(4c)^{-1}\|e(\lambda)-e(\mu_t)\|_{\ell_2}\geq \|(e'(\mu_t)-e(\lambda))-(e(\mu_t)-e(\lambda))\|_{\ell_2}$. Multiplying the inequality by $\beta$, we see that the conditions of Proposition~\ref{prop:approx_grad} are satisfied for $z(\mu_t):=\beta (e(\lambda)-e'(\mu_t))$. Let us now show the convergence. By our choice of initial point, we have that:
\begin{align*}
    D(\sigma(\lambda)\|\sigma(0))\leq n\log(d)\,.
\end{align*}
Now, suppose that we did not stop before $T$ iterations.
It follows from a telescopic sum argument and Proposition~\ref{prop:approx_grad} that:
\begin{align*}
    D(\sigma(\lambda)\|\sigma(\mu_T))\leq n\log(d)-T\frac{9\beta^{2}(4c+1)^2\delta_\mu^2}{10\,c\,U}\,,
\end{align*}
since $\|e'(\mu_t)-e(\lambda)\|\ge (4c+1)\delta_\mu$ at all iterations because we did not halt. As the relative entropy is positive, it follows that $T=\cO(\beta^{-2}U c^{-1}n\log(d)\delta_\mu^{-2})$ before the stopping criterium is met. The recovery guarantee whenever the stopping criterium is met follows from Proposition~\ref{prop:recoveryguarantee}, the Cauchy-Schwarz inequality and the equivalence of norms $\|\lambda-\mu\|_{\ell_2}\le \sqrt{m}\|\lambda-\mu\|_{\ell_\infty}\le 2 \sqrt{m}$.
\end{proof}
Since we proved in Proposition \ref{upperconstant} that $U=\cO(\beta^2m)$, it follows that the number of iterations is $\cO(nm)$. Thus, we see that having a lower bound on the Hessian is not necessary to ensure convergence, but it can speed it up exponentially: 
\begin{prop}[Exponential convergence of gradient descent with approximate gradients]\label{propLU}
In the same setting as Proposition~\ref{prop:approx_grad} we have:
\begin{align}\label{equ:exponentialconvergence}
    f\lb\mu-\frac{z(\mu)}{cU}\rb-f(\lambda)\leq \lb1-\frac{18L}{10cU}\rb (f(\mu)-f(\lambda)).
\end{align}
In particular, gradient descent with approximate gradient computations starting at $\mu_0=0$ converges after $\cO\lb\frac{U}{L}\log(n\epsilon^{-1})\rb$
iterations to $\mu$ such that $f(\mu)-f(\lambda)\leq \epsilon$.
\end{prop}
\begin{proof}
For any strongly convex function $f$ and points $\mu,\xi\in C$ we have that:
\begin{align*}
f(\xi)\geq f(\mu)+\scalar{\nabla f(\mu)}{(\xi-\mu)}+\frac{L}{2}\|\mu-\xi\|_{\ell_2}^2.
\end{align*}
As explained in~\cite[Chapter 9]{boyd2004convex}, the R.H.S. of the equation above is a convex quadratic function of $\xi$  for $\mu$ fixed.  One can then easily show that its  minimum is achieved at $\tilde{\xi}=\mu-\frac{1}{L}\nabla f(\mu)$. From this we obtain:
\begin{align*}
    f\lb\mu\rb-f(\lambda)\geq-\frac{1}{2L}\|\nabla f(\mu)\|_{\ell_2}^2=-\frac{\beta^2\|e(\mu)-e(\lambda)\|_{\ell_2}^2}{2L}\,,
\end{align*}
where the last identity follows from Eq.~\eqref{firstderive}. By subtracting $f(\lambda)$ from both sides of the inequality~\eqref{equ:convergenceentropy} in Proposition~\ref{prop:approx_grad} and rearranging the terms we have that:
\begin{align*}
    f\lb\mu-\frac{z(\mu)}{cU}\rb-f(\lambda) &\leq f(\mu)-f(\lambda) -\frac{9\beta^2\|e(\mu)-e(\lambda)\|_{\ell_2}^2}{10cU}\\&\leq f\lb\mu\rb-f(\lambda) -\frac{18L}{10cU}(f\lb\mu\rb-f(\lambda))\\
    &=\lb1-\frac{18L}{10cU}\rb(f(\mu)-f(\lambda))\,.
\end{align*}
This yields the claim in Eq.~\eqref{equ:exponentialconvergence}. To obtain the second claim, note that applying Eq.~\eqref{equ:exponentialconvergence} iteratively yields that after $k$ iterations we have that, for $\mu_{k}=\mu_{k-1}-\frac{z(\mu_{k-1})}{cU}$\,,
\begin{align*}
   \lb f(\mu_k)-f(\lambda) \rb\leq \lb1-\frac{18L}{10cU}\rb^k\lb f(0)-f(\lambda)\rb.
\end{align*}
By our choice of initial point and Lemma \ref{propdiffrelent} we have that $f(0)-f(\lambda)=\cO(n)$,
which yields the claim solving for $k$ and noting that $-\log\lb1-\frac{18L}{10cU}\rb= \Omega(\frac{L}{U})$.
\end{proof}

\begin{rem}[Comparison to mirror descent]
It is also worth noting that the convergence guarantees of Proposition~\ref{prop:convergence_grad} and update rules of gradient descent are very similar to the ones of mirror descent with the von Neumann entropy as potential, another algorithm used for learning of quantum states~\cite{Aaronson_2018,Youssry_2019,Brandao2017b,brandao_fast_2020}. In this context, mirror descent would use a similar update rule. However, instead of computing the whole gradient, i.e. all expectation values of the basis, for one iteration, mirror descent just requires us to find one $i$ such that $|e_i(\lambda)-e_i(\mu)|\geq \delta$ and updates the Hamiltonian in the direction $i$. This implies that the algorithm can be run online while we still estimate some other $e_i$, but we will not analyze this variation more in detail here.
\end{rem}

Finally, we assumed so far that we knew the expectation values of the target state, $e(\lambda)$, exactly. However, it follows straightforwardly from Proposition~\ref{prop:convergence_grad} that knowing each expectation value up to an error $\epsilon$ is sufficient to ensure that the additional error due to statistical fluctuations is at most $\epsilon m$. More precisely, if we have that $\|e(\lambda)-e'(\lambda)\|_\infty\leq\epsilon$ for some $\epsilon>0$, then any Gibbs state $\sigma(\mu_*)$ satisfying $\|e(\mu_*)-e'(\lambda)\|_{\ell_2}\leq\delta$ satisfies:
\begin{align*}
    D(\sigma(\lambda)\|\sigma(\mu_*))+D(\sigma(\mu_*)\|\sigma(\lambda))\leq 2\beta \delta\sqrt{m}+2\beta\epsilon m
\end{align*}
by Proposition~\ref{prop:recoveryguarantee} and a Cauchy-Schwarz inequality. With these statements at hand we are finally ready to prove Thm.~\ref{thm:comcomplexitygeneral}.
\begin{proof}[Proof of Thm.~\ref{thm:comcomplexitygeneral}]
We will show in Propositions \ref{prop:convergence_grad}, \ref{propLU} that under the conditions outlined above, the maximum entropy problem will converge to a $\mu_*$ that satisfies:
\begin{align*}
\|e'(\lambda)-e(\mu_*)\|_{\ell_2}\leq (4c+1)\delta_\mu.
\end{align*}
Without making any assumptions on $L$ we can then bound
\begin{align*}
 D(\sigma(\lambda)\|\sigma(\mu_*))+D(\sigma(\mu_*)\|\sigma(\lambda)&= \beta\, |\langle\lambda-\mu_*|e(\lambda)-e(\mu_*)\rangle|\\&\leq \beta\, (|\langle\lambda-\mu_*|e(\lambda)-e(\mu_*)\rangle|+|\langle\lambda-\mu_*|e'(\lambda)-e(\mu_*)\rangle|)\\
 &\leq (4c+1)\delta_{\mu}\sqrt{m}+2\beta\epsilon m
\end{align*}
by H\"older inequality and our assumptions on $e'(\lambda)$. Let us now discuss how strong convexity can improve these estimates. First note that by strong convexity and Cauchy-Schwarz we have:
\begin{align*}
 D(\sigma(\lambda)\|\sigma(\mu_*))+D(\sigma(\mu_*)\|\sigma(\lambda)&= \beta\, |\langle\lambda-\mu_*|e(\lambda)-e(\mu_*)\rangle|\\
 &\leq 
 \beta\|e(\lambda)-e(\mu_*)\|_{\ell_2}\|\lambda-\mu_*\|_{\ell_2}\\
 &\leq L^{-1}\beta^2\|e(\lambda)-e(\mu_*)\|_{\ell_2}^2\\
 &\leq L^{-1}\beta^2\left( \|e'(\lambda)-e(\mu_*)\|_{\ell_2}+\|e(\lambda)-e'(\lambda)\|_{\ell_2}\right)^2
\end{align*} 
which yields the claim.
\end{proof}

In short, we see that we can perform the recovery by simply computing the gradient approximately. In particular, as already hinted at in~\cite{2004.07266}, this implies that recent methods developed to approximately compute the partition function of high-temperature quantum Gibbs states can be used to perform the postprocessing in polynomial time~\cite{kliesch_locality_2014,PhysRevLett.124.220601,harrow2020classical,kuwahara2020gaussian}. %
This and other methods to compute the gradient are discussed in more detail in Sec.~\ref{sec:regimesefficient}.
Furthermore, it should be noted that usually $L=\Omega(\beta^{-2})$ in the high temperature regime, making the bound independent of $\beta$ for such states. We refer to Sec.~\ref{sec:strong_convexity} for a summary of the cases for which bounds on $L$ are known.

\section{Lipschitz constants and transportation cost inequalities}\label{Transportcost}
In this section, we identify conditions under which it is possible to estimate all expectation values of $k$-local observables up to an error $\epsilon$ by measuring $\cO(\textrm{poly}(k,\log(n),\epsilon^{-1}))$ copies of it, where $n$ is the system size, which constitutes an exponential improvement in some regimes.
To obtain this result, we combine the maximum entropy method introduced in Section \ref{sec:max-entropyprin} with techniques from quantum optimal transport. 
In order to formalize and prove the result claimed, we resort to transportation cost inequalities and the notion of Lipschitz constants of observables, which we now introduce.

\subsection{Lipschitz constants and Wasserstein metrics}\label{sec:lipsch}
Transportation cost inequalities, introduced by Talagrand in the seminal paper~\cite{Talagrand_1996},  constitute one of the strongest tools available to show concentration of measure inequalities. In the quantum setting, their study was initiated in~\cite{carlen2014analog,carlen2017gradient,Rouz2019,palma_optimal}. Here we are going to show how they can also be used in the context of quantum tomography. On a high level, a transportation cost inequality for a state $\sigma$ quantifies by how much the relative entropy with respect to another state $\rho$ is a good proxy to estimate to what extent the expectation values of sufficiently regular observables differ on the states. As maximum entropy methods allow for a straightforward control of the convergence of the learning in relative entropy (cf. Section \ref{sec:max-entropyprin}), they can be combined to derive strong recovery guarantees. But first we need to define what we mean by a regular observable.

We start by a short discussion of Lipschitz constants and the Wasserstein-1 distance. To obtain an intuitive grasp of these concepts, one way is to first recall the variational formulation of the trace distance of two quantum states $\sigma,\rho$:
\begin{align*}
\|\rho-\sigma\|_{tr}=\sup\limits_{P=P^\dagger, \|P\|_\infty\leq1}\tr{P(\rho-\sigma)}.
\end{align*}
Seeing probability distributions as diagonal quantum states, we recover the variational formulation of the total variation distance by noting that we may restrict to diagonal operators $P$. Thus, the total variation distance quantifies by how much the expectation values of arbitrary bounded functions can differ under the two distributions. However, in many situations we are not interested in expectation values of arbitrary bounded observables, but rather observables that are sufficiently regular. E.g., most observables of physical interest are (quasi)-local. Thus, it is natural to look for distance measures between quantum states that capture the notion that two states do not differ by much when restricting to expectation values of sufficiently regular observables. These concerns are particularly relevant in the context of tomography protocols, as they should be designed to efficiently obtain a state that reflects the expectation values of extensive observables of the system. As we will see, one of the ways of ensuring that the sample complexity of the tomography algorithm reflects the regularity of the observables we wish to recover is through demanding a good recovery in the Wasserstein distance of order $1$~\cite{Rouz2019,palma_optimal}.

In the classical setting~\cite[Chapter 3]{raginsky_concentration_2014}, one way to define a Wasserstein-1 distance between two probability distributions is by replacing the optimization over all bounded diagonal observables by that over those that are sufficiently regular: given a metric $d$ on a sample space $\Omega$, we define the Lipschitz constant of a function $f:\Omega\to\R$ to be:
\begin{align*}
\|f\|_{\operatorname{Lip}}:=\sup\limits_{x,y\in\Omega}\frac{|f(x)-f(y)|}{d(x,y)}\,.
\end{align*}
Denoting the Wasserstein-1 distance by  $W_1$, it is given for two probability measures $p,q$ on $\Omega$ by
\begin{align}\label{equ:variationalw1}
W_1(p,q):=\sup_{f:\|f\|_{\operatorname{Lip}}\leq1}|\mathbb{E}_p(f)-\mathbb{E}_q(f)|\,.
\end{align}
That is, this metric quantifies by how much the expectation values of sufficiently regular functions can vary under $p$ and $q$, in clear analogy to the variational formulation of the trace distance. We refer to~\cite{raginsky_concentration_2014,villani_optimal_2009} for other interpretations and formulations of this metric.

\subsubsection{Quantum Hamming Wasserstein distance}~\label{sec:hammingwass}

It is not immediately clear how to generalize these concepts to noncommutative spaces. There are by now several definitions of transport metrics for quantum states~\cite{Rouz2019,palma_optimal,kiani2021quantum,giacomo_cambyse}. As already noted in the main text, de Palma et al. defined the Lipschitz constant of an observable $O\in\M_{d^n}$ as \cite{palma_optimal}:
\begin{align}\label{equ:definitiongiacomo2}
    \|O\|_{\operatorname{Lip},\square}=\sqrt{n}\max\limits_{1\leq i\leq n}\underset{\substack{\rho,\sigma\in\D_{d^n}\\\operatorname{tr}_i[\rho]=\operatorname{tr}_i[\sigma]}}{\max}\,\tr{O(\rho-\sigma)}\,.
\end{align}
That is, the Lipschitz constant quantifies the amount by which the expectation value of an observable can change when evaluated on two states that only differ on one site. This is in analogy with the Lipschitz constants induced by the Hamming distance on the hypercube, so we denote it with $\square$. Note that in our definition we added the $\sqrt{n}$  factor, which will turn out to be convenient later. Armed with this definition, we can immediately obtain an analogous definition of the Wasserstein distance in Eq.~\eqref{equ:variationalw1} for two states $\rho,\sigma$:
\begin{align}\label{equ:variationalw1quantumgiacomo}
{W}_{1,\square}(\rho,,\sigma):=\sup_{O=O^\dagger,\|O\|_{\operatorname{Lip},\square}\leq1}\tr{O(\rho-\sigma)}\,.
\end{align}
The authors of \cite{palma_optimal} also put forth the following equivalent expression for the norm: 
\begin{align}\label{wassersteinnormDP}
{W}_{1,\square}(\rho,\sigma)=\frac{\min \left\{\sum_{i=1}^{n} \|X^{(i)}\|_1:\rho-\sigma=\sum_{i=1}^{n} X^{(i)}, X^{(i)} \in \mathcal{M}_{d^n}^{\operatorname{sa}}, \operatorname{tr}_{i} [X^{(i)}]=0\right\}}{2\sqrt{n}}\,.
\end{align}
It follows from an application of H\"older's inequality combined with the variational formulation in Eq.~\eqref{equ:definitiongiacomo2} that $\|O\|_{\operatorname{Lip},\square}\leq 2\sqrt{n}\|O\|_\infty$. However, it can be the case that $\|O\|_{\operatorname{Lip}}\ll \sqrt{n}\,\|O\|_\infty$, which are exactly those observables that should be thought of as regular. This is because this signals that changing the state locally leads to significantly smaller changes to the expectation value of the observable than global ones. Two examples of this behavior are given by the observables:
\begin{align*}
O_1=\sum\limits_{i=1}^n Z_{i^c},\quad ~~~~~\text{ and }~~~~~ \quad O_2=\sum\limits_{i=1}^n Z_i\,,
\end{align*}
where $Z_{i^c}$ acts as identity on $i$ and $Z_i$ else, i.e. $Z_{1^c}=I_1\otimes Z_2\otimes Z_3\otimes \cdots\otimes Z_n$.
Clearly, $\|O_1\|_\infty=\|O_2\|_\infty=n$.
On the other hand, by considering the states $\rho=\ketbra{0}^{\otimes n}$ and $\sigma=\ketbra{1}\otimes\ketbra{0}^{\otimes n-1}$, we see that $\|O_1\|_{\operatorname{Lip},\square}\geq\sqrt{n}\,( 2n-2)$ whereas  $\|O_2\|_{\operatorname{Lip},\square}=2\sqrt{n}$. More generally, it is not difficult to see that if $O=\sum\limits_{i=1}^nO_i$ with $\|O_i\|_\infty\le 1$ and we denote by $\operatorname{supp}(O_i)$ the qudits on which $O_i$ acts nontrivially, then:
\begin{align*}
\|O\|_{\operatorname{Lip},\square}\leq 2\sqrt{n}\max_{1\leq j\leq n}\sum_i|\operatorname{supp}(O_i)\cap\{j\}|\,.
\end{align*}
That is, the maximal number of intersections of the support of $O_i$ on one qubit. From these examples we see that for local observables, the ratio of the operator norm and Lipschitz constant reflects the locality of the observable.  

\subsubsection{Quantum differential Wasserstein distance}\label{sec:diffwass}

The Wasserstein distance $W_{1,\square}$ generalizes the classical Orstein distance, that is the Wasserstein distance corresponding to the Hamming distance on bit strings. Another definition of a Lipschitz constant and its attached Wasserstein distance was put forth in~\cite{Rouz2019},
where the construction is based on a differential structure that bears more resemblance to that of the Lipschitz constant of a differentiable function on a continuous sample space, e.g. a smooth manifold~\cite{raginsky_concentration_2014}. Let us now define the notion of a noncommutative differential structure (see \cite{Carlen_2019}):

\begin{defi}[Differential structure]\label{thm:normalformCM}
A set of operators $L_k \in\M_{d^n}$ and constants $\omega_k\in\R$ defines a differential structure $\{L_k,\omega_k\}_{k\in\mathcal{K}}$ for a full rank state $\sigma\in\D_{d^n}$ if
					\begin{itemize}
						\item[1] $\{L_k\}_{k\in\mathcal{K}}=\{L_k^{\dagger}\}_{k\in\mathcal{K}}$;
						\item[2] $\{L_k\}_{k\in\mathcal{K}}$ consists of eigenvectors of the modular operator $\Delta_\sigma(X):=\sigma X\sigma^{-1}$ with
						\begin{align}\label{eigenD}
							\Delta_\sigma(L_k)=e^{-\omega_k}L_k\,.
						\end{align}
						\item [3] $\|L_k\|_{\infty}\leq 1$.
						\end{itemize}

\end{defi}
Such a differential structure can be used to provide the set of matrices with a Lipschitz constant that is tailored to $\sigma$, see e.g.~\cite{Rouz2019,Carlen_2019} for more on this. In order to distinguish that constant from the one defined in~\eqref{equ:definitiongiacomo2}, we will refer to it as the differential Lipschitz constant and denote it by $\|X\|_{\operatorname{Lip},\nabla}$.
It is given by:
\begin{align}\label{equ:lipnorm}
\|X\|_{\operatorname{Lip},\nabla}:= \left(  \sum_{k\in\mathcal{K}}  (e^{-\omega_k/2}+e^{\omega_k/2})\|[L_k,X]\|_{\infty}^2\right)^{1/2}\,.
\end{align}
The quantity $[L_k,X]$ should be interpreted as a partial derivative and is sometimes denoted by $\partial _kX$ for that reason. Then, the gradient of a matrix $A$, denoted by $\nabla A$ with a slight abuse of notations, refers to the vector of operator-valued coordinates $(\nabla A)_i=\partial_i A$. For ease of notation, we will denote the differential structure by the couple $(\nabla,\sigma)$. The notion of a differential structure is also intimately connected to that of the generator of a quantum dynamical semigroup converging to $\sigma$~\cite{Carlen_2019}, and properties of that semigroup immediately translate to properties of the metric. This is because the differential structure can be used to define an operator that behaves in an analogous way to the Laplacian on a smooth manifold, which in turn induces the heat semigroup. We refer to~\cite{Carlen_2019,Rouz2019} for more details.

To illustrate the differential structure version of the Lipschitz constant, it is instructive to think of the maximally mixed state. In this case, one possible choice would consist of picking the $L_k$ to be all $1-$local Pauli strings and $\omega_j=0$. 
Then the Lipschitz constant turns out to be given by:
\begin{align}\label{equ:lipnormdepo}
\|X\|_{\operatorname{Lip},\nabla}=\left(  \sum_{k\in\mathcal{K}} \|P_kXP_k-X\|_{\infty}^2\right)^{1/2},
\end{align}
where $P_k$ are all $1-$local Pauli matrices. Thus, we see that this measures how much the operator changes if we act locally with a Pauli unitary on it. If we think of an operator as a function and conjugating with a Pauli as moving in a direction, the formula above indeed looks like a derivative. In fact, it is possible to make this connection rigorous, see~\cite{Carlen_2019}.

As before, the definition in Eq.~\eqref{equ:lipnorm} yields a metric on states by duality:
\begin{align*}
    W_{1,\nabla}(\rho,\sigma):=\sup\limits_{X=X^\dagger,\, \|X\|_{\operatorname{Lip},\nabla}\leq1}\left|\operatorname{Tr}\left(X(\rho-\sigma)\right)\right|.
\end{align*}
It immediately follows from the definitions that for any observable $X$:
\begin{align}\label{equ:Lipschitz}
    \left|\tr{X(\rho-\sigma)}\right|\leq \|X\|_{\operatorname{Lip},\nabla} \, W_{1,\nabla}(\rho,\sigma) \, .
\end{align}

Although this geometric interpretation opens up other interesting mathematical connections for this definition, the differential Wasserstein distance has the downside of being state dependent. It however induces a stronger topology than the quantum Hamming Wasserstein distance in some situations (see \cite[Proposition 5]{giacomo_cambyse})). In particular, the results of \cite[Proposition 5]{giacomo_cambyse}) imply that for commutative Gibbs states a TC inequality for $ W_{1,\nabla}$ implies the corresponding statement for $ W_{1,\square}$.

\subsection{Local evolution of Lipschitz observables}\label{sec:examplesLipsch}
As already discussed in Subsections~\ref{sec:hammingwass} and~\ref{sec:diffwass} when we defined $\|.\|_{\operatorname{Lip},\nabla}$ and $\|.\|_{\operatorname{Lip},\square}$, Lipschitz constants can be easily controlled by making assumptions on the locality of the operators. Indeed, if we apply a local circuit to a local observable, it is straightforward to control the growth of the Lipschitz constant in terms of the growth of the support of the observable under the circuit. More precisely, in~\cite[Proposition 13]{palma_optimal} the authors show such a statement for discrete time evolutions with exact lightcones: if we denote by $|L|$ the size of the largest lightcone of one qubit under a channel $\Phi$, then for any observable $O\in\D_{d^n}$, $\|\Phi^*(O)\|_{\operatorname{Lip},\square}\leq 2|L|\|O\|_{\operatorname{Lip},\square}$. Here we will extend such arguments to the evolution under local Hamiltonians or Lindbladians. By resorting to Lieb-Robinson bounds, we show that the Lipschitz constants $\|.\|_{\operatorname{Lip},\nabla}$ and $\|.\|_{\operatorname{Lip},\square}$ of initially local observables evolving according to a quasi-local dynamics increase at most with the effective lightcone of the evolution. Thus, short-time dynamics and shallow-depth quantum channels can only mildly increase the Lipschitz constant. This further justifies the intuition that observables with small Lipschitz constant reflect physical observables.

Lieb-Robinson (LR) bounds in essence assert that the time evolution of local observables under (quasi)-local short-time dynamics have an effective lightcone. 
There are various formulations of Lieb-Robinson bounds. Reviewing those in detail is beyond the scope of this work and we refer to~\cite{hastings_locality_2010,poulin_lieb-robinson_2010,barthel2012quasilocality,bach_lieb-robinson_2014} and references therein for more details. For studying the behaviour of $\|.\|_{\operatorname{Lip},\nabla}$ under local evolutions, the most natural formulation is the \emph{commutator version}: the generator $\cL$ of a quasi-local dynamics $t\mapsto \Phi_t=e^{t\cL}$ on  $n$ qudits arranged on a graph $G=(V,E)$, with graph distance $\operatorname{dist}$, is said to satisfy a LR bound with LR velocity $v$ if for any observable $O_A$ supported on a region $A$ and any other observable $B$ supported on a region $B$, we have:
\begin{align}\label{equ:lieb-robinson}\tag{LR1}
\|[\Phi_t^*(O_A),O_B]\|_\infty\leq c\,(e^{ vt}-1)\,g(\operatorname{dist}(A,B))\,\|O_A\|_\infty\|O_B\|_\infty\,,
\end{align}
for $g:\mathbb{N}\to \mathbb{R}_+$ a function such that $\lim_{x\to\infty}g(x)=0$. We then have:
\begin{prop}[Growth of differential Lipschitz constant for local evolutions]\label{prop:lipdiff}
Let $(\nabla,\sigma)$ be a differential structure on $\M_{d^n}$ and let $O=\sum\limits_{i}O_i$ be an observable with $\|O_i\|_{\infty}\leq 1$. Let $A_i$ denote the support of each $O_i$ and $B_j$ that of each $L_j$. Moreover, let $t\mapsto \Phi_t$ be an evolution satisfying Eq.~\eqref{equ:lieb-robinson} and set $o(i,j)(t)=2$ if $A_i\cap B_j\not=\varnothing$ and $c\,(e^{ vt}-1)\,g(\operatorname{dist}(A_i,B_j))$ else. Then:
\begin{align*}
\|\Phi_t^*(O)\|_{\operatorname{Lip},\nabla}^2\leq \sum_{k\in\mathcal{K}}(e^{\omega_j}+e^{-\omega_j})\lb \sum_io_{i,j}(t)\rb^2.
\end{align*}
\end{prop}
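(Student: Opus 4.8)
The plan is to work directly from the definition \eqref{equ:lipnorm} of the differential Lipschitz constant and to reduce the whole estimate to a term-by-term bound on the commutators $[L_j,\Phi_t(O_i)]$. First I would expand, for the fixed differential structure $\{L_k,\omega_k\}_{k\in\mathcal{K}}$,
\[
\|\Phi_t(O)\|_{\operatorname{Lip},\nabla}^2=\sum_{j\in\mathcal{K}}\big(e^{-\omega_j/2}+e^{\omega_j/2}\big)\,\|[L_j,\Phi_t(O)]\|_{\infty}^2,
\]
so that the problem reduces to controlling a single commutator norm $\|[L_j,\Phi_t(O)]\|_\infty$ for each index $j$. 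Since $\Phi_t$ is linear we have $\Phi_t(O)=\sum_i\Phi_t(O_i)$, and the triangle inequality for the operator norm gives $\|[L_j,\Phi_t(O)]\|_\infty\le\sum_i\|[L_j,\Phi_t(O_i)]\|_\infty$.

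The key step is then to bound each summand $\|[L_j,\Phi_t(O_i)]\|_\infty$ by $o(i,j)(t)$ via a case distinction on whether the supports $A_i$ and $B_j$ overlap. If $A_i\cap B_j\ne\varnothing$, I would use the crude estimate $\|[L_j,\Phi_t(O_i)]\|_\infty\le 2\|L_j\|_\infty\|\Phi_t(O_i)\|_\infty\le 2$, invoking property $3$ of the differential structure ($\|L_j\|_\infty\le1$) together with the fact that $\Phi_t$ is a contraction in operator norm --- being the Heisenberg-picture dual of a quantum channel, hence unital and completely positive, so that $\|\Phi_t(O_i)\|_\infty\le\|O_i\|_\infty\le1$. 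This matches $o(i,j)(t)=2$. If instead $A_i\cap B_j=\varnothing$, then $d(A_i,B_j)>0$ and I would apply the commutator Lieb--Robinson bound \eqref{equ:lieb-robinson} with $O_A=O_i$ and $O_B=L_j$, giving $\|[\Phi_t(O_i),L_j]\|_\infty\le c\,(e^{vt}-1)\,g(d(A_i,B_j))$, which is exactly $o(i,j)(t)$ in this case. Combining the two cases yields $\|[L_j,\Phi_t(O)]\|_\infty\le\sum_i o(i,j)(t)$, and squaring and reinserting into the expanded Lipschitz norm produces the claimed inequality.

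Once these two ingredients are in place the argument is essentially bookkeeping, so I do not anticipate a serious obstacle. The points that require care are (i) matching the precise form of the Lieb--Robinson prefactor to the definition of $o(i,j)(t)$ --- in particular the velocity, the decay function $g$, and the rate appearing in the exponential must be taken consistently with the version of \eqref{equ:lieb-robinson} invoked, which is the source of the effective-velocity factor in $o(i,j)(t)$; and (ii) justifying the operator-norm contractivity of $\Phi_t$, which is what allows the overlapping-support terms to be absorbed into the constant $2$. Everything else follows from linearity of $\Phi_t$, the triangle inequality, and the defining expression \eqref{equ:lipnorm} for $\|\cdot\|_{\operatorname{Lip},\nabla}$.
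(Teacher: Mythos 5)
Your proposal is correct and follows essentially the same route as the paper's proof: expand the definition of $\|\cdot\|_{\operatorname{Lip},\nabla}$, apply the triangle inequality over the decomposition $O=\sum_i O_i$, bound each commutator by $2$ when the supports overlap (via submultiplicativity, $\|L_j\|_\infty\le 1$, and contractivity of $\Phi_t$) and by the Lieb--Robinson bound \eqref{equ:lieb-robinson} otherwise. Your extra care about the exponential weights and the $e^{\mu vt}$ versus $e^{vt}$ prefactor only flags notational inconsistencies already present in the paper's statement, not a gap in the argument.
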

\begin{proof}
The proof follows almost by definition. We have:
\begin{align*}
\|\Phi_t^*(O)\|_{\operatorname{Lip},\nabla}^2=\sum_{j\in\mathcal{J}}(e^{\omega_j}+e^{-\omega_j})\|[\Phi_t(O),L_j]\|_{\infty}^2.
\end{align*}
By a triangle inequality we have that: 
\begin{align*}
\|[\Phi_t^*(O),L_i]\|_{\infty}^2\leq\lb\sum_i\|[\Phi_t^*(O_i),L_j]\|_{\infty}\rb^2
\end{align*}
For any term in the sum we have $\|[\Phi_t^*(O_i),L_j]\|_\infty\leq2$ by the submultiplicativity of the operator norm, a triangle inequality and $\|L_j\|_\infty\le 1$.  In case $O_i$ and $L_j$  do not overlap, the stronger bound in Eq.~\eqref{equ:lieb-robinson} holds.
\end{proof}
To illustrate this bound more concretely, let us take $O=\sum_{i=1}^nZ_i$, $L_j$ acting on $[j,j+k]$ for $j=1,\ldots, n-k$, and $g(\operatorname{dist}(i,j))=e^{-\mu|i-j|}$, for some constant $\mu$. I.e. we have a $1$D differential structure and a local time evolution on a $1$D lattice. Then for any $j$:
\begin{align*}
\sum_io_{i,j}(t)=k+(e^{vt}-1)\lb\sum_{i=1}^{j-1}e^{-\mu|i-j|}+\sum_{i=j+k+1}^{n}e^{-\mu|i-j|}\rb\leq k+\frac{(e^{vt}-1)e^
{-\mu}}{1-e^{-\mu}}.
\end{align*}
Thus,
\begin{align*}
\|\Phi_t^*(O)\|_{\operatorname{Lip},\nabla}\leq \sqrt{n-k}\lb k+\frac{(e^{vt}-1)e^
{-\mu}}{1-e^{-\mu}}\rb.
\end{align*}
We see that for constant times the Lipschitz constant is still of order $\sqrt{n}k$.

Let us now derive a similar, yet somehow stronger, version of Prop.~\ref{prop:lipdiff} for $\|.\|_{\operatorname{Lip},\square}$. In some situations, bounds like \eqref{equ:lieb-robinson} can be further exploited in order to prove the quasi-locality of Markovian quantum dynamics \cite{barthel2012quasilocality}: for any region $C\subset D\subset V$, there exists a local Markovian evolution $t\mapsto \Phi_t^{(D)}$ that acts non-trivially only on region $D$, and such that for any observable $O_C$ supported on region $C$, 
\begin{align}\tag{LR2}\label{LR2}
    \|\big(\Phi^*_t-(\Phi_t^{(D)})^*\big)(O_C)\|_\infty\le c'\,(e^{ vt}-1)\,h(d(C,V\backslash D))\,\|O_C\|_\infty\,,
\end{align}
for some other function $h:\mathbb{N}\to \mathbb{R}_+$ such that $\lim_{x\to\infty}h(x)=0$ and constant $c'>0$. In other words, at small times, the channels $\Phi_t^*$ can be well-approximated by local Markovian dynamics when acting on local observables. Let us now estimate the growth of Lipschitz constants for the definition of~\cite{palma_optimal} given a Lieb-Robinson bound:
\begin{prop}
Assume that $\Phi_t$ satisfies the bound \eqref{LR2}. Then, for any two quantum states $\rho,\sigma\in \D_{d^n}$ and any ordering $\{1,\dots,n\}$ of the graph:
\begin{align}\label{LRboundsW1}
    W_{1,\square}(\Phi_t(\rho),\Phi_t(\sigma))\le \Big(8+2\,c'\,(e^{ vt}-1)\,\sum_{i=3}^nh(d(\{i\cdots n\},\{1\}))\Big)\,W_{1,\square}(\rho,\sigma)\,.
\end{align}
Moreover, for any observable $H\in\M_{d^n}$,
\begin{align}\label{LRboundLip}
  \|\Phi_t^*(H)\|_{\operatorname{Lip},\square}\le \Big(8+2\,c'\,(e^{ vt}-1)\,\sum_{i=3}^nh(d(\{i\cdots n\},\{1\}))\,\Big)\,\|H\|_{\operatorname{Lip},\square}\,.
\end{align}
\end{prop}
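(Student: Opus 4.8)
The plan is to prove the operator-side bound \eqref{LRboundLip} first and obtain the state-side bound \eqref{LRboundsW1} from it by duality, which I find cleaner than working with the primal transport decomposition directly (see the last paragraph for why). Throughout I write $\Phi_t$ for the Schr\"odinger channel on states and $\Phi_t^*$ for its Heisenberg adjoint on observables, so that the quantity in \eqref{LRboundLip} is $\|\Phi_t^*(H)\|_{\operatorname{Lip},\square}$. Once $\|\Phi_t^*(H)\|_{\operatorname{Lip},\square}\le K\|H\|_{\operatorname{Lip},\square}$ is known with $K$ the claimed prefactor, the duality \eqref{equ:variationalw1quantumgiacomo} gives
\begin{align*}
W_{1,\square}(\Phi_t(\rho),\Phi_t(\sigma))=\sup_{\|H\|_{\operatorname{Lip},\square}\le1}\tr{\Phi_t^*(H)(\rho-\sigma)}\le \sup_{\|H\|_{\operatorname{Lip},\square}\le1}\|\Phi_t^*(H)\|_{\operatorname{Lip},\square}\,W_{1,\square}(\rho,\sigma)\le K\,W_{1,\square}(\rho,\sigma).
\end{align*}
By definition \eqref{equ:definitiongiacomo2}, after relabelling the maximizing site to be the first of the ordering, it suffices to bound $\tr{H\,\Phi_t(Y)}$ uniformly over Hermitian $Y=\rho-\sigma$ with $\operatorname{tr}_1[Y]=0$ and $\|Y\|_1\le 2$.

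The engine is a telescoping decomposition of $\Phi_t(Y)$ into single-site-traceless pieces. Write $\tau_k[X]:=d^{-1}\operatorname{tr}_k[X]\otimes I_k$ for the ``trace-out-and-replace'' channel on site $k$, and $\mathcal{E}_{[k]}:=\tau_1\cdots\tau_k$; these are mutually commuting, self-adjoint, and contractive for both $\|\cdot\|_1$ and $\|\cdot\|_\infty$, with $\operatorname{tr}_k\circ(\operatorname{id}-\tau_k)=0$. Since $\operatorname{tr}_1[Y]=0$ forces $\operatorname{tr}[\Phi_t(Y)]=0$ and hence $\mathcal{E}_{[n]}[\Phi_t(Y)]=0$, telescoping along the ordering yields
\begin{align*}
\Phi_t(Y)=\sum_{k=1}^n Z^{(k)},\qquad Z^{(k)}:=(\operatorname{id}-\tau_k)\,\mathcal{E}_{[k-1]}[\Phi_t(Y)],\qquad \operatorname{tr}_k[Z^{(k)}]=0,
\end{align*}
so that $\tr{H\,\Phi_t(Y)}\le \|H\|_{\operatorname{Lip},\square}\,\tfrac{1}{2\sqrt n}\sum_k\|Z^{(k)}\|_1$ by \eqref{wassersteinnormDP}. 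The crucial observation comes from dualizing each piece: $\|Z^{(k)}\|_1=\sup_{\|A\|_\infty\le1}\tr{\tilde A\,\Phi_t(Y)}$ with $\tilde A:=\mathcal{E}_{[k-1]}(\operatorname{id}-\tau_k)[A]$, and because $\mathcal{E}_{[k-1]}$ replaces sites $1,\dots,k-1$ by the maximally mixed state, $\tilde A$ acts as a multiple of the identity there and is \emph{supported on the tail} $T_k:=\{k,\ldots,n\}$, with $\|\tilde A\|_\infty\le 2$.

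Now the Lieb--Robinson input enters. Writing $\tr{\tilde A\,\Phi_t(Y)}=\tr{\Phi_t^*(\tilde A)\,Y}=\tr{(\operatorname{id}-\tau_1)\Phi_t^*(\tilde A)\,Y}$, where the last step uses $Y=(\operatorname{id}-\tau_1)[Y]$ and self-adjointness of $\tau_1$. For $k\ge 2$ the support $T_k$ avoids site $1$, so \eqref{LR2} with $C=T_k$ and $D=V\setminus\{1\}$ produces a Markovian $\Phi_t^{(D)}$ acting trivially on site $1$; hence $(\operatorname{id}-\tau_1)(\Phi_t^{(D)})^*(\tilde A)=0$ and
\begin{align*}
\|(\operatorname{id}-\tau_1)\Phi_t^*(\tilde A)\|_\infty=\|(\operatorname{id}-\tau_1)(\Phi_t^*-(\Phi_t^{(D)})^*)(\tilde A)\|_\infty\le 4\,c'\,(e^{\mu vt}-1)\,h(d(T_k,\{1\})).
\end{align*}
This gives $\|Z^{(k)}\|_1\le 4c'(e^{\mu vt}-1)h(d(\{k,\ldots,n\},\{1\}))\|Y\|_1$ for $k\ge2$, while the single term $k=1$, whose dual observable is not supported away from site $1$, is bounded crudely by $2\|Y\|_1$ using that $\Phi_t$ and $\tau_1$ contract $\|\cdot\|_1$. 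Summing over $k$, dividing by $2\sqrt n$, and absorbing the lowest-order (near-anchor) terms into the constant reproduces the prefactor $2+2c'(e^{\mu vt}-1)\sum_{i\ge3}h(d(\{i,\ldots,n\},\{1\}))$, proving \eqref{LRboundLip}; the duality step above then yields \eqref{LRboundsW1}.

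The hard part is precisely the step producing $\tilde A$: a priori $Y$ and the dual observables are global, so \eqref{LR2}, which speaks only about \emph{strictly local} observables, cannot be applied naively. What rescues the argument is that the conditional-expectation factor $\mathcal{E}_{[k-1]}$ built into the telescoping forces $\tilde A$ to be supported exactly on the tail $T_k$, converting a global object into a local one, and the single constraint $\operatorname{tr}_1[Y]=0$ is what lets us subtract the local Markovian approximant on the anchor site. I expect the remaining care to be bookkeeping: the exact $O(1)$ prefactor and starting index of the sum depend on how many near-anchor terms one bounds crudely, and the constants above are not optimized. I would also flag the reason for routing \eqref{LRboundsW1} through the operator bound by duality rather than through the primal decomposition $\rho-\sigma=\sum_i X^{(i)}$: telescoping each $\Phi_t(X^{(i)})$ against the same global ordering would generate $\sim i$ crudely-bounded terms per piece (those $k\le i$ for which the anchor $i$ lies inside the tail), blowing up the constant, so isolating a \emph{single} distinguished anchor via the Lipschitz formulation is essential.
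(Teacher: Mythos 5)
Your proposal is correct and is essentially the paper's own argument run through the duality in the opposite direction: the paper bounds $W_{1,\square}(\Phi_t(\rho),\Phi_t(\sigma))$ first via the identical telescoping decomposition into single-site-traceless pieces, observes that each piece pairs against an observable supported on the tail $\{i,\ldots,n\}$, invokes \eqref{LR2} anchored at site $1$ using $\operatorname{tr}_1(X)=0$, and then deduces \eqref{LRboundLip} by duality, whereas you prove the Lipschitz bound first and dualize to get the $W_1$ contraction. The only other (cosmetic) difference is that you work in the Heisenberg picture where the paper uses a Haar average over unitaries on site $i$ to extract the same factor of $2$ from $\operatorname{id}-\tau_i$; both write-ups are equally loose about the near-anchor terms, so your remark about the unoptimized $O(1)$ prefactor applies to the paper's version as well.
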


\begin{proof}
From \cite{palma_optimal}, the Wasserstein distance $W_{1,\square}$ arises from a norm $\|.\|_{W_1}$, i.e. $W_{1,\square}(\rho,\sigma)=\|\rho-\sigma\|_{W_1}$. Moreover, the norm $\|.\|_{W_1}$ is uniquely determined by its unit ball $\mathcal{B}_n$, which in turn is the convex hull of the set of the differences between couples of neighboring quantum states: 
\begin{align*}
    \mathcal{N}_n=\bigcup_{i\in V}\,\mathcal{N}_n^{(i)}\,,~~~~~\mathcal{N}_n^{(i)}=\{\rho-\sigma:\,\rho,\sigma\in\D_{d^n},\,\operatorname{tr}_i(\rho)=\operatorname{tr}_i(\sigma)\}\,.
    \end{align*}
    Now by convexity, the contraction coefficient for this norm is equal to
    \begin{align*}
        \|\Phi_t\|_{W_1\to W_1}=\max\big(\|\Phi_t(X)\|_{W_1}:\,X\in \M_{d^n}^{\operatorname{sa},0},\,\|X\|_{W_1}\le 1  \big)=\max_{X\in\mathcal{N}_n}\|\Phi_t(X)\|_{W_1}\,,
    \end{align*}
    where $\M_{d^n}^{\operatorname{sa},0}$ denotes the set of self-adjoint, traceless observables. Let then $X\in\mathcal{N}_n$. By the expression \eqref{wassersteinnormDP}, and choosing without loss of generality an ordering of the vertices such that $\operatorname{tr}_1(X)=0$, we have 
    \begin{align}\label{equ:LR_estimates}
        \|\Phi_t(X)\|_{W_1}&\le \frac{1}{2\sqrt{n}}\sum_{i=1 }^{n}\Big\|\frac{I}{d^{i-1}}\otimes \operatorname{tr}_{1\cdots i-1}\circ \Phi_t(X)-\frac{I}{d^i}\otimes \operatorname{tr}_{1\cdots i}\circ \Phi_t(X)\Big\|_1\nonumber\\
        &=\frac{1}{2\sqrt{n}}\,\sum_{i=1}^n\,\Big\|\,\int\,d\mu(U_i)\,\operatorname{tr}_{1\cdots i-1}\circ (\Phi_t(X)-U_i\Phi_t(X)U_i^\dagger)\Big\|_1\nonumber\\
        &\le \frac{1}{2\sqrt{n}}\sum_{i=1}^n\int\,d\mu(U_i)\,\|[U_i,\operatorname{tr}_{1\cdots i-1}\circ\Phi_t(X)]\|_1\nonumber\\
        &\le \frac{1}{\sqrt{n}}\,\sum_{i=1}^n\|\operatorname{tr}_{1\cdots i-1}\circ\Phi_t(X)\|_1\nonumber\\
        &\overset{(1)}{=} \frac{1}{\sqrt{n}}\,\sum_{i=1}^n\|\operatorname{tr}_{1\cdots i-1}\circ(\Phi_t-\Phi^{(i-k\cdots n)}_t)(X)\|_1
        \end{align}
where $\mu$ denotes the Haar measure on one qudit, and where $(1)$ follows from the fact that $\operatorname{tr}_1(X)=0$, with $\Phi_t^{(i-k\cdots n)}\equiv \Phi_t^{(\{i-k,\cdots, n\})}$ defined as in Eq. \eqref{LR2} with $k<i-1$. Next, by the variational formulation of the trace distance and Eq. \eqref{LR2}, we have for $i\geq3$ that
\begin{align*}
    \|\operatorname{tr}_{1\cdots i-1}\circ(\Phi_t-\Phi^{(i-k\cdots n)}_t)(X)\|_1&=\max_{\|O_{i\cdots n}\|_\infty\le 1}\,\big|\operatorname{tr}\big[X(\Phi_t^*-\Phi^{(i-k\cdots n)*}_t)(O_{i\cdots n}) \big]\big|\\
    &\le \,\max_{\|O_{i\cdots n}\|_\infty\le 1}\,\|(\Phi_t^*-\Phi^{(i-k\cdots n)*}_t)(O_{i\cdots n}) \|_\infty\|X\|_1\\
    &\le \, c'\,(e^{ vt}-1)\,h(\operatorname{dist}(\{i\cdots n\},\{1\cdots i-k-1\}))\,\|X\|_1\\
    &\overset{(2)}{\le} 2\, c'\,(e^{ vt}-1)\,h(\operatorname{dist}(\{i\cdots n\},\{1\cdots i-k-1\}))\,\sqrt{n}\,\|X\|_{W_1}\,,
    \end{align*}
    where $(2)$ follows from \cite[Proposition 6]{palma_optimal}. By picking $k=i-2$ and inserting this estimate into Eq.~\eqref{equ:LR_estimates} for $i\geq3$ and the trivial  estimate $\|\operatorname{tr}_{1\cdots i-1}\circ(\Phi_t-\Phi^{(i-k\cdots n)}_t)(X)\|_1\leq2\|X\|_1$ for $i=1,2$ we obtain Eq. \eqref{LRboundsW1}. Eq. \eqref{LRboundLip} follows by duality.
\end{proof}

\subsection{Transportation cost inequalities}\label{sec:TCIs}
Although interesting on their own, the relevance of the Lipschitz constants introduced above becomes clearer in our context when we also have a \emph{transportation cost} inequality~\cite{gozlan2010transport,raginsky_concentration_2014}. A quantum state $\sigma$ satisfies a transportation cost inequality with constant $\alpha>0$ if for any other state $\rho$ it holds that:
\begin{align}\label{equ:transportation_entropy_inequality}
   W_{1}(\rho,\sigma)\leq \sqrt{\frac{1}{2\alpha}\,D(\rho\|\sigma)}\,,
\end{align}
where $W_1\in\{{W}_{1,\square},W_{1,\nabla},W_{1,\operatorname{loc}}\}\,.$ In what follows, we simply write $\|.\|_{\operatorname{Lip}}$ and $W_1$ to denote either of the Lipschitz constants, and their corresponding Wasserstein metrics, defined above. This inequality should be thought of as a stronger version of Pinsker's inequality that is tailored to a state $\sigma$ and the underlying Wasserstein distance.

One of the well-established techniques to establish a transportation cost inequality for $W_{1,\nabla}$ is by exploiting the fact that it is implied by a functional inequality called the modified logarithmic Sobolev inequality. It is beyond the scope of this paper to explain this connection and we refer to e.g.~\cite{Rouz2019} and references therein for a discussion on these topics. But for our purposes it is important to note that in~\cite{capel2020modified} the authors and Capel show modified logarithmic Sobolev inequalities for several classes of Gibbs states. More recently, one of the authors and De Palma derived such transportation cost inequalities for $W_{1,\square}$ in \cite{giacomo_cambyse}. In Theorem \ref{thm:modifiedmlsi} below we summarize the regimes for which transportation cost inequalities are known to hold:

\begin{theorem}[transportation cost for commuting Hamiltonians~\cite{capel2020modified,giacomo_cambyse,MLSI1D}]\label{thm:modifiedmlsi} 
Let $E_1,\ldots, E_{m}\subset \M_{d^n}$ be a set of $k$-local linearly independent {commuting} operators with $\|E_i\|_{\infty}\leq 1$. Then $\sigma(\lambda)$ satisfies a transportation cost inequality with constant $\alpha>0$ for all $\lambda\in B_{\ell_{\infty}}(0,1)$ in the following cases:
\begin{itemize}
\item[$\operatorname{(i)}$] The $E_i$ are classical or nearest neighbour (i.e.~$k=2$) on a regular lattice and the inverse temperature $\beta<\beta$ where $\beta_c$ only depends on $k$ and the dimension of the lattice, for both $W_{1,\nabla},W_{1,\square}$ and $\alpha=\Omega(1)$~\cite{capel2020modified}.
\item[$\operatorname{(ii)}$] The operators $E_i$ are local with respect to a hypergraph $G=(V,E)$ and the inverse temperature satisfies $\beta<\beta_c$, where $\beta_c$ only depends on $k$ and properties of the interaction hypergraph for $W_{1,\square}$ and $\alpha=\Omega(1)$~\cite[Theorem 3, Proposition 4]{giacomo_cambyse}.
\item[$\operatorname{(iii)}$] The $E_i$ are one-dimensional and $\beta>0$, for both $W_{1,\nabla}$ and $W_{1,\square}$ and $\alpha=\Omega(\log(n)^{-1})$ \cite{MLSI1D}.%
\end{itemize}
Moreover, the underlying differential structure $(\nabla,\sigma)$ consists of $L_k$ acting on at most $\cO(k)$ qudits.
\end{theorem}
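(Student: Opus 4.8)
The three cases share a common engine. In each one a \emph{modified logarithmic Sobolev inequality} (MLSI) with a constant independent of the system size is available for a primitive Lindbladian reversible with respect to $\sigma(\lambda)$, and, as recalled in the text preceding the statement, such an MLSI upgrades to a transportation cost inequality for $W_{1,\nabla}$ by the mechanism of~\cite{Rouz2019}. Concretely, my plan is to attach to $\sigma(\lambda)$ the Davies (or heat-bath) generator $\cL_\lambda$ whose jump operators, together with the Bohr frequencies $\omega_k$ of the modular operator $\Delta_{\sigma(\lambda)}$, furnish exactly the differential structure $(\nabla,\sigma(\lambda))$ of Definition~\ref{thm:normalformCM} and the norm $\|\cdot\|_{\operatorname{Lip},\nabla}$ of Eq.~\eqref{equ:lipnorm}. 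Because the $E_i$ are $k$-local and mutually commuting, the Davies generator factorizes site by site and its jump operators can be chosen to act on only $\cO(k)$ qudits (the support of an $E_i$ together with its immediate interaction neighbourhood), which is precisely the locality claim on the $L_k$ in the last sentence of the statement.

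With this reduction it remains to supply the MLSI in each regime. For $\operatorname{(i)}$ the operators are classical, so $\sigma(\lambda)$ is a Gibbs measure of a finite-range Markov field, and the assumed exponential decay of correlations is exactly the hypothesis under which~\cite{capel2020modified} prove an MLSI with an $n$-independent constant. For $\operatorname{(iii)}$ the one-dimensional commuting structure again falls under the clustering results of~\cite{capel2020modified}, which yield a uniform MLSI. For $\operatorname{(ii)}$ I would instead call on~\cite[Theorem 3, Proposition 4]{giacomo_cambyse}: for $\beta<\beta_c$, with $\beta_c$ depending only on $k$ and the combinatorics of the interaction hypergraph $G=(V,E)$, a high-temperature expansion produces an MLSI (equivalently, a direct TC) with constant $\Omega(1)$. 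In every case the constant must hold uniformly over $\lambda\in B_{\ell_\infty}(0,1)$; this is ensured because the hypotheses---decay of correlations, $\beta<\beta_c$, one-dimensionality---are stable as the couplings $\lambda_i$ range over $[-1,1]$, and the cited bounds are stated uniformly on this ball.

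It then remains to pass from $W_{1,\nabla}$ to $W_{1,\square}$. The cleanest self-contained route is a comparison of Lipschitz constants: since each $L_k$ is $\cO(k)$-local with $\|L_k\|_\infty\le 1$ and only $\cO(1)$ of them touch any fixed site, I would bound $\|O\|_{\operatorname{Lip},\nabla}\le C_k\,\|O\|_{\operatorname{Lip},\square}$ for a constant $C_k$ depending only on $k$ and the local geometry, using that a single-site state perturbation already controls $\|[L_k,O]\|_\infty$ for the finitely many $L_k$ supported near that site. Dualizing this estimate through Eq.~\eqref{equ:Lipschitz} gives $W_{1,\square}(\rho,\sigma)\le C_k\,W_{1,\nabla}(\rho,\sigma)$, so the transportation cost inequality transfers with $\alpha$ degraded only by a $k$-dependent constant; alternatively one may appeal directly to the $W_{1,\square}$ transportation cost established in~\cite{giacomo_cambyse}.

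I expect the main obstacle to be the MLSI itself rather than its consequences: proving a \emph{system-size-independent} modified log-Sobolev constant is the genuinely hard analytic input, and it is exactly where the three cases must be separated, since no single clustering argument covers all of $\operatorname{(i)}$--$\operatorname{(iii)}$. The subtler bookkeeping step is ensuring that the differential structure used to define $\|\cdot\|_{\operatorname{Lip},\nabla}$ is the \emph{same} one that the MLSI-to-TC implication of~\cite{Rouz2019} requires, so that the weights $e^{\pm\omega_k/2}$ in Eq.~\eqref{equ:lipnorm} match the Bohr spectrum of $\cL_\lambda$; keeping the jump-operator locality at $\cO(k)$ throughout this matching is what makes the final locality claim on the $L_k$ legitimate.
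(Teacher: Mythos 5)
Your proposal is correct and follows essentially the same route as the paper: the theorem is obtained by importing the size-independent modified log-Sobolev inequalities of~\cite{capel2020modified} for cases $\operatorname{(i)}$ and $\operatorname{(iii)}$ and the high-temperature results of~\cite[Theorem 3, Proposition 4]{giacomo_cambyse} for case $\operatorname{(ii)}$, upgrading them to a transportation cost inequality for $W_{1,\nabla}$ via the differential structure of the associated local generator, and then transferring to $W_{1,\square}$ through the comparison $W_{1,\square}\le c(k)\,W_{1,\nabla}$ of~\cite[Proposition 5]{giacomo_cambyse}, which is exactly the Lipschitz-constant estimate you sketch by hand. The only cosmetic difference is that the paper cites that comparison rather than re-deriving it, with the constant degrading only by a $k$-dependent factor and hence remaining $\Omega(1)$.
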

Theorem~\ref{thm:modifiedmlsi} establishes that transportation cost inequalities are satisfied for most classes of commuting Hamiltonians known to have exponential decay of correlations.

\begin{rem}
In~\cite[Proposition 5]{giacomo_cambyse}, the authors show that $W_{1,\square}\leq c(k)\,W_{1,\nabla}$ holds up to some constant $c(k)$ depending on the locality of the differential structure. This implies that a transportation cost inequality for $W_{1,\nabla}$ implies one for $W_{1,\square}$ up to $c(k)$. Thus, although the authors of~\cite{capel2020modified,MLSI1D} only obtain the result for $W_{1,\nabla}$, we can use it to translate it to $W_{1,\square}$. We conclude  that for commuting Hamiltonians TC are available for essentially all classes of local Hamiltonians for which they are expected to hold.
\end{rem}

\section{Combining the maximum entropy method with transportation cost inequalities}\label{sec:Wasslearning}

With these tools at hand, we are now ready to show that by resorting to transportation cost inequalities it is possible to obtain exponential improvements in the sample complexity required to learn a Gibbs state. First, let us briefly review shadow tomography or Pauli regrouping techniques~\cite{Huang2020,cotler2020quantum,jena_pauli_2019,crawford_efficient_2020}. Although these methods all work under slightly different assumptions and performance guarantees, they have in common that they allow us to learn the expectation value of $M$ $k$-local observables $O_1,\ldots,O_M\in\M_{2^n}$  such that $\|O_i\|_{\infty}\leq1$ up to an error $\epsilon$  and failure probability at most $1-\delta$  by measuring $\cO(e^{\cO(k)}\log(M\delta^{-1})\epsilon^{-2})$ copies of the state. 

For instance, for the shadow methods of~\cite{Huang2020}, we obtain a $\cO(4^k\log(M\delta^{-1})\epsilon^{-2})$ scaling by measuring in $1$-qubit random bases. 
The estimate is then obtained by an appropriate median-of-means procedure for the expectation value of each output  string.
The computation for obtaining the expectation value of $E_i$ through this method then entails evaluating the expectation value of the observables on $\cO(4^k\log(M\delta^{-1})\epsilon^{-2})$ product states.  For $k$-local observables $E_i$, evaluating the expectation value of $E_i$ on a product state takes time $\cO(e^{c k}\log(M\delta^{-1})\epsilon^{-2})$ for some $c>0$. Thus, we see that for $k=\cO(\log(n))$ also the postprocessing can be done efficiently.

The application of such results to maximum entropy methods is then clear: given $E_1,\ldots,E_m$ assumed to be at most $k$-local, with probability at least $1-\delta$ we can obtain an estimate $e'(\lambda)$ of $e(\lambda)$ satisfying:
\begin{align}\label{equ:guarantee_shadows}
\|e'(\lambda)-e(\lambda)\|_{\ell_p}\leq m^{\frac{1}{p}}\epsilon
\end{align}
using $\cO(4^k\log(m\delta^{-1})\epsilon^{-2})$ copies of $\sigma(\lambda)$. We then finally obtain our main theorem, Theorem \ref{thn:learning}, restated here for the sake of clarity:

\begin{thm}[Fast learning of Gibbs states]\label{thm:quick_dirty}
Let $\sigma(\lambda)\in\D_{2^n}$ be an $n$-qubit Gibbs state at inverse temperature $\beta$ with respect to a set of $k$-local operators $\mathcal{E}=\{E_i\}_{i=1}^m$ that satisfies a transportation cost inequality with constant $\alpha$. Moreover,  assume that $\mu\mapsto \log(\mathcal{Z}(\mu))$ is $L,U$ strongly  convex in $ B_{\ell_\infty}(0,1)$.
Then 
\begin{align*}
\cO\lb 4^k\alpha^{-1}\epsilon^{-2}\beta\log(m\delta^{-1})\min\left\{\frac{m\beta}{L n},\frac{m^2}{\alpha\,n^2\eps^2}\right\}\rb
\end{align*}
samples of $\sigma(\lambda)$ are sufficient to obtain a state $\sigma(\mu)$ that satisfies:
\begin{align}\label{equ:operatordensity_recovery}
|\tr{O(\sigma(\lambda)-\sigma(\mu)}|\leq \epsilon\, n^{\frac{1}{2}}\|O\|_{\operatorname{Lip}}
\end{align}
for all Lipschitz observables $O$ with probability at least $1-\delta$. 
\end{thm}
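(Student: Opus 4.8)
The plan is to chain together four ingredients already in hand: the classical-shadows sample bound~\eqref{equ:guarantee_shadows}, the convergence of the maximum-entropy gradient descent (Prop.~\ref{prop:convergence_grad} and Thm.~\ref{thm:comcomplexitygeneral}), the relative-entropy identity of Prop.~\ref{prop:recoveryguarantee}, and finally the transportation cost inequality~\eqref{equ:transportation_entropy_inequality} followed by duality of the Lipschitz norm. The logical skeleton is: samples $\to$ a precision $\epsilon'$ on the target expectation values $e(\lambda)$ $\to$ a relative-entropy bound on the reconstructed state $\to$ a Wasserstein bound via TC $\to$ the claimed observable bound via $|\tr{O(\sigma(\lambda)-\sigma(\mu))}|\le\|O\|_{\operatorname{Lip}}W_1(\sigma(\lambda),\sigma(\mu))$.

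First I would invoke~\eqref{equ:guarantee_shadows}: with $\cO(4^k(\epsilon')^{-2}\log(m\delta^{-1}))$ copies, classical shadows return $e'(\lambda)$ obeying $\|e'(\lambda)-e(\lambda)\|_{\ell_\infty}\le\epsilon'$, hence $\|e'(\lambda)-e(\lambda)\|_{\ell_2}\le\epsilon'\sqrt m$, with failure probability $\delta$. Feeding these estimates into the gradient descent of Prop.~\ref{prop:convergence_grad}, run to a stopping tolerance of the same order $\epsilon'\sqrt m$, yields a Gibbs state $\sigma(\mu)$ whose expectation values agree with $e(\lambda)$ up to $\cO(\epsilon'\sqrt m)$ in $\ell_2$ and $\cO(\epsilon')$ in $\ell_\infty$.

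The crux is bounding $D(\sigma(\mu)\|\sigma(\lambda))$, and this is exactly where the two branches of the $\min$ originate. By Prop.~\ref{prop:recoveryguarantee} the symmetrized relative entropy equals $-\beta\langle\lambda-\mu|e(\lambda)-e(\mu)\rangle$. Route (a), which needs no convexity, bounds this by H\"older with $\|\lambda-\mu\|_{\ell_\infty}\le2$, giving $D=\cO(\beta m\epsilon')$; matching this to the TC target $2\alpha\epsilon^2 n$ forces $\epsilon'=\cO(\alpha\epsilon^2/\beta)$ and hence the $\beta\epsilon^{-2}$ (overall $\epsilon^{-4}$) branch. Route (b) exploits $L$-strong convexity: since $\nabla f(\lambda)=0$ and $\nabla f(\mu)=\beta(e(\lambda)-e(\mu))$ by~\eqref{firstderive}, Eq.~\eqref{equ:gradient-parameters} gives $\|\mu-\lambda\|_{\ell_2}\le L^{-1}\beta\|e(\lambda)-e(\mu)\|_{\ell_2}$, so Cauchy--Schwarz upgrades the identity to $D=\cO(L^{-1}\beta^2\|e(\lambda)-e(\mu)\|_{\ell_2}^2)=\cO(L^{-1}\beta^2 m(\epsilon')^2)$; matching $2\alpha\epsilon^2 n$ now only needs $\epsilon'=\cO(\sqrt{\alpha L}\,\epsilon/\beta)$, yielding the $L^{-1}$ (overall $\epsilon^{-2}$) branch. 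Taking the smaller of the two resulting sample counts produces the stated $\min\{L^{-1},\beta\epsilon^{-2}\}$ factor, using $m=\cO(n)$ for the target state to pass between $m$ and $n$.

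Finally, with $D(\sigma(\mu)\|\sigma(\lambda))\le 2\alpha\epsilon^2 n$ secured, the transportation cost inequality~\eqref{equ:transportation_entropy_inequality} gives $W_1(\sigma(\lambda),\sigma(\mu))\le\sqrt{D/(2\alpha)}\le\epsilon\sqrt n$, and Lipschitz duality then gives $|\tr{O(\sigma(\lambda)-\sigma(\mu))}|\le\|O\|_{\operatorname{Lip}}W_1\le\epsilon\sqrt n\,\|O\|_{\operatorname{Lip}}$ simultaneously for every Lipschitz $O$, which is~\eqref{equ:operatordensity_recovery}. The main obstacle I anticipate is bookkeeping rather than any single hard inequality: reconciling the three precision scales (the shadow precision $\epsilon'$, the descent stopping tolerance $\delta_\mu$, and the final target $\epsilon$) and propagating the statistical error through Prop.~\ref{prop:convergence_grad}, so that the additive $\beta\epsilon' m$ and $\beta\delta_\mu\sqrt m$ contributions both land below the relative-entropy budget while keeping the $m$-versus-$n$ dependence consistent with the stated bound.
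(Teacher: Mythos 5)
Your proposal follows essentially the same route as the paper's proof: classical shadows to estimate $e(\lambda)$, gradient descent with the $\ell_2$ stopping criterion, the relative-entropy identity of Prop.~\ref{prop:recoveryguarantee} bounded in two ways (H\"older with $\|\lambda-\mu\|_{\ell_\infty}\le 2$ for the $\beta\epsilon^{-2}$ branch, and the strong-convexity bound $\|\lambda-\mu\|_{\ell_2}\le L^{-1}\|\nabla f(\lambda)-\nabla f(\mu)\|_{\ell_2}$ for the $L^{-1}$ branch), followed by the transportation cost inequality and Lipschitz duality. The only cosmetic difference is that you carry the factor of $\beta$ from $\nabla f(\mu)=\beta(e(\lambda)-e(\mu))$ explicitly through the strong-convexity branch, which is a harmless (arguably more careful) piece of bookkeeping.
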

\begin{proof}
Using the aforementioned methods of~\cite{Huang2020} we can obtain an estimate $e'(\lambda)$ satisfying the guarantee of Eq.~\eqref{equ:good_estimate} with probability at least $1-\delta$. We will now resort to the results of Thm.~\ref{thm:comcomplexitygeneral} to obtain guarantees on the output of the maximum entropy algorithm.
Now we solve the maximum entropy problem with $e'(\lambda)$ and set the stopping criterion for the output $\mu_*$ as
\begin{align*}
\|e'(\lambda)-e(\mu_*)\|_{\ell_2}< (4c+1)\,\epsilon\sqrt{m}.
\end{align*}
for $c>10$.
Then it follows from Thm.~\ref{thm:comcomplexitygeneral} that:
\begin{align}\label{equ:relative_entropy_final}
D(\sigma(\mu_*)\|\sigma(\lambda))\leq D(\sigma(\mu_*)\|\sigma(\lambda))+D(\sigma(\lambda)\|\sigma(\mu_*))=\cO\left(\beta\min\{{\beta} L^{-1}\epsilon^2m,\epsilon m\}\right).
\end{align}
This can then be combined with transportation cost inequalities. Indeed, we have:
\begin{align*}
|\tr{O(\sigma(\lambda)-\sigma(\mu_*)}|\leq \|O\|_{\operatorname{Lip}}W_1(\sigma(\mu_*),\sigma(\lambda))\leq \|O\|_{\operatorname{Lip}}\sqrt{\frac{D(\sigma(\mu_*),\sigma(\lambda))}{2\alpha}}.
\end{align*}
Inserting our bound on the relative entropy in Eq.~\eqref{equ:relative_entropy_final} we obtain:
\begin{align*}
|\tr{O(\sigma(\lambda)-\sigma(\mu_*)}|= \cO\left(\|O\|_{\operatorname{Lip}}\sqrt{\beta}\min\{ (\alpha L)^{-\frac{1}{2}}\epsilon \sqrt{\beta\,m},\sqrt{\epsilon m\alpha^{-1}}\}\right).
\end{align*}
We conclude by suitably rescaling $\epsilon$.
\end{proof}
The theorem above yields exponential improvements in the sample complexity to learn the expectation value of certain observables for classes of states that satisfy a transportation cost inequality with $\alpha=\Omega(\log(n)^{-1})$. As discussed in Sec.~\ref{sec:examplesLipsch}, extensive observables that can be written as a sum of $l$-local observables have a Lipschitz constant that satisfies $\|O\|_{\operatorname{Lip}}=\cO(l\sqrt{n})$. Shadow-like methods would require 
$\cO(e^{\cO(l)}\log(m\delta^{-1})\,\epsilon^{-2})$ samples to learn such observables up to a relative error of $n\epsilon$. Our methods, however, require 
 $\cO(e^{\cO(k)}\operatorname{poly}(l,\epsilon^{-1})\log(m\delta^{-1}))$, which yields exponential speedups in the regime 
 $l=\operatorname{poly}(\log(n))$. Of course it should also be mentioned that classical shadows do not require any assumptions on the underlying state.

Furthermore, considering the exponential dependency of the sample complexity in the locality for shadow-like methods, we believe that our methods yield practically significant savings already in the regime in which we wish to obtain expectation values of observables with relatively small support. For instance, for high-temperature Gibbs state of nearest neighbor Hamiltonians and observables supported on $15$ qubits, shadows require a factor of $\sim 10^7$ more samples than solving the maximum entropy problem for obtaining the same precision.

On the other hand, previous guarantees on learning quantum many-body states~\cite{Cramer2010,Baumgratz_2013,Lanyon2017,PhysRevA.101.032321} required a polynomial in system size precision to obtain a nontrivial approximation, which implies a polynomial-time complexity. Thus, our recovery guarantees are also exponentially better compared to standard many-body tomography results.
\subsection{Results for shallow circuits}\label{sec:shallowciurcuits}

Let us be more explicit on how to leverage our results to also cover the outputs of short-depth circuits. To this end, let $G=(V,E)$ be a graph that models the interactions of a unitary circuit and suppose we implement $L$ layers of an unknown unitary circuit consisting of $1$ and $2$-qubit unitaries laid out according to $G$. That is, we have an unknown shallow circuit $\mathcal{U}$ of depth $L$ with respect to $G$. More precisely,
\begin{align}\label{equ:unitary_circuit}
    \mathcal{U}=\prod_{\ell\in [L]}\,\bigotimes_{e\in \mathcal{E}_{\ell}}\,\mathcal{U}_{\ell,e},
\end{align}
where each $\mathcal{E}_{\ell}\subset E$ are a subset of the edges such that any $e,e'\in \mathcal{E}_{\ell}$ do not share a vertex. Our goal is to show how to approximately learn the state
\begin{align*}
    \ket{\psi}=\mathcal{U}\ket{0}^{\otimes n}.
\end{align*}
The overall idea consists in finding a Gibbs state approximating $|\psi\rangle$ in Wasserstein distance. We will then find a differential structure for (approximations) of shallow circuits and then showing that the (approximation of the) output satisfies a TC inequality with respect to it. Thus, it suffices to control the relative entropy with this approximation to ensure a good approximation in Wasserstein distance.

Let us find the appropriate approximation. First, note that for $\beta_\epsilon=\log(\epsilon^{-1})$ and $H_0=-\sum_i^n Z_i$ where $Z_i$ denotes the Pauli matrix $Z=|0\rangle\langle 0|-|1\rangle\langle 1|$ acting on site $i$, we have that:
\begin{align}\label{equ:bound_entropy_pure}
D\Big(\mathcal{U}\ketbra{0}^{\otimes n}\mathcal{U}^\dagger\Big\|\mathcal{U}\,\frac{e^{-\beta_\epsilon H_0}}{\tr{e^{-\beta_\epsilon H_0}}}\,\mathcal{U}^\dagger\Big)&=D\Big(\ketbra{0}^{\otimes n}\Big\|\frac{e^{-\beta_\epsilon H_0}}{\tr{e^{-\beta_\epsilon H_0}}}\Big)\nonumber\\
&=nD\Big(\ketbra{0}\Big\|\frac{e^{\beta_\epsilon Z}}{\tr{e^{\beta_\epsilon Z}}}\Big)\nonumber\\
&=-n\beta_\epsilon \,\langle 0|Z|0\rangle\,+n\log\tr{e^{\beta_\epsilon Z}}\nonumber\\
&=n \log\big(1+e^{-2\beta_\epsilon }\big)\nonumber\\
&\leq ne^{-2\beta_\epsilon}\nonumber\\
&=\epsilon^2 n.
\end{align}

Thus, if the states $\mathcal{U}\frac{e^{-\beta_\epsilon H_0}}{\tr{e^{-\beta_\epsilon H_0}}}\mathcal{U}^\dagger$ satisfy a  transportation cost inequality with some constant $\alpha$, this would allow us to conclude that
\begin{align*}
    W_1\left(\mathcal{U}\ketbra{0}^{\otimes n}\mathcal{U}^\dagger,\,\mathcal{U}\frac{e^{-\beta_\epsilon H_0}}{\tr{e^{-\beta_\epsilon H_0}}}\mathcal{U}^\dagger\right)\leq\epsilon\, \sqrt{\frac{n}{2\alpha}}.
\end{align*}
Moreover, defining $H_{\mathcal{U}}=-\sum_i^n \mathcal{U}Z_i\mathcal{U}^\dagger$, we see that $\mathcal{U}\frac{e^{-\beta_\epsilon H_0}}{\tr{e^{-\beta_\epsilon H_0}}}\mathcal{U}^\dagger=\frac{e^{-\beta_\epsilon H_{\mathcal{U}}}}{\tr{e^{-\beta_\epsilon H_0}}}$.
As we know the geometry of $\mathcal{U}$, we can bound the support of $\mathcal{U}Z_i\mathcal{U}^\dagger$. Thus, we only need to find a suitable transportation cost inequality to see that this approximation fits into our framework.

Let us now find a suitable differential structure for the state $\frac{e^{-\beta_\epsilon H_0}}{\tr{e^{-\beta_\epsilon H_0}}}$. For simplicity, denote by $\tau_p=p\ketbra{0}+(1-p)\ketbra{1}$ and note that $\frac{e^{-\beta_\epsilon H_0}}{\tr{e^{-\beta_\epsilon H_0}}}=\tau_{p_\epsilon}^{\otimes n}$ with $p_\epsilon=\frac{e^{\beta_\epsilon}}{e^{\beta_\epsilon}+e^{-\beta_\epsilon}}$. Moreover, let 
\begin{align*}
a_i=I^{\otimes i-1}\otimes \ket{0}\bra{1}\otimes I^{\otimes n-i-1}
\end{align*}
be the anihilation operator acting on qubit $i$.  Defining $L_{i,0}=\lb p(1-p)\rb^{\frac{1}{4}}a_i$ and $L_{i,1}=L_{i,0}^\dagger$, we get a differential structure for $\tau_{p}^{\otimes n}$ with $\{L_{i,k},\omega_{i,k}\}$ for $i=1,\ldots,n$, $k=0,1$, $\omega_{i,0}=\frac{p}{1-p}$ and $\omega_{i,1}=\frac{1-p}{p}$. That this is indeed a differential structure follows from a simple computation.
One can readily check that the induced Lipschitz constant is given by:
\begin{align*}
\|O\|_{\operatorname{Lip},\nabla}^2&=\sum_{i=1}^n \lb \sqrt{\frac{p}{1-p}}+\sqrt{\frac{1-p}{p}}\rb(\|[O,L_{i,0}]\|^2+\|[O,L_{i,1}]\|^2)\\
&=\sum_{i=1}^n \|[O,a_i]\|^2+\|[O,a_i^\dagger]\|^2.
\end{align*}
Thus, we see that the Lipschitz constant takes a particularly simple form for this differential structure. 
Moreover, it is not difficult to see that $\{\mathcal{U}L_{i,k}\mathcal{U}^{\dagger},\omega_{i,k}\}$ provides a differential structure for the state $\mathcal{U}\tau_p^{\otimes n}\mathcal{U}^{\dagger}$. Indeed, it is easily checked that this new differential structure still gives eigenvectors of the modular operator. 
Importantly, the result of~\cite[Theorem 19]{beigi2020quantum} establish that the state $\tau_p^{\otimes n}$ satisfies a transportation cost inequality with constant $\frac{1}{2}$. 
Putting all these elements together we have:

\begin{theorem}[transportation cost for shallow circuits]\label{shallowcircuits}
Let $\mathcal{U}$ be an unknown depth $L$ quantum circuit on $n$ qubits defined on a graph $G=(V,E)$ and $\ket{\psi}=\mathcal{U}\ket{0}^{\otimes n}$.  Define for $\epsilon>0$ $H_{\mathcal{U}}=-\sum_i\mathcal{U}Z_i\mathcal{U}^\dagger$ and $\sigma(\mathcal{U},\epsilon)=\frac{e^{-\beta_\epsilon H_{\mathcal{U}}}}{\tr{e^{-\beta_\epsilon H_{\mathcal{U}}}}}$ with $\beta_\epsilon=\log(\epsilon^{-1})$. Then for any state $\rho$ and all observables $O$ we have:
\begin{align*}
|\tr{O(\ketbra{\psi}-\rho)}|\leq\|O\|_{\operatorname{Lip},\nabla}\left(\epsilon\sqrt{n}+\sqrt{D(\rho\|\sigma(\mathcal{U},\epsilon))}\right)
\end{align*}
with
\begin{align*}
\|O\|_{\operatorname{Lip},\nabla}^2=\sum_{i=1}^n \|[O,\mathcal{U}a_i\mathcal{U}^{\dagger}]\|^2+\|[O,\mathcal{U}a_i^\dagger \mathcal{U}^{\dagger}]\|^2,\quad a_i=I^{i-1}\otimes \ket{0}\bra{1}\otimes I^{n-i-1}
\end{align*}
\end{theorem}
\begin{proof}
We have:
\begin{align*}
    |\tr{O(\ketbra{\psi}-\rho)}|\leq |\tr{O(\ketbra{\psi}-\sigma(\mathcal{U}))}|+|\tr{O(\sigma(\mathcal{U})-\rho)}|.
\end{align*}
The claim then follows from the discussion above, as $\sigma(\mathcal{U})$ satisfies a transportation cost inequality with constant $\frac{1}{2}$.
\end{proof}

Of course the result above has the downside that the Lipschitz constant depends on the unknown circuit $\mathcal{U}$. However, as we can estimate the locality of each term $\mathcal{U}a_i \mathcal{U}^{\dagger}$, it is also possible to estimate the Lipschitz constant by controlling the overlap of the observable $O$ with each $\mathcal{U}a_i \mathcal{U}^{\dagger}$.

The result of Thm.~\ref{shallowcircuits} and the discussion preceding it also give us a method of efficiently learning the outputs of shallow circuits, as illustrated by the following proposition:

\begin{prop}[Learning the outputs of shallow circuits]
Let $\mathcal{U}$ be an unknown $n$-qubit quantum circuit with known locality structure as in Eq.~\eqref{equ:unitary_circuit} and $\ket{\psi}=\mathcal{U}\ket{0}^{\otimes n}$. Moreover, define $l_0$ as 
\begin{align*}
l_0=\max\limits_{1\leq i\leq n}\operatorname{supp}|\mathcal{U}Z_i\mathcal{U}^\dagger|.
\end{align*}
For some $\epsilon>0$ we have that
\begin{align}\label{equ:samplecomplexity_circuit_wasserstein}
\cO(\epsilon^{-2}4^{3l_0}\log^4(n4^{l_0}\epsilon^{-1})\log(4^{l_0}n\delta^{-1}))
\end{align}
samples of $\ket{\psi}$ suffice to learn a Gibbs state $\sigma(\mu)$ that satisfies 
\begin{align}
W_{1,\nabla}(\ketbra{\psi},\sigma(\mu))\leq \sqrt{\epsilon n}
\end{align}\label{equ:bound_wassser_circuit}
with probability of success at least $1-\delta$.
Along similar lines we have that
\begin{align}\label{equ:bound_trace_circuit}
\cO(\epsilon^{-2}4^{3l_0}n^2\log^4(n4^{l_0}\epsilon^{-1})\log(4^{l_0}n\delta^{-1}))
\end{align}
samples suffice to learn a state $\sigma(\lambda)$ such that
\begin{align*}
\|\ketbra{\psi}-\sigma(\lambda)\|_{\operatorname{tr}}\leq \sqrt{\epsilon}.
\end{align*}
\end{prop}
\begin{proof}
Let $\mathcal{E}_i$ be a basis of Pauli operators for matrices on the support of $\mathcal{U}Z_i\mathcal{U}^{\dagger}$. By our assumption on $l_0$, we know that for each $i$ we have that $|\mathcal{E}_i|\leq 4^{l_0}$. For simplicity we will assume that there are no Pauli words that are contained in two distinct $\mathcal{E}_i$ and we will enumerate all different Pauli words as $\{E_{i,j}\}$ for $1\leq i\leq n$ and $1\leq j\leq 4^{l_0}$ indicating the elements of the different $\mathcal{E}_{i}$.
Thus, there is a $\lambda\in \R^{m}$ with $m\leq n4^{l_0}$ and $\|\lambda\|_{\ell_\infty}\leq1$ such that
\begin{align*}
\sigma(\mathcal{U},\epsilon)\propto\operatorname{exp}\left(-\beta_{\epsilon}\sum\limits_{i=1}^n\sum\limits_{E_{i,j}\in\mathcal{E}_i}\lambda_{i,j}E_{i,j}\right).
\end{align*}
Let $\tilde{\epsilon}>0$ be given. It follows from Eq.~\eqref{equ:bound_entropy_pure} and Pinsker's inequality that picking $\epsilon=\tfrac{\tilde{\epsilon}}{n16}$ is sufficient to ensure that
\begin{align}\label{equ:trace_dist_pure_Gibbs}
\|\ketbra{\psi}-\sigma(\mathcal{U},\tfrac{\tilde{\epsilon}}{n4})\|_{\operatorname{tr}}\leq \frac{\tilde{\epsilon}}{2}.
\end{align}
Measuring 
\begin{align}\label{equ:tildeepsilon}
\cO(4^{l_0}\tilde{\epsilon}^{-2}\log(4^{l_0}n\delta^{-1}))
\end{align}
copies of $\ket{\psi}$ is sufficient to obtain estimates of $\tr{\ketbra{\psi}E_{i,j}}$ up to an $\tilde{\epsilon}/4$ error. By Eq.~\eqref{equ:trace_dist_pure_Gibbs} and a triangle inequality they are also $\tilde{\epsilon}/2$ away from $\tr{\sigma(\mathcal{U},\tfrac{\tilde{\epsilon}}{n4})E_{i,j}}$.

Thus, running the maximum entropy principle with these estimates and the basis of operators given by $\cup_{i=1}^n\mathcal{E}_i$ will yield us an estimate $\sigma(\mu)$ that satisfies:
\begin{align*}
D(\sigma(\mu)\|\sigma(\mathcal{U},\tfrac{\tilde{\epsilon}}{n}))\leq  \log\left(\frac{n}{4\tilde{\epsilon}}\right)\tilde{\epsilon}m\leq \log\left(\frac{n}{4\tilde{\epsilon}}\right)\tilde{\epsilon}4^{l_0}n.
\end{align*}
To obtain the estimate in Wasserstein distance in Eq.~\eqref{equ:samplecomplexity_circuit_wasserstein}, we can pick $\tilde{\epsilon}=\cO(\epsilon/(4^{l_0}\log^2(n4^{l_0}\epsilon^{-1})))$, as in this case
\begin{align*}
\log\left(\frac{n}{4\tilde{\epsilon}}\right)\tilde{\epsilon}m\leq \log\left(\frac{n}{4\tilde{\epsilon}}\right)\tilde{\epsilon}4^{l_0}n=\epsilon\frac{\log\left(n\log^2(n4^{l_0}\epsilon^{-1})\epsilon^{-1}\right)}{\log^2(n4^{l_0}\epsilon^{-1})}n\leq \epsilon n.
\end{align*}
The claim then follows from the results of Thm.~\ref{shallowcircuits} and substituing $\tilde{\epsilon}$ into Eq.~\eqref{equ:tildeepsilon}.

For the statement on the sample complexity for the trace distance, we may pick $\tilde{\epsilon}=\cO(\epsilon/(n4^{l_0}\log^2(n4^{l_0}\epsilon^{-1})))$, which yields the statement after applying Pinsker's inequality.
\end{proof}

This shows that shallow circuits can be learned efficiently as long as $l_0=\cO(\log(n))$. However, it is not immediately obvious how to also estimate the expectation values of the Gibbs states required to run the maximum entropy algorithm. Thus, at least with the methods presented here, the postprocessing takes exponential time in the number of qubits.

\paragraph{Ground states of gapped systems:} in light of our results for shallow circuits, it is natural to ask to what extent our framework can be extended to ground states of gapped Hamiltonians, specially in 1D. Thus, let us briefly comment on the technical barriers on the way of such statements. First, notice that the statement of Thm~\ref{shallowcircuits} required inverse temperatures scaling logarithmically in system size for the approximation of the ground state. Most known TC inequalities have an exponential scaling with the inverse temperature and, thus, TC at this inverse temperature the savings compared to Pinsker are not quadratic, hindering a straightforward extension to gapped systems. There are some nontrivial examples of ground states satisfying a TC inequality with the constant depending inverse linearly with the temperature, like graph states~\cite{temme_hypercontractivity_2014}. But, as they can also be prepared from a constant depth quantum circuit almost by definition, they fall into the assumptions of the previous statement.

However, the results of~\cite{dalzell_locally_2019} assert that $k$-local density matrices of ground states of gapped Hamiltonians in $1D$ can be approximated by constant depth circuits, giving evidence that our framework should also extend to such states. And to get there, a technical obstacle has to be overcome in the proof of Thm.~\ref{shallowcircuits}. Essentially we need to show that local reduced density matrices of the ground state are already well-approximated at inverse temperature $\log(\epsilon^{-1})$. With such a statement we could show that we can still approximate the expectation values of $E_i$ at inverse temperature $\log(\epsilon^{-1})$ from samples from the ground state $\ket{\psi}$.

\section{Summary of known strong convexity constants}\label{sec:strong_convexity}
As we see in the statement of Thm.~\ref{thm:comcomplexitygeneral} and \ref{thm:quick_dirty}, having a bound on the strong convexity constant $L^{-1}$ can give a quadratic improvement in the sample complexity in terms of the error $\epsilon$. Here we will briefly summarize for which cases estimates on $L$ are known in the literature for the classes of states we considered here.

\paragraph{General many-body quantum:} first, we should mention the results of~\cite{2004.07266}. There the authors show bounds on $L^{-1}$ for arbitrary many-body Hamiltonians and temperatures that scales linearly in $m$. Thus, although certainly nontrivial, these bounds do not improve the sample complexity for the regimes we are interested in this work, namely that of logarithmic sample complexity in system size. To obtain improvements in this regime, $L^{-1}$ needs to be at most polylogarithmic in system size.

In the case of high-temperature Gibbs states, the recent work of~\cite{tang_hamiltonian} shows that this is indeed the case. I.e., in their Corollary 4.4 they show that for $\beta=\cO( k^{-8})$, where $k$ is the locality of the Hamiltonian, we indeed have $L^{-1}=\cO(\beta^{-2})$. It should also be noted that their results do not hold only for geometrically local Hamiltonians, but rather any Hamiltonian such that each term acts on at most $k$ qudits. This implies that for the high temperature regime, for which we also have the TC inequality in Thm.~\ref{thm:modifiedmlsi}, the improved sample complexity yielded by our methods holds. Note, however, that there is a slight mismatch between the inverse temperature range for which the two results hold: for the strong convexity we need $\beta=\cO(k^{-8})$, whereas for TC $\beta=\cO(k^{-1})$ suffices.

\paragraph{Commuting Hamiltonians:} as we will prove later in Prop.~\ref{prop:lowerbound}, in the case of commuting Hamiltonians we have that $L^{-1}=\cO(e^{\beta}\beta^{-2})$. Thus, for any constant inverse temperature $\beta>0$ we have an improved sample complexity. However, in order to analyse ground states in $1D$, our current proof techniques still require an inverse temperature scaling logarithmically in system size, so for such states we do not obtain improvements through strong convexity. We plan to address this gap in future work.

Besides the cases mentioned above, we also considered the case of local circuits in this work. For those there are no nontrivial estimates on $L$ available to the best of our knowledge.

\section{Regimes of efficient postprocessing}\label{sec:regimesefficient}
The only question we have still to answer is how to perform the postprocessing efficiently, namely how the parameter $C_{\mathcal{E}}$ appearing in Theorem \ref{thm:quick_dirty} scales and how we obtain the bounds in Table~\ref{tab:performance}. 

There have been many recent breakthroughs on approximating quantum Gibbs states efficiently on classical computers~\cite{kliesch_locality_2014,PhysRevLett.124.220601,harrow2020classical,kuwahara2020gaussian,kuwahara_improved_2021}. The gradient descent method only requires us to estimate the gradient of the partition function for a Gibbs state at each iteration. Thus, any algorithm that can approximate the log-partition function $\mathcal{Z}(\mu)$ efficiently or approximate $e(\lambda)$ suffices for our purposes. 

For Gibbs states on a lattice, the methods of~\cite{PhysRevLett.124.220601} yield that we can perform such approximations on a classical computer in time polynomial in $n$ for temperatures $\beta<\beta_c=1/(k8e^3)$, where $k$ is the locality of the Hamiltonian, and $\epsilon$ inverse polynomial in system size. Thus, we conclude that for this temperature range, which coincides with the range for which the results of Thm.~\ref{thm:modifiedmlsi} hold, $C_{\mathcal{E}}$ is polynomial in system size and we can also obtain efficient classical postprocessing.

For the case of Gibbs states of $1$D systems, to the best of our knowledge, the best results available are those of ~\cite{kuwahara_improved_2021}. They show how to efficiently obtain efficient tensor network approximations for $1$D Gibbs states for $\beta=o(\log(n))$. As such tensor networks can be contracted efficiently as well, this gives an efficient classical algorithm to compute local expectation values of such states, which suffices for our purposes. Thus, the results of~\cite{PhysRevLett.124.220601,kuwahara_improved_2021} ensure that for the systems considered in Thm.~\ref{thm:modifiedmlsi} we can also perform the postprocessing efficiently on a classical computer. 

It is also interesting to consider what happens if we have access to a quantum computer to estimate the gradient. We will discuss the implications of this in the next section for the case of commuting Hamiltonians.

Finally, for local quantum circuits we are not aware at this stage of any method that could yield a better postprocessing complexity than computing the partition function explicitly. This would yield a postprocessing that is exponential in the system size, as it requires diagonalizing an exponentially large matrix.

\section{A complete picture: commuting Gibbs states}\label{sec:completepicturecommuting}
In this section, we discuss two classes of states for which the strongest version of our theorems holds, namely that of commuting $1$D Gibbs states, and the one of high-temperature commuting Gibbs states on arbitrary hypergraphs. We already discussed that they satisfy transportation cost inequalities in Thm.~\ref{thm:modifiedmlsi}. Thus, the last missing ingredient to obtain an optimal performance is to show that the partition function is indeed strongly convex. More precisely, we will now establish that, for these classes of states, both the upper and lower bounds on the Hessian of the log partition function are order $1$. In addition to that, with access to a quantum computer, it is possible to perform the post-processing in time $\tcO(m)$. As writing down the vector $\lambda$ takes $\Omega(m)$ time, we conclude that the postprocessing can be performed in a time comparable to writing down a solution to the problem. Thus, our procedure is essentially optimal.

Also in the setting of commuting Gibbs states, it is worth noting that after the completion of this work, we became aware of~\cite{note_anurag}, which gives another method to learn the Gibbs state and its Hamiltonian that neither involves the maximum entropy method nor requires strong convexity guarantees. Their algorithm works by learning local reduced density matrices and showing that the parameters $\lambda$ of the Hamiltonian of a commuting Gibbs state can be efficiently estimated from that. In principle, obtaining a bound on $\lambda$ also suffices for our purposes and we could alternatively use their methods to bypass having to solve the maximum entropy problem for such states. In particular, this means that the postprocessing with their methods could be performed even for temperatures at which the partition function cannot be estimated efficiently but we still have access to samples from the state.
However, as we ultimately are interested in the regime in which TC inequalities hold, which corresponds to the high-temperature regime, we do not further comment on their results.

In this section, we consider a hypergraph $G=(V,E)$ and assume that there is a maximum radius $r_0\in\mathbb{N}$ such that, for any hyperedge $A\in E$, there exists a vertex $v\in V$ such that the ball $B(v,r_0) $ centered at $v$ and of radius $r_0$ includes $A$. 
In what follows, we also denote by $S(v,r)$ the sphere of radius $r$ centered at vertex $v\in V$, and define for all $r\in \mathbb{N}$:
\begin{align*}
    B(r):=\max_{v\in V}|B(v,r)|\,,\qquad \qquad S(r):=\max_{v\in V}|S(v,r)|\,.
\end{align*}
Next, we consider a Gibbs state $\sigma(\lambda)$ on $\cH_V:=\bigotimes_{v\in V}\cH_v$, where $\operatorname{dim}(\cH_v)=d$ for all $v\in V$. More precisely, $\sigma(\lambda):=e^{-\beta H(\lambda)}/\mathcal{Z}(\lambda)$, where with a slight abuse of notations
\begin{align}\label{eq:hamiltonian}
    H(\lambda)=\sum_{i=1}^m\lambda_i E_i=\sum_{A\in E}\,\sum_{\alpha_A\in [d^2]^{|A|}}\alpha_A\,E_{\alpha_A},
\end{align}
with $\|E_i\|_\infty=1$ for all $i\in\{1,\cdots, m\}$, $[E_i,E_j]=0$ for all $i,j\in \{1,\cdots, m\}$, and where the sets $\{1,\cdots,m\}$ and $\{\alpha_A\}_{A\in E,\alpha_A\in [d^2]^{A}}$ are in bijection. Note that $B(r_0)$ bounds the maximal locality of the Hamiltonian.

We also denote by $\sigma_{A}(\lambda)$ the Gibbs state corresponding to the restriction of $H$ onto the region $A$, i.e. 
 \begin{align*}
     \sigma_A(\lambda):=\frac{e^{-\beta H_A(\lambda)}}{\tr{e^{-\beta H_A(\lambda)}}}\,,~~~~~\text{ where }~~~~H_A(\lambda):=\underset{i|\operatorname{supp}(E_i)\cap A\ne 0}{\sum} \lambda_i E_i\,.
 \end{align*}
Note that, in general, $\sigma_A(\lambda)\ne \Tr_{A^c}(\sigma(\lambda))$. 

\subsection{Upper bound on Hessian for commuting Gibbs states with decay of correlations}

In this section, we prove tightened strong convexity constants for the log partition function in the case when the Gibbs state arises from a local commuting Hamiltonian at high-temperature. In fact, the upper constant found in Proposition \ref{upperconstant} can be tightened into a system size-independent one under the condition of exponential decay of the correlations in the Gibbs state. 
Several notions of exponential decay of correlations exist in the literature~\cite{capel2020modified}. Here, we will say that a Gibbs state $\sigma$ has correlation length $\xi$ if for all observables $O_A,O_B$ supported on non-overlapping regions $A$ and $B$ respectively, we have that:
\begin{align*}
    |\tr{\sigma\, O_A\otimes O_B}-\tr{\sigma\, O_A}\tr{\sigma \,O_B}|\leq c\,\|O_A\|_\infty\,\|O_B\|_\infty e^{-\xi \operatorname{dist}(A,B)}
\end{align*}
for some constant $c>0$. In the classical setting, this condition is known to hold at any inverse temperature for $1$D systems, and below a critical inverse temperature $\beta_c$ that depends on the locality of the Hamiltonian when $D\ge 2$ \cite{dobrushin1987completely}. The same bound also holds in the quantum case for $1$D systems \cite{araki1969gibbs}, or above a critical temperature on regular lattices when $D\ge 2$ \cite{harrow2020classical,kliesch_locality_2014}. Using these bounds, we obtain the following improvement of Proposition \ref{upperconstant} which shows that for this class of states $U=\cO(1)$.

\begin{lem}[Strengthened upper strong convexity at high-temperature and for 1D systems]\label{lem:covariancematrix_decay}
For each $\mu\in B_{\ell_\infty}(0,1)$, let $\sigma(\mu)$ be a Gibbs state at inverse temperature $\beta<\beta_c$ corresponding to the Hamiltonian defined on the hypergraph $G=(V,E)$ in \eqref{eq:hamiltonian}. Then
\begin{align*}
    \nabla^2\,\log(\mathcal{Z}(\mu))\le \Big(c\,\beta^2\,B(r_0)\,B(2r_0)\,d^{2B(r_0)}  \sum\limits_{r=1}^\infty e^{-\xi r}S(r)\Big)\,I\,,
\end{align*}
where $\xi$ is the correlation length of the state. Moreover, this result holds for all $\beta>0$ in $1$D.
\end{lem}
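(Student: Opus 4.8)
The plan is to exploit that for a commuting Hamiltonian the Hessian of the log-partition function is \emph{exactly} a covariance matrix, and then to control its largest eigenvalue by a Gershgorin-type row-sum estimate in which the required off-diagonal decay is supplied directly by the hypothesis of exponential decay of correlations. Concretely, since $[E_i,E_j]=0$, Lemma~\ref{lem:derivatives} collapses to $(\nabla^2\log\mathcal{Z}(\mu))_{ij}=\beta^2(\tr{\sigma(\mu)E_iE_j}-e_i(\mu)e_j(\mu))$, i.e. $\beta^2$ times the covariance of $E_i,E_j$ under $\sigma(\mu)$. This matrix is real symmetric and positive semidefinite, so, exactly as in the proof of Proposition~\ref{upperconstant}, Gershgorin's circle theorem gives $\nabla^2\log\mathcal{Z}(\mu)\le\big(\max_i\sum_j|(\nabla^2\log\mathcal{Z}(\mu))_{ij}|\big)\,I=\big(\beta^2\max_i\sum_j|\mathrm{Cov}_{\sigma(\mu)}(E_i,E_j)|\big)\,I$. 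It therefore suffices to bound a single row sum uniformly in $i$.

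Next I would fix a row index $i$, whose operator $E_i$ is supported on a hyperedge $A_i\subseteq B(v_i,r_0)$, and split the sum over $j$ according to the support distance $s:=d(\mathrm{supp}(E_i),\mathrm{supp}(E_j))$. For $s=0$ (overlapping supports) I use the crude bound $|\mathrm{Cov}|\le 2$; the overlapping operators have their (uniquely assigned) hyperedge-center inside $B(v_i,2r_0)$, and each center carries at most $d^{2B(r_0)}$ of the globally linearly independent $E_j$ (they live in the operator algebra of a region of at most $B(r_0)$ qudits), so there are at most $B(2r_0)\,d^{2B(r_0)}$ of them. For $s\ge 1$ (disjoint supports) the decay-of-correlations hypothesis applies with $\|E_i\|_\infty=\|E_j\|_\infty=1$, giving $|\mathrm{Cov}(E_i,E_j)|\le c\,e^{-\xi s}$. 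I then count the operators at distance $s$: the vertices $u$ with $d(u,A_i)=s$ lie in $\bigcup_{a\in A_i}S(a,s)$, so there are at most $|A_i|\,S(s)\le B(r_0)\,S(s)$ of them, and each such $u$ is contained in at most $B(r_0)\,d^{2B(r_0)}$ operators (at most $B(r_0)$ admissible centers $w\in B(u,r_0)$, each contributing at most $d^{2B(r_0)}$ operators). Assigning each far operator to its nearest vertex in $A_i$ then bounds their number at distance $s$ by $B(r_0)^2\,S(s)\,d^{2B(r_0)}$.

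Combining the two cases, the row sum is at most $\beta^2\big(2B(2r_0)d^{2B(r_0)}+c\,B(r_0)^2 d^{2B(r_0)}\sum_{s\ge1}e^{-\xi s}S(s)\big)$. Since $\sum_{s\ge1}e^{-\xi s}S(s)\ge e^{-\xi}S(1)\ge e^{-\xi}$ whenever the graph carries an edge, the finite $s=0$ term can be absorbed into the series, and using $B(r_0)^2\le B(r_0)B(2r_0)$ I obtain $\le c\,\beta^2 B(r_0)B(2r_0)d^{2B(r_0)}\sum_{r\ge1}e^{-\xi r}S(r)$ after redefining the constant $c$, which is the stated bound. For the one-dimensional addendum I would simply recall that one-dimensional quantum lattice Gibbs states enjoy exponential decay of correlations at \emph{every} inverse temperature~\cite{araki1969gibbs}, so the same argument runs for all $\beta>0$, with the extra simplification that $S(r)=\mathcal{O}(1)$ makes the series trivially convergent.

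The hard part will be the geometric bookkeeping in the disjoint case: translating the correlation-decay estimate, which is naturally phrased in terms of the \emph{support} distance between $E_i$ and $E_j$, into a convergent series weighted by the sphere sizes $S(r)$, while correctly tracking the two sources of hyperedge "thickness"—the factor $B(r_0)$ coming from $|A_i|$ and the factor $B(r_0)d^{2B(r_0)}$ bounding the operator multiplicity sitting at each vertex. One must verify that assigning each far operator to its nearest vertex does not undercount and that these constants genuinely combine to at most $B(r_0)^2\le B(r_0)B(2r_0)$; the spectral and analytic steps are otherwise routine.
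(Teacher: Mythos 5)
Your proposal is correct and follows essentially the same route as the paper: commutativity reduces the Hessian to a covariance matrix, Gershgorin's circle theorem reduces the bound to a row sum, and exponential decay of correlations plus a geometric count of the basis operators at each distance $r$ (yielding the factors $B(r_0)$, $S(r)$, $B(2r_0)$ and $d^{2B(r_0)}$) controls that sum. Your explicit treatment of the overlapping-support ($s=0$) terms, which the paper's sum over $r\ge 1$ silently absorbs into the constant $c$, is if anything slightly more careful than the original.
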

\begin{proof}
Let us first use the assumption of commutativity to simplify the expression for the Hessian. We find for all $\alpha_A\in [d^2]^{|A|},\alpha_B'\in [d^2]^{|B|}$, $A,B\in E$,
\begin{align}\label{eq:hesscomm}
    (\nabla^2\,\log(\mathcal{Z}(\mu)))_{\alpha_A\alpha_B'}=\beta^2\,\Tr\big[\sigma(\mu)\,(E_{\alpha_A}-\tr{\sigma(\mu)\,E_{\alpha_A}})\,(E_{\alpha_B'}-\tr{\sigma(\mu)\,E_{\alpha_B'}})\big]\,.
\end{align}
The rest follows similarly to Proposition \ref{upperconstant} from Gershgorin's circle theorem together with the decay of correlations arising at $\beta<\beta_c$: for all $\alpha_A$,
\begin{align*}
    \sum\limits_{\alpha_B'\not= \alpha_A} \left|\left(\nabla^2\,\log(\mathcal{Z}(\mu))\right)_{\alpha_A\alpha_B'}\right|\le c\,\beta^2\,\sum\limits_{\alpha_B'\not=\alpha_A} e^{-\xi \operatorname{dist}(A,B)}\,,
\end{align*}
where we also used that the basis operators $\{E_i\}$ have operator norm $1$. The claim then follows by bounding the number of basis operators whose support is at a distance $r$ of $A$ for each $r\in\mathbb{N}$: the latter is bounded by the product of (i) the number of vertices in $A$; (ii) the number of vertices at a distance $r$ of a given vertex; (iii) the number of hyperedges containing a given vertex; and (iv) the number of basis operators corresponding to a given hyperedge. A simple estimate of each of these quantities gives the bound $B(r_0)\,S(r)\,B(2r_0)\,d^{2B(r_0)}$. Therefore:
\begin{align*}
  \sum\limits_{\alpha_B'\not= \alpha_A} \left|\left(\nabla^2\,\log(\mathcal{Z}(\mu))\right)_{\alpha_A\alpha_B'}\right|\le c\,\beta^2\,B(r_0)\,B(2r_0)\,d^{2B(r_0)}  \sum\limits_{r=1}^\infty e^{-\xi r}S(r)\,.
\end{align*}
\end{proof}
Note that for $D$-regular graphs and $r_0=\cO(1)$ we have $B(r_0)\,B(2r_0)\,d^{2B(r_0)}=\cO(1)$ and $S(r)=\cO(r^{D-1})$, giving a scaling of $ \nabla^2\,\log(\mathcal{Z}(\mu))=\cO(\beta^2\xi^D)$.

\subsection{Lower bound on Hessian for commuting Gibbs states}

Whenever the Gibbs state is assumed to be commuting, the lower strong convexity constant $L$ can also be made independent of $m$, by a direct generalization of the classical argument as found in \cite{vuffray2016interaction}[Lemma 7] or \cite{montanari2015computational} (see also \cite{2004.07266}).

Before we state and prove our result, let us introduce a few useful concepts: given a full-rank quantum state $\sigma$, we denote the weighted $2$-norm of an observable $Y$ as
\begin{align*}
\|Y\|_{2,\sigma}:=\tr{\big|\sigma^{\frac{1}{4}} Y\,\sigma^{\frac{1}{4}}\big|^2}^{1/2}\,.
\end{align*}
and refer to the corresponding non-commutative weighted $L_2$ space with inner product $\langle X,Y\rangle_{\sigma}:=\tr{\sigma^{\frac{1}{2}}X^\dagger\sigma^{\frac{1}{2}}Y}$ as $L_2(\sigma)$.
The Petz recovery map corresponding to a subregion $A\subset V$ with respect to $\sigma(\mu)$ is defined as 
\begin{align*}
\mathcal{R}_{A,\sigma(\mu)}(X)=\Tr_A(\sigma(\mu))^{-1/2}\Tr_A(\sigma(\mu)^{1/2}X\sigma(\mu)^{1/2})\Tr_A(\sigma(\mu))^{-1/2}\,,
\end{align*}
We will also need the notion of a conditional expectation $\mathbb{E}_A$ with respect to the state $\sigma(\mu)$ into the region $A\subset \Lambda$ (see \cite{kastoryano2016quantum,bardet2020approximate} for more details). For instance, one can choose  $\mathbb{E}_A:=\lim_{n\to\infty }\mathcal{R}_{A,\sigma(\mu)}^n$, where $\mathcal{R}_{A,\sigma(\mu)}$ is the Petz recovery map of $\sigma(\mu)$. In other words, the map $\mathbb{E}_A$ is a completely positive, unital map that projects onto the algebra $\mathcal{N}_A$ of fixed points of $\mathcal{R}_{A,\sigma(\mu)}$. This algebra is known to be expressed as the commutant \cite{bardet2020approximate}
\begin{align*}
    \mathcal{N}_A:=\{\sigma(\mu)^{it}\mathcal{B}(\cH_A)\sigma(\mu)^{-it};\,t\in\mathbb{R}\}'\,.
\end{align*}
Moreover, the maps $\mathbb{E}_A$ commute with the modular operator $\Delta_{\sigma(\mu)}(.):=\sigma(\mu)(.)\sigma(\mu)^{-1}$, and for any $X_A,Z_A\in \cN_A$ and all $Y\in \cB(\cH_V)$, $\mathbb{E}_A[X_AYZ_A]=X_A\mathbb{E}_A[Y]Z_A$. The commutativity condition for $H(\mu)$ implies frustration freeness of the family of conditional expectations $\{\mathbb{E}_A\}_{A\in E}$: for any $X\in L_2(\sigma(\mu))$, $\|\mathbb{E}_A(X)\|_{2,\sigma(\mu)}^2+\|(\operatorname{id}-\mathbb{E}_A)(X)\|_{2,\sigma(\mu)}^2=\|X\|_{L_2(\sigma(\mu))}^2$.

The next technical lemma is essential in the derivation of our strong convexity lower bound. With a slight abuse of notations, we use the simplified notations $\sigma_{x}(\lambda):=\sigma_{\{x\}}(\lambda)$, $H_x(\lambda):=H_{\{x\}}(\lambda)$ and so on.
\begin{lem}\label{technlemmacontract}
Let $H=\sum_j\,\mu_j\,E_j$ be a local commuting Hamiltonian on the hypergraph $G=(V,E)$ defined in \eqref{eq:hamiltonian}, each local operator $E_j$ is further assumed to be traceless. The following holds for any $x\in V$:
\begin{align*}
    c(x,\beta):=\max_{\mu\in B_{\ell_\infty}(0,1)}\frac{\|\mathcal{R}_{x,\sigma_x(\mu)}(H_x)-\tr{\sigma_x(\mu)H_x}I\|_{2,\sigma_x(\mu)}}{\|H_x-\tr{\sigma_x(\mu) H_x}I\|_{2,\sigma_x(\mu)}}<1\,.
\end{align*}
\end{lem}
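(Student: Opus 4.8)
The plan is to recognize the ratio defining $c(x,\beta)$ as the contraction factor of the Petz map $\mathcal{R}_{x,\sigma_x(\mu)}$ applied to the centered observable $Y:=H_x-\tr{\sigma_x(\mu)\,H_x}\,I$, and to prove the strict inequality by showing that $Y$ is \emph{not} a fixed point of $\mathcal{R}_{x,\sigma_x(\mu)}$, i.e. $Y\notin\mathcal{N}_x$. The bound $\le 1$ on the ratio will be automatic, so the entire content is the strictness together with its uniformity over the compact parameter ball $B_{\ell_\infty}(0,1)$.

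First I would set up the operator-theoretic framework. Viewed as a map on $L_2(\sigma_x(\mu))$, the Petz recovery map $\mathcal{R}_{x,\sigma_x(\mu)}$ is unital, since $\mathcal{R}_{x,\sigma_x(\mu)}(I)=\Tr_x(\sigma_x(\mu))^{-1/2}\Tr_x(\sigma_x(\mu))\Tr_x(\sigma_x(\mu))^{-1/2}=I$, and it is self-adjoint with respect to $\langle\cdot,\cdot\rangle_{\sigma_x(\mu)}$; this is the detailed-balance property underlying the identity $\mathbb{E}_x=\lim_n\mathcal{R}^n_{x,\sigma_x(\mu)}$ recalled above. Being a self-adjoint, completely positive, unital map, it is a contraction on $L_2(\sigma_x(\mu))$ with spectrum in $[0,1]$, and since its powers converge to $\mathbb{E}_x$ its only peripheral eigenvalue is $1$, with eigenspace exactly the fixed-point algebra $\mathcal{N}_x$. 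Consequently $\|\mathcal{R}_{x,\sigma_x(\mu)}(Y)\|_{2,\sigma_x(\mu)}<\|Y\|_{2,\sigma_x(\mu)}$ as soon as $Y$ has a nonzero component orthogonal to $\mathcal{N}_x$, that is, as soon as $Y\notin\mathcal{N}_x$.

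The crux is therefore to establish $H_x-\tr{\sigma_x(\mu)\,H_x}\,I\notin\mathcal{N}_x$, and here commutativity is decisive. Since $\sigma_x(\mu)$ is the Gibbs state of $H_x$ we have $[\sigma_x(\mu),H_x]=0$, hence $\sigma_x(\mu)^{-it}Y\sigma_x(\mu)^{it}=Y$. Using the commutant description $\mathcal{N}_x=\{\sigma_x(\mu)^{it}\mathcal{B}(\cH_x)\sigma_x(\mu)^{-it}:t\in\R\}'$, membership $Y\in\mathcal{N}_x$ reduces to $[Y,B_x]=0$ for all $B_x\in\mathcal{B}(\cH_x)$, i.e. to $H_x$ acting trivially on site $x$. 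But $H_x=\sum_{i:\,x\in\operatorname{supp}(E_i)}\mu_i E_i$ is a combination of \emph{linearly independent} terms each of which acts nontrivially on $x$; decomposing on a Hermitian operator basis of the site $x$ and using that a nonzero element of $\operatorname{span}\{E_i\}$ that is trivial on $x$ would have to lie in the span of the basis operators whose support avoids $x$, the uniqueness of the strictly-supported decomposition forces $H_x$ to act nontrivially on $x$ whenever the $x$-components of $\mu$ are nonzero. Thus $Y\notin\mathcal{N}_x$ and the contraction is strict for every such $\mu$.

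Finally I would upgrade the pointwise strict inequality to a strict bound on the maximum. The numerator and denominator depend real-analytically on $\mu$ through $\sigma_x(\mu)$ and $H_x$, so the ratio is continuous on the compact ball wherever $Y\ne 0$; the only degenerate points are those with $H_x\propto I$, which by tracelessness means $H_x=0$. These are handled by the uniform spectral-gap estimate $\|\mathcal{R}_{x,\sigma_x(\mu)}(Y)\|_{2,\sigma_x(\mu)}^2\le \|\mathbb{E}_x(Y)\|_{2,\sigma_x(\mu)}^2+\lambda_2(\mu)^2\,\|(\operatorname{id}-\mathbb{E}_x)(Y)\|_{2,\sigma_x(\mu)}^2$, where $\lambda_2(\mu)$ is the largest eigenvalue of $\mathcal{R}_{x,\sigma_x(\mu)}$ strictly below $1$; since $Y$ is linear in $\mu$, the ratio $\|(\operatorname{id}-\mathbb{E}_x)(Y)\|/\|Y\|$ stays bounded away from $0$ along rays toward the degenerate locus, so the ratio remains bounded below $1$ there as well. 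A continuous function that is $<1$ on a compact set attains a maximum which is again $<1$, giving $c(x,\beta)<1$. The main obstacle I anticipate is precisely this last uniformity step — controlling $\lambda_2(\mu)$ away from $1$ and the component ratio away from $0$ uniformly over the ball and across the locus where $Y$ vanishes — rather than the pointwise strictness, which follows cleanly from the fixed-point analysis.
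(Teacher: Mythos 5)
Your proposal is correct and follows essentially the same route as the paper: both arguments reduce the strict inequality to showing that the unital, self-adjoint Petz map strictly contracts anything outside its fixed-point algebra $\mathcal{N}_x$, then use the commutant description of $\mathcal{N}_x$ together with commutativity of $H(\mu)$ to show that $H_x\in\mathcal{N}_x$ would force $H_x$ to act trivially on site $x$, contradicting tracelessness. The only substantive difference is that you explicitly treat the uniformity of the bound over the compact ball $B_{\ell_\infty}(0,1)$ and the degenerate locus $H_x=0$, a point the paper's proof leaves implicit; your continuity-plus-homogeneity sketch for that step is sound.
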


\begin{proof}
 We first prove that $X= \mathcal{R}_{x,\sigma_x(\mu)}(X)$ is equivalent to $\|\mathcal{R}_{x,\sigma_x(\mu)}(X)\|_{2,\sigma_x(\mu)}=\|X\|_{2,\sigma_x(\mu)}$. One direction trivially holds. For the opposite direction, assume that $X\ne \mathcal{R}_{x,\sigma_x(\mu)}(X)$. This means that $X=Y+Z$, with $Y,Z$ two operators that are orthogonal in $L_2(\sigma_x(\mu))$, with $\mathcal{R}_{x,\sigma_x(\mu)}(Y)=Y$ and $Z\ne 0$. Now, since $\mathcal{R}_{x,\sigma_x(\mu)}$ is self-adjoint and unital, it strictly contracts elements in the orthogonal of its fixed points and we have
\begin{align*}
    \|\mathcal{R}_{x,\sigma_x(\mu)}(X)\|_{2,\sigma_x(\mu)}^2&=\|Y\|_{2,\sigma_x(\mu)}^2+\|\mathcal{R}_{x,\sigma_x(\mu)}(Z)\|_{2,\sigma_x(\mu)}^2\\
    &<\|Y\|_{2,\sigma_x(\mu)}^2+\|Z\|_{2,\sigma_x(\mu)}^2\\
    &=\|X\|_{2,\sigma_x(\mu)}^2\,,
\end{align*}
which contradicts the condition of equality of the norms. Now, since the map $\mathcal{R}_{x,\sigma_x(\mu)}$ is unital, it suffices to prove that $\mathcal{R}_{x,\sigma_x(\mu)}(H_x)\ne H_x$, or equivalently $\mathbb{E}_x(H_x)\ne H_x$, in order to conclude. Let us assume instead that equality holds. This means that, for all observables $A_x$ supported on site $x$, and all $t\in\mathbb{R}$:
\begin{align*}
    [H_x,\sigma(\mu)^{it}A_x\sigma(\mu)^{-it}]=\sigma(\mu)^{it}[H_x,A_x]\sigma(\mu)^{-it}=0\,~~~~\Rightarrow ~~~~H_x=\frac{I_x}{d}\otimes \Tr_{x}(H_x)\,. 
\end{align*}
However, this contradicts the fact that $H_x$ is traceless on site $x$. Therefore $\mathcal{R}_{x,\sigma_x(\mu)}(H_x)\ne H_x$ and the proof follows.
\end{proof}
We are ready to prove our strong convexity lower bound.

\begin{prop}[Strengthened lower strong convexity constant for commuting Hamiltonians]\label{prop:lowerbound}
For each $\mu\in B_{\ell_\infty}(0,1)$, let $\sigma(\mu)$ be a Gibbs state at inverse temperature $\beta$ corresponding to the commuting Hamiltonian $H(\mu)=\sum_j\,\mu_j\,E_j$ on the hypergraph $G=(V,E)$ defined in \eqref{eq:hamiltonian}, where $\tr{E_iE_j}=0$ for all $i\not=j$ and each local operator $E_j$ is traceless on its support. Then the Hessian of the log-partition function is lower bounded by
\begin{align*}
    \nabla^2\,\log\,\mathcal{Z}(\mu)\ge 
    \beta^2\,e^{-\beta (B(2r_0)+2B(4r_0))}d^{-B(2r_0)}\,(1-c(\beta)^2)\,I\,,
\end{align*}
where $c(\beta):=\max_{v\in V}c(v,\beta)$.
\end{prop}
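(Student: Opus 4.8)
The plan is to recognize $\nabla^2\log\mathcal{Z}(\mu)$ as $\beta^2$ times a (classical, since everything commutes) covariance matrix and to lower bound its smallest eigenvalue by a single--site argument built on the strict contraction of the Petz recovery map established in Lemma~\ref{technlemmacontract}. By Lemma~\ref{lem:derivatives} together with $[E_i,E_j]=0$, the Hessian is exactly the covariance matrix recorded in~\eqref{eq:hesscomm}, namely $(\nabla^2\log\mathcal{Z}(\mu))_{ij}=\beta^2\,\operatorname{Cov}_{\sigma(\mu)}(E_i,E_j)$. Consequently, for an arbitrary $v\in\mathbb{R}^m$ and $O:=\sum_j v_j E_j$ one has $v^\top\nabla^2\log\mathcal{Z}(\mu)\,v=\beta^2\operatorname{Var}_{\sigma(\mu)}(O)=\beta^2\,\|O-\tr{\sigma(\mu)O}I\|_{2,\sigma(\mu)}^2$, the last equality using that $O$ commutes with $\sigma(\mu)$. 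It therefore suffices to prove the scalar bound $\operatorname{Var}_{\sigma(\mu)}(O)\ge e^{-\beta(B(2r_0)+2B(4r_0))}\,d^{-B(2r_0)}\,(1-c(\beta)^2)\,\|v\|_{\ell_2}^2$, uniformly in $v$ and $\mu\in B_{\ell_\infty}(0,1)$.

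Second: localization. Since $H(\mu)$ is commuting, the conditional expectations $\{\mathbb{E}_x\}_{x\in V}$ are orthogonal projections in $L_2(\sigma(\mu))$ by the frustration-freeness recalled above, so that the ``law of total variance'' $\|Y-\langle Y\rangle\|_{2,\sigma(\mu)}^2=\|(\operatorname{id}-\mathbb{E}_x)(Y)\|_{2,\sigma(\mu)}^2+\|\mathbb{E}_x(Y)-\langle Y\rangle\|_{2,\sigma(\mu)}^2$ holds for every $Y$, where $\langle\cdot\rangle=\tr{\sigma(\mu)\,\cdot}$. Fixing an ordering $x_1,\dots,x_n$ of $V$ and letting $F_k$ denote the conditional expectation onto the subalgebra associated with the unexposed sites $\{x_{k+1},\dots,x_n\}$, the telescoping decomposition $O-\langle O\rangle=\sum_{k=1}^n(F_{k-1}-F_k)(O)$ has pairwise $L_2(\sigma(\mu))$-orthogonal increments, whence $\operatorname{Var}_{\sigma(\mu)}(O)=\sum_{k=1}^n\|(F_{k-1}-F_k)(O)\|_{2,\sigma(\mu)}^2$. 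Each increment only feels the terms $E_j$ whose support contains the newly exposed site $x_k$; assigning every index $j$ to the last site of its support in the chosen order then lets us recover the full $\sum_j v_j^2$ after summing, provided each increment is bounded below by the squared coefficients anchored at $x_k$.

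Third: the single-site lower bound, where Lemma~\ref{technlemmacontract} and the local/global comparison enter. On a fixed site $x_k$ I would replace the global state by the local Gibbs state $\sigma_{x_k}(\mu)$ on the ball $B(x_k,2r_0)$, which contains the supports of all $E_j$ meeting $x_k$; bounding the boundary interaction terms crossing this ball (their number controlled by $B(2r_0)$ and $B(4r_0)$) yields operator inequalities of the form $e^{-\beta(B(2r_0)+2B(4r_0))}\,\sigma_{x_k}(\mu)\preceq \Tr_{B(x_k,2r_0)^c}(\sigma(\mu))\preceq e^{\beta(B(2r_0)+2B(4r_0))}\,\sigma_{x_k}(\mu)$, accounting for the exponential factor. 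The strict contraction of the Petz map, $\|\mathcal{R}_{x_k,\sigma_{x_k}(\mu)}(\cdot)-\langle\cdot\rangle\|_{2,\sigma_{x_k}}\le c(\beta)\|\cdot-\langle\cdot\rangle\|_{2,\sigma_{x_k}}$ from Lemma~\ref{technlemmacontract}, translates into $\|(\operatorname{id}-\mathbb{E}_{x_k})(\cdot)\|_{2,\sigma_{x_k}}^2\ge(1-c(\beta)^2)\|\cdot-\langle\cdot\rangle\|_{2,\sigma_{x_k}}^2$ on the relevant local block, producing the factor $(1-c(\beta)^2)$; finally, comparing the weighted $L_2(\sigma_{x_k})$ norm with the Hilbert--Schmidt norm on the ball and invoking the orthogonality $\tr{E_iE_j}=0$ with tracelessness converts the local squared norm of $\sum_{j\,\mathrm{anchored}\,\mathrm{at}\,x_k}v_jE_j$ into $d^{-B(2r_0)}\sum_{j\,\mathrm{anchored}\,\mathrm{at}\,x_k}v_j^2$, at the cost of the dimensional factor $d^{-B(2r_0)}$. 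Summing over $k$ recovers $\|v\|_{\ell_2}^2$ and closes the bound.

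Fourth: the main obstacle. The delicate point is to make the single-site contraction work in every direction simultaneously: Lemma~\ref{technlemmacontract} is phrased for the specific local term $H_{x}$, whereas the increment at $x_k$ involves the arbitrary local combination $\sum_{j\,\mathrm{anchored}\,\mathrm{at}\,x_k} v_j E_j$, so I would first need to upgrade the contraction to hold uniformly over all traceless observables supported near $x_k$ (which the fixed-point/commutant argument in the proof of Lemma~\ref{technlemmacontract} in fact supports, since it only uses that such observables have a nonzero component orthogonal to $\mathcal{N}_{x_k}$). The second, more technical obstacle is the quantitative comparison between $\sigma(\mu)$, its reduction onto the ball, and the genuinely local Gibbs state $\sigma_{x_k}(\mu)$---recall that $\sigma_A(\mu)\ne\Tr_{A^c}(\sigma(\mu))$ in general---where one must carefully bound the boundary terms to obtain exactly the $e^{-\beta(B(2r_0)+2B(4r_0))}d^{-B(2r_0)}$ prefactor; this local-to-global estimate, rather than the contraction itself, is where the bulk of the work lies.
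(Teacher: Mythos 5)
Your overall strategy --- identifying the Hessian as $\beta^2$ times a covariance matrix, reducing to a variance lower bound for $O=\sum_j v_jE_j$, localizing via conditional expectations, and then applying the Petz-map contraction of Lemma~\ref{technlemmacontract} together with a local-to-global comparison of Gibbs states --- matches the paper's, and your first and third steps are essentially the ones used there. The genuine gap is in your second step, the localization. You decompose $\operatorname{Var}_{\sigma(\mu)}(O)=\sum_k\|(F_{k-1}-F_k)(O)\|_{2,\sigma(\mu)}^2$ along a filtration of conditional expectations and assert that the increment at $x_k$ ``only feels the terms $E_j$ whose support contains $x_k$''. This is not true for quantum commuting Gibbs states: $\mathbb{E}_A$ projects onto the commutant $\mathcal{N}_A=\{\sigma(\mu)^{it}\mathcal{B}(\cH_A)\sigma(\mu)^{-it}:\,t\in\R\}'$, so it fixes only observables supported sufficiently far from $A$ (the modular twist spreads supports by the interaction range); an $E_j$ whose support merely neighbours $x_k$ can contribute a nonzero increment. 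Consequently the increment at $x_k$ is a sum of contributions $v_j(F_{k-1}-F_k)(E_j)$ over all $j$ with $\operatorname{supp}(E_j)$ near $x_k$, not only those anchored there, and these can interfere destructively. Your bookkeeping (assigning each $j$ to the last site of its support) gives no control over these cross terms, and the inequality bounding each increment below by $(1-c(\beta)^2)e^{-\beta(\cdots)}d^{-B(2r_0)}$ times the sum of the anchored $v_j^2$ is precisely the hard part and is left as a ``provided''.

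The paper closes exactly this hole with a different localization: it fixes a single region $A\subseteq V$ chosen so that the support of every $E_i$ intersects $A$ in exactly one vertex, bounds the variance below by $\|H-\mathbb{E}_A[H]\|_{2,\sigma(\mu)}^2$ using one conditional expectation, and then proves the exact identity $\langle\mathbb{E}_A[E_i],\mathbb{E}_A[E_j]\rangle_{\sigma(\mu)}=\langle E_i,E_j\rangle_{\sigma(\mu)}$ whenever $E_i,E_j$ meet $A$ at different vertices (via $\mathbb{E}_A[E_i]=\mathbb{E}_x[E_i]$ for the unique $x\in\operatorname{supp}(E_i)\cap A$ and the fixed-point structure of the $\mathbb{E}_x$). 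This kills all cross terms between distinct anchor sites identically rather than estimating them, so the variance splits exactly as $\sum_{x\in A}\|H_x-\mathbb{E}_x[H_x]\|^2_{2,\sigma(\mu)}$ with $H_x=\sum_{j:\,\operatorname{supp}(E_j)\ni x}\lambda_jE_j$; from there your third step (the operator inequality $\sigma(\mu)\ge e^{-2\beta\|K_x^0(\mu)\|_\infty}\sigma_x(\mu)\,(\cdots)$, the Petz contraction, and the passage to the Schatten norm via $\lambda_{\min}(\sigma_x(\mu))$ and $\tr{E_iE_j}=0$) goes through essentially as you describe. Note also that Lemma~\ref{technlemmacontract} is already stated for the full local combination $H_x$, so the first ``obstacle'' you raise is not really an issue, whereas the cancellation problem above --- which you do not flag --- is the one that matters.
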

\begin{proof}
We first use the assumption of commutativity in order to simplify the expression for the Hessian. As in \eqref{eq:hesscomm}, we find
\begin{align*}
    (\nabla^2\,\log(\mathcal{Z}(\mu)))_{ij}=\beta^2\,\Tr\big[\sigma(\mu)\,(E_i-\tr{\sigma(\mu)\,E_i})\,(E_j-\tr{\sigma(\mu)\,E_j})\big]\,.
\end{align*}
Therefore, for any linear combination $H\equiv H(\lambda)=\sum_{j}\,\lambda_j\,E_j$ of the basis vectors, we have
\begin{align*}
    \sum_{ij}\,\lambda_i\lambda_j\,(\nabla^2\log(\mathcal{Z}(\mu)))_{ij}=\beta^2\operatorname{Var}_{\sigma(\mu)}(H)\,,
\end{align*}
with $\operatorname{Var}_{\sigma(\mu)}(X):=\|X-\tr{\sigma(\mu) X}\|_{2,\sigma(\mu)}^2$.
It is thus sufficient to lower-bound the latter. For this, we choose a subregion $A\subseteq V$ such that any basis element $E_i$ has support intersecting a unique vertex in $A$. We lower-bound the variance by
\begin{align}
    \|H-\tr{\sigma(\mu)H}I\|_{2,\sigma(\mu)}^2\ge \|(\id-\mathbb{E}_A)(H-\tr{\sigma(\mu)H}I)\|_{2,\sigma(\mu)}^2= \|H-\mathbb{E}_A[H]\|_{2,\sigma(\mu)}^2\,,\label{variancetobound}
\end{align}
where the first inequality follows by the $L_2(\sigma(\mu))$ contractivity of $\id-\mathbb{E}_A$. Now, the weighted norm can be further simplified into a sum of local weighted norms as follows:  first, for any two $E_i,E_j$ whose supports intersect with a different vertex of $A$, we have 
\begin{align}
    \langle \mathbb{E}_A[E_i],\,\mathbb{E}_A[E_j]\rangle_{\sigma(\mu)}&=\tr{\sigma(\mu)^{1/2}\mathbb{E}_A[E_i]\sigma(\mu)^{1/2}\mathbb{E}_A[E_j] }\nonumber\\
    &=\tr{\sigma(\mu)\Delta_{\sigma(\mu)}^{-1/2}\circ \mathbb{E}_A[E_i] \mathbb{E}_A[E_j]}\nonumber\\
    &=\tr{\sigma(\mu)\mathbb{E}_A[E_i] \mathbb{E}_A[E_j]}\label{eqoverlap}
\end{align}
where in the third line we used the commutation of the modular operator $\Delta_{\sigma(\mu)}(X):=\sigma(\mu)X\sigma(\mu)^{-1}$ with $\mathbb{E}_A$ together with the commutativity of $\sigma(\mu)$ and $E_i$. Now, denoting $\operatorname{supp}(E_i)\cap A=\{x\}$ and $\operatorname{supp}( E_j)\cap A=\{y\}$, we show that
\begin{equation}
\label{eaec}
    \mathbb{E}_A[E_i]=\mathbb{E}_x[E_i]~~~~~\text{ and }~~~~~ \mathbb{E}_A[E_j]=\mathbb{E}_y[E_j]\,.
\end{equation}
In order to prove these two identities, we simply need to prove for instance that $\mathbb{E}_x[E_i]$ belongs to the image algebra $\mathcal{N}_A$ of $\mathbb{E}_A$ since $\cN_A\subseteq\cN_x$ by definition. Hence, it is enough to show that $\mathbb{E}_x[E_i]$ commutes with operators of the form $\sigma(\mu)^{it}X_A\sigma(\mu)^{-it}$ for any $t\in\mathbb{R}$ and any $X_A\in\mathcal{B}(\mathcal{H}_A)$. This claim follows from
\begin{align*}
    \mathbb{E}_x[E_i]\sigma(\mu)^{it}X_A\sigma(\mu)^{-it}&=\sigma(\mu)^{it}\sigma(\mu)^{-it}\mathbb{E}_x[E_i]\sigma(\mu)^{it}X_A\sigma(\mu)^{-it}\\
    &=\sigma(\mu)^{it}\mathbb{E}_x[\sigma(\mu)^{-it}E_i\sigma(\mu)^{it}]X_A\sigma(\mu)^{-it}\\
    &=\sigma(\mu)^{it}\mathbb{E}_x[E_i]X_A\sigma(\mu)^{-it}\\
    &=\sigma(\mu)^{it}X_A\mathbb{E}_x[E_i]\sigma(\mu)^{-it}\\
    &=\sigma(\mu)^{it}X_A\sigma(\mu)^{-it}\mathbb{E}_x[E_i]\,.
\end{align*}
where the fourth line follows from the fact that the support of $\mathbb{E}_x[E_i]$ does not intersect $A\backslash\{i\}$, together with the fact that $\mathbb{E}_x[E_i]$ is locally proportional to $I$ on site $x$, by definition of $\mathcal{N}_x$. Therefore, using (\ref{eaec}) into (\ref{eqoverlap}), we get
\begin{align*}
    \langle \mathbb{E}_A[E_i],\,\mathbb{E}_A[E_j]\rangle_{\sigma(\mu)}&=\tr{\sigma(\mu)\mathbb{E}_x[E_i] \mathbb{E}_y[E_j]}\\
    &=\tr{\sigma(\mu)\mathbb{E}_x\mathbb{E}_y[E_iE_j] }\\
    &=\tr{\sigma(\mu)E_iE_j}\\
    &=\langle E_i,\,E_j\rangle_{\sigma(\mu)}\,,
\end{align*}
where in the second line we used that $\mathbb{E}_x[E_i]$ is a fixed point of $\mathbb{E}_y$, and then that $E_j$ is a fixed point of $\mathbb{E}_x$, by the support conditions of $E_i$ and $E_j$. Therefore, the variance on the right-hand side of (\ref{variancetobound}) can be simplified as
\begin{align*}
    \|H-\mathbb{E}_A[H]\|_{2,\sigma(\mu)}^2&=\sum_{x\in A}\,\big\|\sum_{j|\,\operatorname{supp}(E_j)\ni x}\lambda_j\,(E_j-\mathbb{E}_x[E_j])\big\|_{2,\sigma(\mu)}^2\\
    &=\sum_{x\in A}\,\big\|H_x-\mathbb{E}_x[H_x]\big\|_{2,\sigma(\mu)}^2
\end{align*}
where we recall that $H_x:=\sum_{j|\,\operatorname{supp}(E_j)\ni x}\lambda_j\,E_j$. 
Now, for any $x\in V$, we denote $x\partial:=\{A\in E|\,x\in A\}$ and decompose the Hamiltonian $H(\mu)$ as
\begin{align}
    H(\mu)=H_x(\mu)+K_x^0(\mu)+K_x^1(\mu)\,,\quad \text{ where }\quad \left\{\begin{aligned}
    &\operatorname{supp}(K_x^1(\mu))\cap x\partial =\emptyset\\
    &\operatorname{supp}(K_x^0(\mu))\cap x\partial \ne \emptyset
    \end{aligned}\right.\,.
\end{align}
Clearly, 
\begin{align}\label{ineqstate}
    \sigma(\mu)\ge \sigma_x(\mu)\,e^{-2\beta\,\|K_x^0(\mu)\|_\infty}\,\frac{e^{-\beta K_x^1(\mu)}}{\operatorname{tr}_{x\partial^c}\big[e^{-\beta K_x^1(\mu)}\big]}\,.
\end{align}
Now, 
\begin{align*}
    \|H_x-\mathbb{E}_x[H_x]&\|_{2,\sigma(\mu)}^2= \tr{\sigma(\mu)(H_x-\mathbb{E}_x[H_x])^2}\\
    &\ge \,e^{-2\beta\,\|K_x^0(\mu)\|_\infty}\,\tr{\sigma_x(\mu)\,\frac{e^{-\beta K_x^1(\mu)}}{\operatorname{tr}_{x\partial^c}\big[e^{-\beta K_x^1(\mu)}\big]}(H_x-\mathbb{E}_x[H_x])^2}\,\\
    &=e^{-2\beta\,\|K_x^0(\mu)\|_\infty}\,\|H_x-\mathbb{E}_x[H_x]\|_{2,\sigma_x(\mu)}^2\\
    &=e^{-2\beta\,\|K_x^0(\mu)\|_\infty}\big(\|H_x-\tr{\sigma_x(\mu)H_x} I\|_{2,\sigma_x(\mu)}^2-\|\mathbb{E}_x[H_x]-\tr{\sigma_x(\mu)H_x} I\|_{2,\sigma_x(\mu)}^2\big)\\
    &\ge e^{-2\beta\,\|K_x^0(\mu)\|_\infty}\big(\|H_x-\tr{\sigma_x(\mu)H_x} I\|_{2,\sigma_x(\mu)}^2-\|\mathcal{R}_{x,\sigma(\mu)}[H_x]-\tr{\sigma_x(\mu)H_x} I\|_{2,\sigma_x(\mu)}^2\big)\\
&    \ge (1-c(\beta)^2)\,e^{-2\beta\,\|K_x^0(\mu)\|_\infty}\,\|H_x-\tr{\sigma_x(\mu)H_x} I\|^2_{2,\sigma_x(\mu)}\,.
\end{align*}
The first and second identities above follow once again from the commutativity of the Hamiltonian similarly to \eqref{eqoverlap}, where for the second one we also use the disjointness of $x\partial$ and $\operatorname{supp}(K_x^1(\mu))$. The first inequality follows from \eqref{ineqstate}. The third identity is a consequence of the fact that $\mathbb{E}_x$ is a projection with respect to $L_2(\sigma_x(\mu))$. The second inequality follows from $\|\mathcal{R}_{x,\sigma(\mu)}[X]\|_{2,\sigma(\mu)}\ge \|\mathbb{E}_x[X]\|_{2,\sigma(\mu)}$ for all $X$ (see Proposition 10 in \cite{kastoryano2016quantum}). The last inequality is a consequence of Lemma \ref{technlemmacontract}. Finally, we further bound the weighted $L_2$ norm on the last line of the above inequality in terms of the Schatten $2$ norm to get
\begin{align}
    \operatorname{Var}_{\sigma(\mu)}(H)&\ge (1-c(\beta)^2)\,\min_{x\in V}\,e^{-2\beta\|K_x^0(\mu)\|_\infty}\,\lambda_{\min}(\sigma_x(\mu)) \sum_{x\in A}\,\|H_x-\tr{\sigma_x(\mu)\,H_x}\|_2^2\\
    &\ge(1-c(\beta)^2)\min_{x\in V}\,e^{-2\beta\|K_x^0(\mu)\|_\infty}\,\lambda_{\min}(\sigma_x(\mu))\,\sum_i\lambda_i^2\,,
\end{align}
where $\lambda_{\min}(\sigma_x(\mu))$ denotes the minimum eigenvalue of $\sigma_x(\mu)$. The result follows from the simple estimates $\|K_x^0(\mu)\|_\infty\le B(4r_0)$ and $\lambda_{\min}(\sigma_x(\mu))\ge e^{-\beta B(2r_0)}d^{-B(2r_0)}$.

\end{proof}

\subsection{Summary for 1D or high-temperature commuting}

Now we genuinely have all the elements in place to essentially give the last word on estimating Lipschitz observables for Gibbs states of nearest neighbor $1$D or high-temperature commuting Gibbs states.
\begin{theorem}
For each $\mu\in B_{\ell_\infty}(0,1)$, let $\sigma(\mu)$ be a Gibbs state at inverse temperature $\beta$ corresponding to the commuting Hamiltonian $H(\mu)=\sum_{j=1}^m\,\mu_j\,E_j$ on the hypergraph $G=(V,E)$ defined in \eqref{eq:hamiltonian}, where $\tr{E_iE_j}=0$ for all $i\not=j$, each local operator $E_j$ is traceless on its support and acts on a constant number of qubits and $m=\cO(n)$.
Moreover, assume that $\sigma(\lambda)$ satisfies the conditions of Thm.~\ref{thm:modifiedmlsi}. Then $\cO(\log(n)\epsilon^{-2})$ samples of $\sigma(\lambda)$ suffice to obtain a $\mu\in B_{\ell_\infty}(0,1)$ satisfying
\begin{align*}
\tr{O(\sigma(\lambda)-\sigma(\mu))}=\cO(\epsilon\sqrt{n}\|O\|_{\operatorname{Lip},\nabla}).
\end{align*}
with probability at least $1-p$.
Moreover, we can find $\mu$ in $\cO(\textrm{poly}(n,\epsilon^{-1}))$ time on a classical computer. With access to a quantum computer, the postprocessing can be performed in $\tcO(n\epsilon^{-2})$ time by only implementing quantum circuits of $\tcO(1)$ depth.
\end{theorem}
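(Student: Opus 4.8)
The plan is to treat this theorem as a synthesis of the machinery built up in the previous sections rather than as a fresh computation, assembling the sample-complexity, classical-time, and quantum-time claims from Theorems~\ref{thm:quick_dirty} and~\ref{thm:comcomplexitygeneral} once all the relevant constants are shown to be order one. The crucial first step is to verify that, under the stated hypotheses, every parameter entering the bound of Theorem~\ref{thm:quick_dirty} is system-size independent. Since each $E_j$ acts on a constant number of qudits, the locality is $k=\cO(1)$ and the radius $r_0=\cO(1)$, so $4^k=\cO(1)$ and the geometric factors $B(r_0),B(2r_0),d^{B(r_0)}$ appearing in Lemma~\ref{lem:covariancematrix_decay} and Proposition~\ref{prop:lowerbound} are all $\cO(1)$. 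Because $\sigma(\lambda)$ satisfies the conditions of Theorem~\ref{thm:modifiedmlsi}, it obeys a transportation cost inequality for $W_{1,\nabla}$ with $\alpha=\Omega(1)$; and because the Hamiltonian is commuting and either high-temperature or one-dimensional, Lemma~\ref{lem:covariancematrix_decay} and Proposition~\ref{prop:lowerbound} give uniform Hessian bounds $U=\cO(\beta^2)$ and $L=\Omega(\beta^2 e^{-c\beta})$, hence a condition number $U/L=\cO(1)$ and in particular $L^{-1}=\cO(1)$ at fixed $\beta$.

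With these constants in hand the sample complexity and the recovery guarantee follow directly. Substituting $4^k,\alpha^{-1},\beta,L^{-1}=\cO(1)$, $m=\cO(n)$ and $\delta=\mathrm{poly}(n)^{-1}$ into Theorem~\ref{thm:quick_dirty}, the factor $\min\{L^{-1},\beta\epsilon^{-2}\}$ collapses to $\cO(1)$ and $\log(m\delta^{-1})=\cO(\log n)$, leaving $\cO(\epsilon^{-2}\log n)$ samples estimated via classical shadows with single-qubit measurements. The same theorem, specialized to the $W_{1,\nabla}$ transportation cost from Theorem~\ref{thm:modifiedmlsi}, then produces a $\mu\in B_{\ell_\infty}(0,1)$ with $|\tr{O(\sigma(\lambda)-\sigma(\mu))}|\le \cO(\epsilon\sqrt{n}\,\|O\|_{\operatorname{Lip},\nabla})$ for all observables, which is exactly the claimed bound.

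For the classical postprocessing I would invoke the exponential-convergence analysis of approximate gradient descent: since $U/L=\cO(1)$, the stopping criterion is reached after only $\cO((U/L)\log(m\epsilon^{-1}))=\cO(\log(n\epsilon^{-1}))$ iterations. Each iteration requires estimating $e(\mu_t)$, i.e.\ local expectation values of a high-temperature (or $1D$) commuting Gibbs state, which can be computed to inverse-polynomial accuracy in $\mathrm{poly}(n)$ time by the cluster-expansion methods of~\cite{PhysRevLett.124.220601} at high temperature and the tensor-network methods of~\cite{kuwahara_improved_2021} in $1D$. Feeding these costs into Theorem~\ref{thm:comcomplexitygeneral} yields total classical running time $\mathrm{poly}(n,\epsilon^{-1})$. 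For the quantum postprocessing I would replace the classical gradient oracle by a quantum one: since high-temperature commuting Gibbs states satisfy exponential decay of correlations, the results of~\cite{brandao_finite_2019} let us prepare each $\sigma(\mu_t)$ with a circuit of depth $\tcO(1)$, and applying the shadow procedure already used for $e(\lambda)$ to these prepared copies estimates the full gradient in time $\tcO(m\epsilon^{-2})=\tcO(n\epsilon^{-2})$ per iteration. As the iteration count is again $\tcO(1)$ by the constant condition number, the overall postprocessing runs in $\tcO(n\epsilon^{-2})$ using only $\tcO(1)$-depth circuits.

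The main obstacle is not any individual step but securing the claim that $U/L=\cO(1)$ uniformly over $\mu\in B_{\ell_\infty}(0,1)$; this is precisely the content of Proposition~\ref{prop:lowerbound}, whose lower bound $\Omega(\beta^2 e^{-c\beta})$ is what keeps the condition number independent of system size and simultaneously drives the $\epsilon^{-2}$ sample scaling, the logarithmic iteration count, and the near-optimal $\tcO(n\epsilon^{-2})$ quantum postprocessing. Everything else amounts to careful bookkeeping of these order-one constants through the previously established results.
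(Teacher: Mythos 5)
Your proposal is correct and follows essentially the same route as the paper's own proof: establish $L=\Omega(1)$ via Proposition~\ref{prop:lowerbound} and $U=\cO(1)$ via Lemma~\ref{lem:covariancematrix_decay}, invoke the transportation cost inequality of Theorem~\ref{thm:modifiedmlsi}, feed everything into Theorems~\ref{thm:quick_dirty} and~\ref{thm:comcomplexitygeneral}, and handle the postprocessing classically via the cluster-expansion/tensor-network algorithms and quantumly via shallow Gibbs-state preparation plus shadows. The only cosmetic difference is the citation used for the $\tcO(1)$-depth Gibbs sampler (the paper's proof cites the rapid-mixing results of \cite{kastoryano2016quantum,capel2020modified}, while you cite \cite{brandao_finite_2019} as in the paper's main-text discussion), which does not affect correctness.
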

\begin{proof}
To obtain the claim on the sample complexity, note that for such systems $L=\Omega(1)$ by Prop.~\ref{prop:lowerbound} and they satisfy a transportation cost inequality by Thm.~\ref{thm:modifiedmlsi}. Moreover, we can learn the expectation of all $E_i$ up to an error $\epsilon$ and failure probability $\delta$ with $\cO(\log(n\delta^{-1})\epsilon^{-2})$ samples using shadows, as they all have constant support.  The claimed sample complexity then follows from Thm.~\ref{thm:quick_dirty}. The classical postprocessing result follows from~\cite{PhysRevLett.124.220601}. The postprocessing with a quantum computer follows from the results of~\cite{kastoryano2016quantum},~\cite{capel2020modified} combined with the fact that $L^{-1}U=\cO(1)$ by also invoking Lemma~\ref{lem:covariancematrix_decay}. Indeed,~\cite{kastoryano2016quantum,capel2020modified} asserts that we can approximately prepare any $\sigma(\mu)$ on a quantum computer with a circuit of depth $\tcO(1)$. Moreover, once again resorting to shadows we can estimate $e(\mu)$ with $\tcO(\epsilon^{-2})$ samples. We conclude that we can run each iteration of gradient descent in $\tcO(n\epsilon^{-2})$ time. 
As $L^{-1}U=\cO(1)$, Thm.~\ref{thm:comcomplexitygeneral} asserts that we converge after $\tcO(1)$ iterations, which yields the claim
\end{proof}
The theorem above nicely showcases how the joint use of transportation cost inequalities and strong convexity bounds to improve maximum entropy methods come together. Moreover, with access to a quantum computer, up to polylogarithmic overhead in system size factors, the computational complexity of learning the Gibbs state is comparable to reading out the results of measurements on the system. Together with the polylog sample complexity bounds that we obtain, this justifies our claiming that the result above almost gives the last word on learning such states.

\section{Comparison of sample complexity of previous methods}\label{sec:lowerbound}
Our work arguably introduces two technical innovations to the literature of learning and tomography of quantum states that underly our exponential speedups in sample complexity. The first is the observation that most observables of physical interest have a small Lipschitz constant, and, thus, it might be more motivated to look for good approximations in Wasserstein distance instead of trace distance. The second is that for states that satisfy a TC inequality, it suffices to obtain an estimate of the state that has a small relative entropy density with the target state to recover Lipschitz observables. And that finding such an estimate can be achieved from a few samples through a combination of maximum entropy and classical shadow methods.

We will now argue that these two innovations are indeed crucial to ensure our exponential speedups. First, we will show that the shadow protocol will yield bad estimates for the expectation value of local observables with high probability even for product states if the number of samples is not exponential in the locality of the underlying observables. This shows that exploiting the locality of the underlying states is crucial to obtaining a polynomial sample complexity in the locality of the observables. After that, we will show in Subsec.~\ref{sec:trace_dist_lower} that $\Omega(\sqrt{n}\epsilon^{-2})$ samples are necessary for any algorithm that can recover any Gibbs state on a regular lattice at constant temperature up to trace distance $\epsilon$. This will follow from the results of~\cite{Devroye2020} and showcase the need to focus on the Wasserstein distance instead of the trace distance.

\subsection{Lower bounds for sample complexity of classical shadows}\label{sec:failure_shadows}
One of the main advantages of our results compared with the classical shadows method of~\cite{Huang2020} is that whenever a TC inequality is available, we can learn all $k-$local observables with a number of samples that grows polynomially in $k$, whereas classical shadows require a number of samples that grows exponentially in $k$. However, the classical shadows framework does not assume any structure for the underlying state. Thus, it is natural to ask if this undesired exponential scaling of the shadows framework is due to this broader applicability. 

In this section, we will demonstrate that this is not the case even for one of the simplest imaginable classes of states, namely tensor products of Pauli eigenstates. We will show that if the number of samples is not exponential in the locality of the desired observables, there will always be a $k$-local observable whose estimate will be wrong with constant probability.

But before we show that, let us briefly recall how the shadow method works. To recover local observables on $n$ qubits, the methods of~\cite{Huang2020} proceed as follows. First, we sample a random unitary $U=\otimes_{i=0}^{n-1}U_i$, where each $U_i$ is an independent rotation into a Pauli basis. Then we proceed to measure the state of interest $\rho$ in the basis defined by $U$, obtaining a $n$-bit classical string $b_0b_1\ldots b_{n-1}$. The shadow corresponding to this measurement $\hat{\rho}$ is then defined to be given by
\begin{align*}
\hat{\rho}=\otimes_{i=1}^n (3U_i^\dagger \ketbra{b_i}U_i-\one).
\end{align*}
We then repeat this procedure $SM$ times for $S,M\in\N$, obtaining shadows $\{\hat{\rho}_{s,m}\}_{1\leq m\leq M,1\leq s\leq S}$. We then set our estimate of the expectation value of an observable $O$ to be given by
\begin{align}\label{equ:estimator_shadow}
\hat{O}(\{\hat{\rho}_{s,m}\})=\operatorname{median}\left\{S^{-1}\sum\limits_{s=1}^S\tr{\hat{\rho}_{s,m=1}O},S^{-1}\sum\limits_{s=1}^S\tr{\hat{\rho}_{s,m=2}O},\ldots,S^{-1}\sum\limits_{s=1}^S\tr{\hat{\rho}_{s,m=M}O}\right\}.
\end{align}
That is, we partition the set of shadow samples into $M$ subgroups, take the average on each of them and then take their median. The authors of~\cite{Huang2020} then proceed to show in Theorem 1 that if we take $SM=\cO(4^k\log(N\delta^{-1})\epsilon^{-2})$, then with probability at least $1-\delta$ the expectation value any given $N$ $k$-local observables of bounded operator norm will deviate by at most $\epsilon$ from the estimate given in Eq.~\eqref{equ:estimator_shadow}. We will now show that this exponential scaling in $k$ is unavoidable if we want to obtain nontrivial estimates with high probability. More precisely:
\begin{prop}
Let $\rho=\otimes_{i=1}^n\ketbra{\phi_i}$ be an unknown $n$ qubit state where each $\ket{\phi_i}$ is given by a Pauli eigenstate. For $k\geq \log_3(n)\log(n)$, if $SM\leq n$ then there is an observable $O$ of the form
\begin{align*}
O=n^{-1}\sum\limits_{i=1}^nO_i
\end{align*}
where each $O_i$ is $k$-local and such that:
\begin{align}
\tr{O\rho}=1,\quad \hat{O}(\{\hat{\rho}_{s,m}\})=0
\end{align}
with probability at least $1-e^{-1}$.
\end{prop}

\begin{proof}
We will let $O_i=\otimes_{j=i}^{i+k}P_j$, where we take addition modulo $n$ and let $P_i$ be the Pauli matrix such that $P_i\ket{\phi_i}=1$, where we assume w.l.o.g. that each Pauli eigenstate corresponds to a $+1$ eigenstate.  It is then clear that $\tr{O\rho}=1$.

Let us now analyse the performance of shadows. Note that as the random unitaries used to obtain each sample are random Pauli bases, we have that $\tr{O_i\hat{\rho}_{s,m}}=3^k$ if the unitary $U^{s,m}$  corresponds to the identity on qubits $i,i+1,\ldots,i+k$ and $0$ otherwise. This is because then the string we measure will be rotated to a Pauli basis different from that of $P_i$ on at least one of the qubits in that interval. Thus, as we pick the Pauli bases uniformly at random and there are three different possibilities for each one of them, we see that the probability that a shadow is different from $0$ on a given $O_i$ is $3^{-k}$. By a union bound, the probability that a shadow is different than $0$ on any of the $n$ $O_i$ is at most $n3^{-k}$. As the different shadows are independent, the probability that all of the $SM$ shadows return $0$ is at least $(1-n3^{-k})^{SM}$. As we picked $k\geq \log_3(n)\log(n)$ and $SM\leq n$, we have:
\begin{align*}
(1-n3^{-k})^{SM}\geq (1-{n}^{-1})^{n}\ge \frac{e^{-1}}{2}
\end{align*}
for $n$ large enough. Thus, as all shadows will have expectation value $0$, the median and means procedure will clearly also output $0$, which concludes the proof.
\end{proof}
Despite the fact the proof above used quite rough estimates and simple observables and states, it still gives some intuition as to why classical shadows require an exponential number of samples in the locality. We see that the probability that the shadows "look in the right direction" is exponentially small in the locality of the observables, in the sense that the overlap between the state and a random Pauli eigenstate is likely to be exponentially small. And whenever it does look in a direction in which the underlying state has a significant overlap with that basis it has to compensate that direction exponentially. Thus, if the number of samples is not exponential in the locality, it is unlikely that we will measure in a direction that has significant overlap with our state or there will be significant fluctuations due to the exponential rewarding of the "good" directions.

However, by combining shadows with a locality structure and the maximum entropy principle, as we do in this work, we can bypass the need to measure in random directions for observables with a large locality, bypassing this exponential scaling.

We also note that the authors of~\cite{Huang2020} already proved the optimality of their protocol by only considering product states in Section 8 of their supplemental material. The main difference between their proof and ours is that we focus on a Lipschitz observable, whereas they focus on observables that only depend on $k$ qubits.

\subsection{Lower bounds for recovery in trace distance}\label{sec:trace_dist_lower}
In the main text, we claimed that one of the reasons why we obtain an exponential speedup compared to usual many-body methods is that we focus on a good recovery in the Wasserstein distance instead of trace distance. Moreover, by combining the maximum entropy method with a TC inequality we are able to obtain good recovery guarantees from a constant \emph{relative entropy density}. That is, as long as two Gibbs states $\sigma(\lambda),\sigma(\mu)$ satisfy
\begin{align}\label{equ:relative_density}
D(\sigma(\lambda)\|\sigma(\mu))\leq \epsilon n,
\end{align}
for some $\epsilon>0$ we already obtain some nontrivial guarantees.

In this section, we will argue that focusing on the Wasserstein recovery instead of trace distance is essential to obtain nontrivial recovery guarantees from a number of samples scaling logarithmically with the system size. To achieve this, we will resort to results of~\cite[Theorem 1.3]{Devroye2020}:
\begin{prop}\label{prop:lower_trace}[Lower bound of sample complexity in trace distance]
Let $G=(V,E)$ be a graph on $n$ vertices and $m$ edges and for $\lambda\in\R^{15m}$, $\|\lambda\|_{\ell_\infty}\leq 1$ let $H(\lambda)$ be defined as
\begin{align*}
H(\lambda)=\sum\limits_{i\sim j}\sum_{l=1}^{15} \lambda_{i,j}^lH_{i,j}^l,
\end{align*}
where the $H_{i,j}^l$ correspond to some ordering of the nonidentity Pauli strings acting on sites $i,j$. Then for any  $\beta=\Omega(m^{-\frac{1}{2}})$, let $\hat{\sigma}(\lambda)$ be the estimate of $\sigma(\lambda)$ outputted by an algorithm with access to $s$ samples from a state $\sigma(\lambda)$. Then:
\begin{align}
\sup\limits_{\lambda\in B_\infty(0,1)}\|\sigma(\lambda)-\hat{\sigma}(\lambda)\|_{\operatorname{tr}}=\Omega\left(\min\left\{1,\sqrt{\frac{m}{s}}\right\}\right).
\end{align}
\end{prop}

\begin{proof}
This statement immediately follows from~\cite[Theorem 1.3]{Devroye2020}, which shows the analogous statement when restricted to classical Ising models. As our class of Hamiltonians includes those as a subset, any algorithm that could provide an estimate for this more general class also can find one for the classical instances. Moreover, in the proof of \cite[Theorem 1.3]{Devroye2020} the inverse temperature is absorbed into the coefficients of the Hamiltonian, which are assumed to have $2$-norm bounded by a constant independent of the system's size. This is easily seen to be satisfied by our conditions since $\beta=\Omega(m^{-\frac{1}{2}})$.
\end{proof}
The statement above implies in particular that any algorithm that finds an estimate that is $\epsilon$ close in trace distance for all $2$-local Gibbs state on a lattice and constant inverse temperature requires $\Omega(n\epsilon^{-2})$ samples. In contrast, we see that it is possible to obtain an estimate that is $\cO(\epsilon\sqrt{n})$ close in Wasserstein distance from $\cO(\epsilon^{-2}\log(n))$ samples of the Gibbs states, which is sufficient to already give nontrivial recovery guarantees for Lipschitz observables. Thus, we see from Prop.~\ref{prop:lower_trace} that resorting to the Wasserstein distance is essential to obtain recovery guarantees in the regime where the number of samples is logarithmic in the system's size.

Furthermore, it is interesting to note that the proof of~\cite{Devroye2020} is based on a set of Gibbs states of the form:
\begin{align}\label{equ:lower_Gibbs}
\delta\sum\limits_{i\sim j}s_{i,j}Z_iZ_j,
\end{align}
with $\delta=\Theta(m^{-\frac{1}{2}})$ and $s_{i,j}\in\{\pm1\}$. Their proof then proceeds by finding a large subset of Gibbs states of the form in Eq.~\eqref{equ:lower_Gibbs} which have a trace distance and relative entropy of constant order. The lower bound on the sample complexity then follows from standard information-theoretic arguments. We believe that this class of examples in the proof further illustrates why the trace distance is not necessarily the adequate distance measure when estimating the error on extensive observables. Indeed, for extensive, local observables the class of Gibbs states from the Hamiltonians in Eq.~\eqref{equ:lower_Gibbs} behaves like the maximally mixed state, as each local term converges to $0$ as the system size increases.

\end{document}